\documentclass[reqno,10pt,centertags]{amsart}
\usepackage{amssymb}
\usepackage{upref}
\usepackage{hyperref}

\newtheorem{theorem}{Theorem}[section]
\newtheorem{lemma}[theorem]{Lemma}

\newtheorem{hypothesis}[theorem]{Hypothesis}
\newtheorem{example}[theorem]{Example}
\theoremstyle{definition}

\newtheorem{remark}[theorem]{Remark}

\allowdisplaybreaks

\newcommand{\bbN}{{\mathbb{N}}}
\newcommand{\bbR}{{\mathbb{R}}}

\newcommand{\bbZ}{{\mathbb{Z}}}
\newcommand{\bbC}{{\mathbb{C}}}

\newcommand{\calA}{{\mathcal A}}
\newcommand{\calB}{{\mathcal B}}
\newcommand{\calC}{{\mathcal C}}
\newcommand{\calD}{{\mathcal D}}
\newcommand{\calE}{{\mathcal E}}
\newcommand{\calF}{{\mathcal F}}
\newcommand{\calG}{{\mathcal G}}
\newcommand{\calH}{{\mathcal H}}
\newcommand{\calI}{{\mathcal I}}

\newcommand{\calK}{{\mathcal K}}
\newcommand{\calL}{{\mathcal L}}
\newcommand{\calM}{{\mathcal M}}

\newcommand{\calS}{{\mathcal S}}



\newcommand{\N}{p}
\newcommand{\f}{\frac}
\newcommand{\lb}{\label}
\newcommand{\no}{\nonumber}
\newcommand{\be}{\begin{equation}}
\newcommand{\ee}{\end{equation}}
\newcommand{\bea}{\begin{eqnarray}}
\newcommand{\eea}{\end{eqnarray}}
\newcommand{\ul}{\underline}
\newcommand{\ol}{\overline}
\newcommand{\ti}{\widetilde}

\renewcommand{\Im}{\text{\rm Im}}
\newcommand{\deven}{\delta_{\mathrm{even}}}
\newcommand{\dodd}{\delta_{\mathrm{odd}}}

\DeclareMathOperator{\sAL}{s-AL}
\DeclareMathOperator{\shAL}{s-\hatt {AL}}
\DeclareMathOperator{\AL}{AL}
\DeclareMathOperator{\hAL}{\hatt{AL}}

\newcommand{\humu}{{\underline{\hat{\mu} }}}
\newcommand{\hunu}{{\underline{\hat{\nu}}}}
\newcommand{\hmu}{{\hat{\mu} }}
\newcommand{\hnu}{{\hat{\nu}}}

\newcommand{\ual}{{\underline{\alpha}}}

\newcommand{\Div}{\operatorname{Div}}

\newcommand{\amap}{\ul{\alpha}_{Q_0}}

\newcommand{\Oh}{O}

\newcommand{\dott}{\,\cdot\,}
\newcommand{\hatt}{\widehat}  
\newcommand{\Pinfp}{{P_{\infty_+}}}
\newcommand{\Pinfm}{{P_{\infty_-}}}
\newcommand{\Pzp}{{P_{0,+}}}
\newcommand{\Pzm}{{P_{0,-}}}
\newcommand{\Pinfpm}{{P_{\infty_\pm}}}
\newcommand{\Pzpm}{{P_{0,\pm}}}

\DeclareMathOperator{\sym}{Sym}

\newcommand{\symq}{{\sym^{p} (\calK_p)}}

\newcommand{\pgam}{\Gamma}


\allowdisplaybreaks
\numberwithin{equation}{section}



\begin{document}

\title[The Ablowitz--Ladik Hierarchy Initial Value Problem]{
The algebro-geometric initial value problem for the Ablowitz--Ladik hierarchy}

\author[F.\ Gesztesy]{Fritz Gesztesy}
\address{Department of Mathematics,
University of Missouri,
Columbia, MO 65211, USA}
\email{\href{mailto:gesztesyf@missouri.edu}{gesztesyf@missouri.edu}}
\urladdr{\href{http://www.math.missouri.edu/personnel/faculty/gesztesyf.html}
{http://www.math.missouri.edu/personnel/faculty/gesztesyf.html}}

\author[H.\ Holden]{Helge Holden}
\address{Department of Mathematical Sciences,
Norwegian University of
Science and Technology, NO--7491 Trondheim, Norway}
\email{\href{mailto:holden@math.ntnu.no}{holden@math.ntnu.no}}
\urladdr{\href{http://www.math.ntnu.no/\~{}holden/}{http://www.math.ntnu.no/\~{}holden/}}

\author[J. Michor]{Johanna Michor}
\address{Faculty of Mathematics\\
University of Vienna\\
Nordbergstrasse 15\\ 1090 Wien\\ Austria\\ and International Erwin Schr\"odinger
Institute for Mathematical Physics\\ Boltzmanngasse 9\\ 1090 Wien\\ Austria}
\email{\href{mailto:Johanna.Michor@univie.ac.at}{Johanna.Michor@univie.ac.at}}
\urladdr{\href{http://www.mat.univie.ac.at/~jmichor/}{http://www.mat.univie.ac.at/\~{}jmichor/}}

\author[G. Teschl]{Gerald Teschl}
\address{Faculty of Mathematics\\
University of Vienna\\
Nordbergstrasse 15\\ 1090 Wien\\ Austria\\ and International Erwin Schr\"odinger
Institute for Mathematical Physics\\ Boltzmanngasse 9\\ 1090 Wien\\ Austria}
\email{\href{mailto:Gerald.Teschl@univie.ac.at}{Gerald.Teschl@univie.ac.at}}
\urladdr{\href{http://www.mat.univie.ac.at/~gerald/}{http://www.mat.univie.ac.at/\~{}gerald/}}
 
\date{\today}
\subjclass[2000]{Primary 37K10, 37K20, 47B36; Secondary 35Q58, 37K60.}
\keywords{Ablowitz--Ladik hierarchy, complex-valued solutions, initial value
problem.}
\thanks{Research supported in part by the Research Council of Norway,
the US National Science Foundation under Grant No.\ DMS-0405526, and
the Austrian Science Fund (FWF) under Grants No.\ Y330, J2655}
\thanks{Discrete Contin. Dyn. Syst. {\bf 26:1}, 151--196 (2010)}

\dedicatory{Dedicated with great pleasure to Percy Deift on the
occasion of his 60th birthday}

\begin{abstract}
We discuss the algebro-geometric initial value problem for the Ablowitz--Ladik hierarchy
with complex-valued initial data and prove unique solvability globally in
time for a set of initial (Dirichlet divisor) data of full measure. To this effect we develop a new algorithm for constructing stationary complex-valued algebro-geometric solutions of the Ablowitz--Ladik hierarchy, which is of independent interest as it solves the inverse 
algebro-geometric spectral problem for general (non-unitary) Ablowitz--Ladik Lax operators, starting from a suitably chosen set of initial divisors of full measure. Combined with an appropriate first-order system of differential equations with respect to time (a substitute for the well-known Dubrovin-type equations), this yields the construction of global algebro-geometric solutions of the time-dependent Ablowitz--Ladik hierarchy. 

The treatment of general (non-unitary) Lax operators associated with general coefficients for the Ablowitz--Ladik hierarchy poses a variety of difficulties that, to the best of our knowledge, are successfully overcome here for the first time. Our approach is not confined to the Ablowitz--Ladik hierarchy but applies generally to $(1+1)$-dimensional completely integrable soliton equations of differential-difference type. 
\end{abstract}

\maketitle

\section{Introduction} \lb{s1} 

The principal aim of this paper is an explicit construction of unique global solutions of the algebro-geometric initial value problem for the Ablowitz--Ladik hierarchy for a general class of initial data. However, to put this circle of ideas into a proper perspective, we first very briefly recall the origins of this subject: In the mid-seventies, Ablowitz and Ladik, in a series of papers
\cite{AblowitzLadik:1975}--\cite{AblowitzLadik2:1976} (see also
\cite{Ablowitz:1977}, \cite[Sect.\ 3.2.2]{AblowitzClarkson:1991},
\cite[Ch.\ 3]{AblowitzPrinariTrubatch:2004}), used inverse scattering methods to analyze certain integrable differential-difference systems. One of their integrable
variants of such systems included a discretization of the celebrated AKNS-ZS system, the pair of coupled nonlinear differential-difference equations,
\begin{align}
\begin{split}
-i\alpha_t - (1 - \alpha \beta) (\alpha^- +\alpha^+) + 2\alpha &=0, \lb{AL1.12} \\
-i\beta_t + (1 - \alpha\beta)(\beta^- + \beta^+) - 2\beta &=0   
\end{split}
\end{align}with $\alpha=\alpha(n,t)$, $\beta=\beta(n,t)$, $(n,t)\in\bbZ\times\bbR$. 
Here we used the notation $f^{\pm}(n) = f(n\pm 1)$, $n\in\bbZ$, for complex-valued 
sequences $f=\{f(n)\}_{n\in\bbZ}$. In particular, Ablowitz and Ladik 
\cite{AblowitzLadik:1976} (see also \cite[Ch.\ 3]{AblowitzPrinariTrubatch:2004}) showed that in the focusing case, where $\beta = -\ol\alpha$, and in the defocusing case, where $\beta = \ol\alpha$, \eqref{AL1.12} yields the discrete analog of the nonlinear Schr\"odinger equation
\begin{equation}
-i\alpha_t - (1 \pm |\alpha|^2)(\alpha^- + \alpha^+) + 2 \alpha = 0. \lb{AL1.13}
\end{equation}
Since then there has been an enormous activity in this area and we refer, for instance, to \cite[Ch.\ 3]{AblowitzPrinariTrubatch:2004},  
\cite{GeronimoGesztesyHolden:2004}, \cite{GesztesyHoldenMichorTeschl:2007a}, 
\cite{GesztesyHoldenMichorTeschl:2007}, \cite{GesztesyHoldenMichorTeschl:2007b},  
\cite{KillipNenciu:2006}, \cite{Li:2005}, \cite{MillerErcolaniKricheverLevermore:1995}, 
\cite{Nenciu:2005a}, \cite{Nenciu:2006}, \cite{Schilling:1989}, \cite{Suris:2003} and the extensive literature cited therein, for developments leading up to current research in this particular area of completely integrable differential-difference systems. Particularly relevant to this paper are algebro-geometric (and periodic) solutions of the AL system \eqref{AL1.12} and its associated hierarchy of integrable equations. The first systematic and detailed treatment of algebro-geometric solutions of the AL system \eqref{AL1.12} was performed by Miller, Ercolani, Krichever, and Levermore 
\cite{MillerErcolaniKricheverLevermore:1995} (see also \cite{AhmadChowdhury:1987}, 
\cite{AhmadChowdhury:1987a}, \cite{BogolyubovPrikarpatskiiSamoilenko:1981},  
\cite{BogolyubovPrikarpatskii:1982}, \cite{Miller:1995}, \cite{Vaninsky:2001}). 
Algebro-geometric solutions of the AL hierarchy were discussed in great detail in 
\cite{GesztesyHoldenMichorTeschl:2007} (see also \cite{GengDaiCao:2003}, 
\cite{GengDaiZhu:2007}, \cite{Vekslerchik:1999}). The initial value problem for the half-infinite discrete linear Schr\"odinger equation and the Schur flow were discussed by Common 
\cite{Common:1992} (see also \cite{CommonHafez:1990}) using a continued fraction approach. The corresponding nonabelian cases on a finite interval were studied by 
Gekhtman \cite{Gekhtman:1993}. In addition to these developments within integrable systems and their applications to fields such as nonlinear optics, the study of AL systems recently gained considerable momentum due to its connections with the theory of orthogonal polynomials. Especially, the particular defocusing case $\beta = \ol \alpha$ and the associated CMV matrices and orthogonal polynomials on the unit circle attracted great interest. In this context we refer the interested reader to the two-volume treatise by Simon \cite{Simon:2005} (see also \cite{Simon:2006}) and the survey by Deift \cite{Deift:2007} and the detailed reference lists therein. 

Returning to the principal scope of this paper, we intend to describe a solution of the following problem: Given $\ul p=(p_-,p_+)\in\bbN_0^2\setminus\{(0,0)\}$, $\ul r\in\bbN_0^2$, assume $\alpha^{(0)}, \beta^{(0)}$ to be solutions of the $\ul p$th stationary Ablowitz--Ladik system $\sAL_{\ul p}(\alpha,\beta)= 0$ associated with a prescribed hyperelliptic curve $\calK_p$ of genus $p=p_- + p_+ -1$
(with nonsingular affine part).  We want to construct a unique global solution
$\alpha=\alpha(t_{\ul r}), \beta=\beta(t_{\ul r})$ of the $\ul r$th AL flow $\AL_{\ul r}(\alpha,\beta)=0$ with
$\alpha(t_{0,\ul r})=\alpha^{(0)}, \beta(t_{0,\ul r})=\beta^{(0)}$ for some $t_{0,\ul r}\in\bbR$. Thus, we seek the unique global solution of the initial value problem
\begin{align}
\begin{split}
&\AL_{\ul r}(\alpha,\beta)=0, \label{1.1} \\
&(\alpha,\beta)\big|_{t_{\ul r}=t_{0,\ul r}}=\big(\alpha^{(0)},\beta^{(0)}\big),
\end{split} \\
&\sAL_{\ul p}\big(\alpha^{(0)},\beta^{(0)}\big)=0, \label{1.2}
\end{align}
where $\alpha=\alpha(n,t_{\ul r}), \beta=\beta(n,t_{\ul r})$ satisfy the conditions in 
\eqref{AL2.01a}. 

Given the particularly familiar case of real-valued algebro-geometric solutions of the Toda hierarchy (see, e.g., \cite{BullaGesztesyHoldenTeschl:1997}, 
\cite[Sect.\ 1.3]{GesztesyHolden:2005}, \cite[Sect.\ 8.3]{Teschl:2000} and the extensive literature cited therein), the actual solution of this algebro-geometric initial value problem, naively, might consist of the following two-step procedure:\footnote{We freely use the notation of divisors of degree $p$ as introduced in Appendix \ref{AL.sA} (see also the beginning of Section \ref{s3}).}

$(i)$ An algorithm that constructs admissible (cf.\ Section \ref{ALSs4}) nonspecial divisors $\calD_{\humu(n)}\in \sym^{p}\calK_p$ for all $n\in\bbZ$, starting from a nonspecial initial  Dirichlet divisor $\calD_{\humu(n_0)}\in \sym^{p}\calK_p$.  ``Trace formulas'' of the type \eqref{ALtr1} and \eqref{ALtr2} (the latter requires prior construction of the Neumann divisor $\calD_{\hunu}$ from the Dirichlet divisor $\calD_{\humu}$, though, cf.\ \eqref{ALS4.2a}) should then construct  the stationary solutions 
$\alpha^{(0)}, \beta^{(0)}$ of $\sAL_{\ul p}(\alpha,\beta)= 0$.

$(ii)$ The first-order Dubrovin-type system of differential equations \eqref{AL4.69a}, augmented by the initial divisor $\calD_{\humu(n_0, t_{0,\ul r})}=\calD_{\humu(n_0)}$ 
(cf.\ step $(i)$) together with the analogous ``trace formulas'' \eqref{ALtr1}, \eqref{ALtr2} in the time-dependent context should then yield unique global solutions 
$\alpha=\alpha(t_{\ul r}), \beta=\beta(t_{\ul r})$ of the 
$\ul r$th AL flow $\AL_{\ul r}(\alpha,\beta)=0$ satisfying 
$\alpha(t_{0,\ul r})=\alpha^{(0)}, \beta(t_{0,\ul r})=\beta^{(0)}$.

However, this approach can be expected to work only if the Dirichlet divisors 
$\calD_{\humu(n,t_{\ul r})}\in \sym^{p}\calK_p$, 
$\humu(n,t_{\ul r})=(\hat\mu_1(n,t_{\ul r}),\dots,\hat\mu_{p}(n,t_{\ul r}))$, 
yield pairwise distinct Dirichlet eigenvalues $\mu_j(n,t_{\ul r})$, $j=1,\dots,p$,  for fixed 
$(n,t_{\ul r})\in\bbZ\times\bbR$, such that formula \eqref{AL4.69a} for the 
time-derivatives $\mu_{j,t_{\ul r}}$, $j=1,\dots,p$, is well-defined. Analogous considerations apply to the Neumann divisors 
$\calD_{\hunu}\in \sym^{p}\calK_p$.

Unfortunately, this scenario of pairwise distinct Dirichlet eigenvalues is not realistic and ``collisions'' between them can occur at certain values of 
$(n,t_{\ul r})\in\bbZ\times\bbR$. Thus, the stationary algorithm in step $(i)$ as well as the Dubrovin-type first-order system of differential equations \eqref{AL4.69a} in step $(ii)$ above breaks down at such values of $(n,t_{\ul r})$. A priori, one has no control over such collisions, especially, it is not possible to identify initial conditions 
$\calD_{\humu(n_0, t_{0,\ul r})}$ at some $(n_0,t_{0,\ul r})\in\bbZ\times\bbR$, which avoid collisions for all $(n,t_{\ul r})\in\bbZ\times\bbR$. We solve this problem head on by explicitly permitting collisions in the stationary as well as the time-dependent context from the outset. In the stationary context, we introduce an appropriate algorithm alluded to in step $(i)$ by employing a general interpolation formalism 
(cf.\ Appendix \ref{AL.sB}) for polynomials, going beyond the usual Lagrange interpolation formulas. In the time-dependent context we replace the first-order system of Dubrovin-type equations \eqref{AL4.69a}, augmented with the initial divisor 
$\calD_{\humu(n_0, t_{0,\ul r})}$, by a different first-order system of differential equations \eqref{ALT6.22}, \eqref{ALT6.37}, and \eqref{ALT6.37a} with initial conditions 
\eqref{ALT6.33a} which focuses on symmetric functions of 
$\mu_1(n,t_{\ul r}),\dots,\mu_{p}(n,t_{\ul r})$ rather than individual Dirichlet eigenvalues 
$\mu_j(n,t_{\ul r})$, $j=1,\dots,p$. In this manner it will be shown that collisions of Dirichlet eigenvalues no longer pose a problem.  

In addition, there is an additional complication: In general, it cannot be guaranteed that 
$\mu_j(n,t_{\ul r})$ and $\nu_j(n,t_{\ul r})$, $j=1,\dots,p$, stay finite and nonzero for all 
$(n,t_{\ul r})\in\bbZ\times\bbR$. We solve this particular problem in the stationary as well as the time-dependent case by properly restricting the initial Dirichlet and Neumann divisors 
$\calD_{\humu(n_0, t_{0,\ul r})}, \calD_{\hunu(n_0, t_{0,\ul r})}\in \sym^{p}\calK_p$ to a dense set of full measure.

Summing up, we offer a new algorithm to solve the inverse algebro-geometric spectral problem for general Ablowitz--Ladik Lax operators, starting from a properly chosen dense set of initial divisors of full measure. Combined with an appropriate first-order system of differential equations with respect to time (a substitute for the Dubrovin-type equations), this yields the construction of global algebro-geometric solutions of the time-dependent Ablowitz--Ladik hierarchy. 

We emphasize that the approach described in this paper is not limited to the Ablowitz--Ladik  hierarchy but applies universally to constructing algebro-geometric solutions of 
$(1+1)$-dimensional integrable soliton equations. In particular, it applies to the Toda lattice hierarchy as discussed in \cite{GesztesyHoldenTeschl:2007}. Moreover, the principal idea of replacing Dubrovin-type equations by a first-order system of the type \eqref{ALT6.22}, \eqref{ALT6.37}, and \eqref{ALT6.37a} is also relevant in the context of general (non-self-adjoint) Lax operators for the continuous models in 
$(1+1)$-dimensions. 
In particular, the models studied in detail in \cite{GesztesyHolden:2003} can be revisited from this point of view. (However, the fact that the set in \eqref{ALT6.78} is of measure zero relies on the fact that $n$ varies in the countable set $\bbZ$ and hence is not applicable to continuous models in $1+1$-dimensions.) We also note that while the periodic case with complex-valued $\alpha, \beta$ is of course included in our analysis, we throughout consider the more general algebro-geometric case (in which $\alpha, \beta$ need not be quasi-periodic).  

Finally we briefly describe the content of each section. Section \ref{s2} presents a quick summary of the basics of the Ablowitz--Ladik hierarchy, its recursive construction, Lax pairs, and zero-curvature equations. The stationary algebro-geometric Ablowitz--Ladik hierarchy solutions, the underlying hyperelliptic curve, trace formulas, etc., are the subject of Section \ref{s3}. A new algorithm solving the algebro-geometric inverse spectral problem for general Ablowitz--Ladik Lax operators is presented in Section \ref{ALSs4}. In Section \ref{s5} we briefly summarize the properties of algebro-geometric time-dependent solutions of the Ablowitz--Ladik hierarchy and formulate the algebro-geometric initial value problem. Uniqueness and existence of global solutions of the algebro-geometric initial value problem as well as their explicit construction are then presented in our final and principal Section \ref{ALTs6}. Appendix \ref{AL.sA} reviews the basics of hyperelliptic Riemann surfaces of the Ablowitz--Ladik-type and sets the stage of much of the notation used in this paper. Finally, various interpolation formulas of fundamental importance to our stationary inverse spectral algorithm developed in Section \ref{ALSs4} are summarized in Appendix \ref{AL.sB}. These appendices support our intention to make this paper reasonably self-contained.

\section{The Ablowitz--Ladik Hierarchy in a Nutshell} 
\label{s2}

We briefly review the recursive construction of the Ablowitz--Ladik hierarchy
and zero-curvature equations following   
\cite{GesztesyHoldenMichorTeschl:2007a}  and 
\cite{GesztesyHoldenMichorTeschl:2007b}.

Throughout this section we suppose the following hypothesis. 

\begin{hypothesis} \lb{hAL2.1} 
In the stationary case we assume that $\alpha, \beta$ satisfy
\begin{equation}
\alpha, \beta\in \bbC^{\bbZ}, \quad  \alpha(n)\beta(n)\notin \{0,1\}, 
\;  n\in\bbZ.   \lb{AL2.01}
\end{equation}
In the time-dependent case we assume that $\alpha, \beta$ satisfy
\begin{align}
\begin{split}
& \alpha(\dott,t), \beta(\dott,t) \in \bbC^{\bbZ}, \; t\in\bbR, \quad 
\alpha(n,\dott), \beta(n,\dott)\in C^1(\bbR), \; n\in\bbZ,   \lb{AL2.01a}  \\
& \alpha(n,t)\beta(n,t)\notin \{0,1\}, \; (n,t)\in\bbZ\times \bbR.
\end{split}
\end{align}
\end{hypothesis}

Here $\bbC^{\bbZ}$ denotes the set of complex-valued sequences indexed by $\bbZ$. For a discussion of assumptions \eqref{AL2.01} and \eqref{AL2.01a} we refer to 
Remark \ref{rAL3.4}. 

We denote by $S^\pm$ the shift operators acting on complex-valued sequences 
$f=\{f(n)\}_{n\in\bbZ} \in\bbC^{\bbZ}$ according to
\begin{equation}
(S^\pm f)(n)=f(n\pm1), \quad n\in\bbZ. \lb{AL2.02}
\end{equation}
Moreover, we will frequently use the notation
\begin{equation}
f^\pm = S^{\pm} f, \quad f\in\bbC^{\bbZ}. 
\end{equation}

Next, we define sequences $\{f_{\ell,\pm}\}_{\ell\in \bbN_0}$, $\{g_{\ell,\pm}\}_{\ell\in \bbN_0}$, and $\{h_{\ell,\pm}\}_{\ell\in \bbN_0}$ recursively by
\begin{align} \label{AL0+}
g_{0,+} &= \tfrac12 c_{0,+}, \quad f_{0,+} = - c_{0,+}\alpha^+, 
\quad h_{0,+} = c_{0,+}\beta, \\ \label{ALg_l+}
g_{\ell+1,+} - g_{\ell+1,+}^- &= \alpha h_{\ell,+}^- + \beta f_{\ell,+}, \quad \ell\in \bbN_0,\\ \label{ALf_l+}
f_{\ell+1,+}^- &= f_{\ell,+} - \alpha (g_{\ell+1,+} + g_{\ell+1,+}^-), \quad \ell\in \bbN_0, \\  \label{ALh_l+}
h_{\ell+1,+} &= h_{\ell,+}^- + \beta (g_{\ell+1,+} + g_{\ell+1,+}^-), \quad \ell\in \bbN_0,  
\end{align}
and
\begin{align} \label{AL0-}
g_{0,-} &= \tfrac12 c_{0,-}, \quad f_{0,-} = c_{0,-}\alpha, 
\quad h_{0,-} = - c_{0,-}\beta^+, \\ \label{ALg_l-}
g_{\ell+1,-} - g_{\ell+1,-}^- &= \alpha h_{\ell,-} + \beta f_{\ell,-}^-, \quad \ell\in \bbN_0,\\ \label{ALf_l-}
f_{\ell+1,-} &= f_{\ell,-}^- + \alpha (g_{\ell+1,-} + g_{\ell+1,-}^-), \quad \ell\in \bbN_0, \\ \label{ALh_l-}
h_{\ell+1,-}^- &= h_{\ell,-} - \beta (g_{\ell+1,-} + g_{\ell+1,-}^-), \quad \ell\in \bbN_0.
\end{align}
Here $c_{0,\pm}\in\bbC$ are given constants. For later use we also introduce
\begin{equation}\lb{ALminus}
f_{-1,\pm}= h_{-1,\pm}=0.
\end{equation}

\begin{remark}\lb{rAL2.2}
The sequences $\{f_{\ell,+}\}_{\ell\in \bbN_0}$, 
$\{g_{\ell,+}\}_{\ell\in \bbN_0}$, and
$\{h_{\ell,+}\}_{\ell\in \bbN_0}$ can be computed recursively as follows: 
Assume that $f_{\ell,+}$,
$g_{\ell,+}$, and $h_{\ell,+}$ are known.  Equation \eqref{ALg_l+} is a 
first-order difference equation in $g_{\ell+1,+}$ that can be solved directly
and yields a local lattice function that is determined up to a new constant denoted
by $c_{\ell+1,+}\in\bbC$. Relations \eqref{ALf_l+} and \eqref{ALh_l+}
then determine $f_{\ell+1,+}$ and $h_{\ell+1,+}$, etc.  The sequences 
$\{f_{\ell,-}\}_{\ell\in \bbN_0}$, $\{g_{\ell,-}\}_{\ell\in \bbN_0}$, and 
$\{h_{\ell,-}\}_{\ell\in \bbN_0}$ are determined similarly.
\end{remark}

Upon setting 
\begin{equation}
\gamma = 1 - \alpha \beta, \lb{ALgamma}
\end{equation}
one explicitly obtains 
\begin{align}
\begin{split}
f_{0,+} &= c_{0,+}(-\alpha^+), \quad 
f_{1,+} = c_{0,+}\big(- \gamma^+ \alpha^{++} + (\alpha^+)^2 \beta\big) 
+ c_{1,+} (-\alpha^+), \\
g_{0,+} &= \tfrac{1}{2}c_{0,+},  \quad 
 g_{1,+} = c_{0,+}(-\alpha^+ \beta) + \tfrac{1}{2}c_{1,+}, \\
h_{0,+} &= c_{0,+}\beta, \quad 
 h_{1,+} = c_{0,+}\big(\gamma \beta^- - \alpha^+ \beta^2\big) 
+ c_{1,+} \beta, \\
f_{0,-} &= c_{0,-}\alpha, \quad  
f_{1,-} = c_{0,-}\big(\gamma \alpha^- - \alpha^2 \beta^+\big) + c_{1,-} \alpha, \\
g_{0,-} &= \tfrac{1}{2}c_{0,-}, \quad  
g_{1,-} = c_{0,-}(-\alpha \beta^+) + \tfrac{1}{2}c_{1,-}, \\
h_{0,-} &= c_{0,-}(-\beta^+), \quad  
h_{1,-} = c_{0,-}\big(- \gamma^+ \beta^{++} 
+ \alpha (\beta^+)^2 \big) + c_{1,-} (- \beta^+), \, \text{ etc.}
\end{split}
\end{align}
Here $\{c_{\ell,\pm}\}_{\ell \in \bbN}$ denote summation constants
which naturally arise when solving the difference equations for 
$g_{\ell, \pm}$ in \eqref{ALg_l+}, \eqref{ALg_l-}.  
Subsequently, it will also be useful to work with the corresponding homogeneous coefficients $\hat f_{\ell, \pm}$,
$\hat g_{\ell, \pm}$, and $\hat h_{\ell, \pm}$, defined by the vanishing of all summation constants $c_{k,\pm}$ for $k=1,\dots,\ell$, and choosing $c_{0,\pm}=1$,
\begin{align}
& \hat f_{0,+}=-\alpha^+, \quad \hat f_{0,-}=\alpha, \quad 
 \hat f_{\ell,\pm}=f_{\ell,\pm}|_{c_{0,\pm}=1, \, c_{j,\pm}=0, j=1,\dots,\ell},  \quad \ell\in\bbN, 
 \lb{AL2.04a} \\
& \hat g_{0,\pm}=\tfrac12, \quad 
\hat g_{\ell,\pm}=g_{\ell,\pm}|_{c_{0,\pm}=1, \, c_{j,\pm}=0, j=1,\dots,\ell}, 
\quad \ell\in\bbN,  \lb{AL2.04b} \\
& \hat h_{0,+}=\beta, \quad \hat h_{0,-}=-\beta^+,  \quad 
\hat h_{\ell,\pm}=h_{\ell,\pm}|_{c_{0,\pm}=1, \, c_{j,\pm}=0, j=1,\dots,\ell}, 
\quad \ell\in\bbN.  \lb{AL2.04c}
\end{align}
By induction one infers that
\begin{equation} \label{ALhat f}
f_{\ell, \pm} = \sum_{k=0}^\ell c_{\ell-k, \pm} \hat f_{k, \pm}, \quad
g_{\ell, \pm} = \sum_{k=0}^\ell c_{\ell-k, \pm} \hat g_{k, \pm}, \quad 
h_{\ell, \pm} = \sum_{k=0}^\ell c_{\ell-k, \pm} \hat h_{k, \pm}.  
\end{equation} 

Next we define the $2\times 2$ zero-curvature matrices
\begin{equation}
U(z) = \begin{pmatrix} z & \alpha  \\ z \beta & 1\\ \end{pmatrix}   \lb{AL2.03}
\end{equation}
and 
\begin{equation} \lb{AL_v}
V_{\ul p}(z) = i  \begin{pmatrix}
G_{\ul p}^-(z) & - F_{\ul p}^-(z)     \\[1.5mm]
H_{\ul p}^-(z) & - K_{\ul p}^-(z)  \\
\end{pmatrix},  \quad \ul p=(p_-,p_+) \in \bbN_0^2,
\end{equation}
for appropriate Laurent polynomials $F_{\ul p}(z)$, $G_{\ul p}(z)$, $H_{\ul p}(z)$, and $K_{\ul p}(z)$ in the  
spectral parameter $z\in \bbC\setminus\{0\}$ to be determined shortly. By postulating the stationary zero-curvature relation,  
\begin{equation}   \lb{ALstatzc}
0=U V_{\ul p} - V_{\ul p}^+ U,
\end{equation}
one concludes that \eqref{ALstatzc} is equivalent to the following relations 
 \begin{align} \label{AL1,1}
z (G_{\ul p}^- - G_{\ul p}) + z \beta F_{\ul p} + \alpha H_{\ul p}^- &= 0,\\ \label{AL2,2}
z \beta F_{\ul p}^- + \alpha H_{\ul p} - K_{\ul p} + K_{\ul p}^- &= 0,\\
\label{AL1,2}
 - F_{\ul p} + z F_{\ul p}^- + \alpha (G_{\ul p} + K_{\ul p}^-) &= 0,\\ \label{AL2,1}
 z \beta (G_{\ul p}^- + K_{\ul p}) - z H_{\ul p} + H_{\ul p}^- &= 0.
\end{align}
In order to make the connection between the zero-curvature formalism and the recursion relations 
\eqref{AL0+}--\eqref{ALh_l-}, we now define Laurent polynomials $F_{\ul p}$, $G_{\ul p}$, $H_{\ul p}$, and $K_{\ul p}$ by\footnote{In this paper, a sum is interpreted as zero whenever the upper limit in the sum is strictly less than its lower limit.}
\begin{align}
F_{\ul p}(z) &= \sum_{\ell=1}^{p_-} f_{p_- -\ell,-} z^{-\ell} 
+ \sum_{\ell=0}^{p_+ -1} f_{p_+ -1-\ell,+}z^\ell,  
\label{ALF_p} \\ 
G_{\ul p}(z) &= \sum_{\ell=1}^{p_-} g_{p_- -\ell,-}z^{-\ell}  
+ \sum_{\ell=0}^{p_+} g_{p_+ -\ell,+}z^\ell,  
 \label{ALG_p}  \\ 
H_{\ul p}(z) &= \sum_{\ell=0}^{p_- -1} h_{p_- -1-\ell,-}z^{-\ell}  
+ \sum_{\ell=1}^{p_+} h_{p_+ -\ell,+}z^\ell, 
 \label{ALH_p}  \\
K_{\ul p}(z) &= \sum_{\ell=0}^{p_-} g_{p_- -\ell,-}z^{-\ell}  
+  \sum_{\ell=1}^{p_+} g_{p_+ -\ell,+}z^\ell 
= G_{\ul p}(z)+g_{p_-,-}-g_{p_+,+}.   \label{ALK_p}
\end{align}

The stationary zero-curvature relation \eqref{ALstatzc}, $0=U V_{\ul p} - V_{\ul p}^+ U$, is then equivalent to 
\begin{align}
 -\alpha(g_{p_+,+} + g_{p_-,-}^-) + f_{p_+ -1,+} - f_{p_- -1,-}^-&=0,  \lb{AL2.50}\\ 
 \beta(g_{p_+,+}^- + g_{p_-,-}) + h_{p_+ -1,+}^- - h_{p_- -1,-} &=0.  \lb{AL2.51}
\end{align}
Thus, varying $p_\pm \in \bbN_0$,  equations \eqref{AL2.50} and \eqref{AL2.51} give rise to the stationary Ablowitz--Ladik (AL) hierarchy which we introduce as follows
\begin{align}\lb{ALstat}
\begin{split}
& \sAL_{\ul p}(\alpha, \beta) = \begin{pmatrix} 
- \alpha(g_{p_+,+} + g_{p_-,-}^-) + f_{p_+ -1,+} - f_{p_- -1,-}^-\\  
\beta(g_{p_+,+}^- + g_{p_-,-}) + h_{p_+ -1,+}^- - h_{p_- -1,-}  \end{pmatrix}=0, \\  
& \hspace*{6.85cm}  \ul p=(p_-,p_+)\in\bbN_0^2. 
\end{split}
\end{align}
Explicitly (recalling $\gamma=1-\alpha\beta$ and taking $p_-=p_+$ for simplicity), 
\begin{align} \no
\sAL_{(0,0)} (\alpha, \beta) &=  \begin{pmatrix}  -c_{(0,0)} \alpha\\ 
c_{(0,0)}\beta\end{pmatrix} 
=0,\\ \no
\sAL_{(1,1)} (\alpha, \beta) &=  \begin{pmatrix} -\gamma (c_{0,-}\alpha^- + c_{0,+}\alpha^+) - c_{(1,1)} \alpha \\
 \gamma (c_{0,+}\beta^- + c_{0,-}\beta^+) +
c_{(1,1)} \beta\end{pmatrix}=0,\\ \no
\sAL_{(2,2)} (\alpha, \beta) &=  \begin{pmatrix}\begin{matrix}
-\gamma \big(c_{0,+}\alpha^{++} \gamma^+ + c_{0,-}\alpha^{--} \gamma^-
- \alpha (c_{0,+}\alpha^+\beta^- + c_{0,-}\alpha^-\beta^+)\\
- \beta (c_{0,-}(\alpha^-)^2 + c_{0,+}(\alpha^+)^2)\big)\end{matrix}\\[3mm] 
\begin{matrix}
 \gamma \big(c_{0,-}\beta^{++} \gamma^+ + c_{0,+}\beta^{--} \gamma^-
- \beta (c_{0,+}\alpha^+\beta^- + c_{0,-}\alpha^-\beta^+)\\
- \alpha (c_{0,+}(\beta^-)^2 + c_{0,-}(\beta^+)^2)\big)\end{matrix}\end{pmatrix}
 \\ & \quad+ \begin{pmatrix}
-\gamma (c_{1,-} \alpha^- + c_{1,+} \alpha^+) - c_{(2,2)} \alpha\\
 \gamma (c_{1,+} \beta^- + c_{1,-} \beta^+) + c_{(2,2)} \beta\end{pmatrix}
=0,  \, \text{ etc.,}
\end{align}
represent the first few equations of the stationary Ablowitz--Ladik hierarchy. 
Here we introduced  
\begin{equation}
c_{\ul p} = (c_{p_-,-} + c_{p_+,+})/2, \quad p_\pm\in\bbN_0.  \lb{ALdefcp}
\end{equation}
By definition, the set of solutions of \eqref{ALstat}, with $\ul p$ ranging in $\bbN_0^2$ and $c_{\ell,\pm}\in\bbC$, $\ell\in\bbN_0$, represents the class of algebro-geometric Ablowitz--Ladik solutions. 

In the following we will frequently assume that $\alpha, \beta$ satisfy the $\ul p$th
stationary Ablowitz--Ladik system supposing a particular choice of summation
constants $c_{\ell,\pm}\in\bbC$, $\ell=0,\dots,p_\pm$, $p_\pm\in\bbN_0$, has been made.

In accordance with our notation introduced in 
\eqref{AL2.04a}--\eqref{AL2.04c}, the corresponding homogeneous stationary Ablowitz--Ladik equations are defined by  
\begin{equation}
\shAL_{\ul p} (\alpha, \beta) 
= \sAL_{\ul p} (\alpha, \beta)\big|_{c_{0,\pm}=1, \, c_{\ell,\pm}=0, \, \ell=1,\dots,p_\pm}, \quad \ul p=(p_-,p_+)\in\bbN_0^2.   \lb{ALstathom}
\end{equation}

One can show (cf.\ \cite[Lemma\ 2.2]{GesztesyHoldenMichorTeschl:2007a}) that 
$g_{p_+,+} = g_{p_-,-}$ up to a lattice constant which can be set equal to zero without loss of generality. Thus, we will henceforth assume that
\begin{equation}
g_{p_+,+} = g_{p_-,-},   \lb{g+=g-}
\end{equation} 
which in turn implies that
\begin{equation}
K_{\ul p}= G_{\ul p}   \lb{ALK=G}
\end{equation}
and hence renders $V_{\ul p}$ in \eqref{AL_v} traceless in the stationary context. (We note that equations \eqref{g+=g-} and \eqref{ALK=G} cease to be valid in the time-dependent context, though.)

Next, still assuming \eqref{AL2.01} and taking into account 
\eqref{ALK=G}, one infers by taking determinants in the stationary zero-curvature equation \eqref{ALstatzc} that the quantity
\begin{equation} \label{ALR}
R_{\ul p} = G_{\ul p}^2 - F_{\ul p} H_{\ul p}  
\end{equation}
is a lattice constant, that is, $R_{\ul p} - R_{\ul p}^- = 0$. Hence, $R_{\ul p}(z)$  only depends on $z$, and assuming in addition to \eqref{AL2.01}  that 
\begin{equation}
c_{0,\pm} \in \bbC\setminus \{0\},   \lb{ALc0}
\end{equation}
one may write $R_{\ul p}$ as\footnote{We use the convention that a product is to be interpreted equal to $1$ whenever the upper limit of the product is strictly less than its lower limit.}  
\begin{equation}  \label{ALE_m}
R_{\ul p}(z) = (c_{0,+}^2/4) z^{-2p_-} \prod_{m=0}^{2p+1}(z-E_m), \quad
\{E_m\}_{m=0}^{2p+1} \subset \bbC \setminus\{ 0\}, 
\end{equation}
where
\begin{equation}  \label{AL_ny}
p=p_- + p_+ -1.
\end{equation}

 In addition, we note that the summation constants $c_{1,\pm},\dots,c_{p_\pm,\pm}$ in \eqref{ALstat} can be expressed as symmetric functions in the zeros 
 $E_0,\dots,E_{2p+1}$ of the associated Laurent polynomial $R_{\ul p}$ in 
 \eqref{ALE_m}. In fact, one can prove (cf.\ 
 \cite{GesztesyHoldenMichorTeschl:2007a}) that 
\begin{equation}
c_{\ell,\pm}= c_{0,\pm} c_\ell\big(\ul E^{\pm 1}\big), \quad \ell=0,\dots,p_\pm, 
\lb{ALcell}
\end{equation}
where   
\begin{align}
&c_{0}\big(\ul E^{\pm 1}\big)=1,\no \\
&c_{k}\big(\ul E^{\pm 1}\big)   \label{ALc_ell}  \\
&=-\!\!\!\!\!\!\!\sum_{\substack{j_0,\dots,j_{2p+1}=0\\
   j_0+\cdots+j_{2p+1}=k}}^{k}\!\!
\f{(2j_0)!\cdots(2j_{2p+1})!}
{2^{2k} (j_0!)^2\cdots (j_{2p+1}!)^2 (2j_0-1)\cdots(2j_{2p+1}-1)}
E_0^{\pm j_0}\cdots E_{2p+1}^{\pm j_{2p+1}},  \no  \\
& \hspace*{10.95cm} k\in\bbN,   \no
\end{align}
are symmetric functions of $\ul E^{\pm 1}=(E_0^{\pm 1},\dots,E_{2p+1}^{\pm 1})$.  

Next we turn to the time-dependent Ablowitz--Ladik hierarchy. For that purpose the coefficients $\alpha$ and $\beta$ are now considered as functions of both the lattice point and time. For each system in the hierarchy, that is, for each 
$\ul p\in\bbN_0^2$, we introduce a deformation (time) parameter 
$t_{\ul p}\in\bbR$ in $\alpha, \beta$, replacing $\alpha(n), \beta(n)$ by 
$\alpha(n,t_{\ul p}), \beta(n,t_{\ul p})$. Moreover, the definitions 
\eqref{AL2.03},  \eqref{AL_v}, and \eqref{ALF_p}--\eqref{ALK_p}  of $U, V_{\ul p}$, and $F_{\ul p}, G_{\ul p}, H_{\ul p}, K_{\ul p}$, respectively, still apply. Imposing the zero-curvature relation
\begin{equation}
U_{t_{\ul p}} + U V_{\ul p} - V_{\ul p}^+ U =0, \quad \ul p\in\bbN_0^2,   \lb{ALzc p}
\end{equation}
then results in the equations
\begin{align}  \label{ALalphat}
\alpha_{t_{\ul p}} &= i \big(z F_{\ul p}^- + \alpha (G_{\ul p} + K_{\ul p}^-) - F_{\ul p}\big),    \\ \label{ALbetat}
\beta_{t_{\ul p}} &= - i \big(\beta (G_{\ul p}^- + K_{\ul p}) - H_{\ul p} 
+ z^{-1} H_{\ul p}^-\big), \\ \label{AL1,1r}
0 &= z (G_{\ul p}^- - G_{\ul p}) + z\beta F_{\ul p} + \alpha H_{\ul p}^-,   \\ \label{AL2,2r}
0 &= z \beta F_{\ul p}^- + \alpha H_{\ul p} + K_{\ul p}^- - K_{\ul p}.
\end{align}
Varying $\ul p \in \bbN_0^2$, the collection of evolution equations   
\begin{align}   \label{AL_p}
\begin{split}
& \AL_{\ul p} (\alpha, \beta) =
\begin{pmatrix}-i\alpha_{t_{\ul p}} - \alpha(g_{p_+,+} + g_{p_-,-}^-) 
+ f_{p_+ -1,+} - f_{p_- -1,-}^-\\
  -i\beta_{t_{\ul p}}+ \beta(g_{p_+,+}^- + g_{p_-,-}) - h_{p_- -1,-} 
  + h_{p_+ -1,+}^- \end{pmatrix}=0,  \\
& \hspace*{6.44cm} t_{\ul p}\in\bbR, \; \ul p=(p_-,p_+)\in\bbN_0,   
\end{split}
\end{align}
then defines the time-dependent Ablowitz--Ladik hierarchy. Explicitly, 
\begin{align} \no
& \AL_{(0,0)} (\alpha, \beta) =  \begin{pmatrix} -i \alpha_{t_{(0,0)}}- c_{(0,0)}\alpha \\ 
-i\beta_{t_{(0,0)}}+c_{(0,0)}\beta \end{pmatrix} 
=0,\\ \no
& \AL_{(1,1)} (\alpha, \beta) =  \begin{pmatrix}  -i \alpha_{t_{(1,1)}} 
- \gamma (c_{0,-}\alpha^- + c_{0,+}\alpha^+) 
- c_{(1,1)} \alpha \\
-i\beta_{t_{(1,1)}}+ \gamma (c_{0,+}\beta^- + c_{0,-}\beta^+) +
c_{(1,1)} \beta\end{pmatrix}=0,\\ 
& \AL_{(2,2)} (\alpha, \beta)  \\
& \quad =  \begin{pmatrix}\begin{matrix}-i \alpha_{t_{(2,2)}}-
\gamma \big(c_{0,+}\alpha^{++} \gamma^+ + c_{0,-}\alpha^{--} \gamma^-
- \alpha (c_{0,+}\alpha^+\beta^- + c_{0,-}\alpha^-\beta^+)\\
- \beta (c_{0,-}(\alpha^-)^2 + c_{0,+}(\alpha^+)^2)\big)\end{matrix}\\[3mm] 
\begin{matrix}-i\beta_{t_{(2,2)}}+
 \gamma \big(c_{0,-}\beta^{++} \gamma^+ + c_{0,+}\beta^{--} \gamma^-
- \beta (c_{0,+}\alpha^+\beta^- + c_{0,-}\alpha^-\beta^+)\\
- \alpha (c_{0,+}(\beta^-)^2 + c_{0,-}(\beta^+)^2)\big)\end{matrix}\end{pmatrix}
\no  \\ 
 & \qquad+ \begin{pmatrix}
-\gamma (c_{1,-} \alpha^- + c_{1,+} \alpha^+) - c_{(2,2)} \alpha\\
 \gamma (c_{1,+} \beta^- + c_{1,-} \beta^+) + c_{(2,2)} \beta\end{pmatrix}
=0, \, \text{ etc.,}   \no
\end{align}
represent the first few equations of the time-dependent Ablowitz--Ladik hierarchy. 
Here we recall the definition of $c_{\ul p}$ in \eqref{ALdefcp}.

The special case $\ul p =(1,1)$, $c_{0,\pm}=1$, and $c_{(1,1)}=-2$, that is,
\begin{equation}
\begin{pmatrix} -i \alpha_{t_1}- \gamma (\alpha^- + \alpha^+)  + 2\alpha \\    
-i\beta_{t_1}+ \gamma (\beta^- + \beta^+) - 2 \beta  \end{pmatrix}=0,
\end{equation}
represents \textit{the} Ablowitz--Ladik system \eqref{AL1.12}.

By \eqref{AL_p}, \eqref{ALg_l+}, and \eqref{ALg_l-},
the time derivative of $\gamma=1-\alpha \beta$ is given by
\begin{equation} \lb{AL2.14}
\gamma_{t_{\ul p}} = i \gamma \big((g_{p_+,+} - g_{p_+,+}^-) 
- (g_{p_-,-} - g_{p_-,-}^-) \big).
\end{equation}

The corresponding homogeneous equations are then defined by 
\begin{equation}  
\hAL_{\ul p} (\alpha, \beta) = 
\AL_{\ul p} (\alpha, \beta)\big|_{c_{0, \pm}=1, \, c_{\ell, \pm}=0, \, \ell=1,\dots,p_\pm}=0,  \quad \ul p=(p_-,p_+)\in\bbN_0^2.  
\end{equation} 

\begin{remark}  \lb{rAL2.13}
From \eqref{AL1,1}--\eqref{AL2,1} and the explicit computations of the coefficients 
$f_{\ell,\pm}$, $g_{\ell,\pm}$, and $h_{\ell,\pm}$, one concludes that the zero-curvature equation \eqref{ALzc p} 
and hence the Ablowitz--Ladik hierarchy is invariant under the scaling transformation 
\begin{equation}
\alpha \rightarrow \alpha_c = \{c\, \alpha(n)\}_{n\in\bbZ}, \quad 
\beta \rightarrow \beta_c = \{ \beta(n)/c\}_{n\in\bbZ}, \quad c \in \bbC\setminus \{0\}.
\end{equation}
In particular, solutions $\alpha, \beta$ of the stationary and time-dependent AL equations are determined only up to a multiplicative constant. 
\end{remark}

\begin{remark} \lb{rAL2.14}
$(i)$ The special choices $\beta=\pm\ol\alpha$, $c_{0,\pm}=1$ lead to the discrete nonlinear Schr\"odinger hierarchy. In particular, choosing $c_{(1,1)}=-2$ yields the discrete nonlinear Schr\"odinger equation in its usual form (see, e.g., 
\cite[Ch.\ 3]{AblowitzPrinariTrubatch:2004} and the references cited therein), 
\begin{equation}
-i\alpha_t - (1 \mp |\alpha|^2)(\alpha^- + \alpha^+) + 2\alpha = 0,   
\end{equation}
as its first nonlinear element. The choice $\beta = \ol \alpha$ is called the {\it defocusing} case, $\beta = - \ol \alpha$ represents the {\it focusing} case of the discrete nonlinear Schr\"odinger hierarchy.\\
$(ii)$ The alternative choice $\beta = \ol \alpha$, $c_{0,\pm} = \mp i$, leads to the hierarchy of Schur flows. In particular, choosing $c_{(1,1)} = 0$ yields  
\begin{equation}
\alpha_t - (1 - |\alpha|^2)(\alpha^+ - \alpha^-) = 0   
\end{equation}
as the first nonlinear element of this hierarchy (cf.\  \cite{AmmarGragg:1994}, 
\cite{FaybusovichGekhtman:1999}, \cite{FaybusovichGekhtman:2000},  
\cite{Golinskii:2006}, \cite{MukaihiraNakamura:2002}, \cite{Simon:2007}).
\end{remark}

Finally, we briefly recall the Lax pair $(L,P_{\ul p})$ for the Ablowitz--Ladik hierarchy and refer to \cite{GesztesyHoldenMichorTeschl:2007b} for detailed discussions of this topic. 

In the standard basis $\{\delta_m\}_{m\in\bbZ}$ in $\ell^2(\bbZ)$ defined by
\begin{equation}
\delta_m=\{\delta_{m,n}\}_{n\in\bbZ}, \; m\in\bbZ, \quad
\delta_{m,n}=\begin{cases} 1, &m=n, \\ 0, & m\neq n, \end{cases}
\lb{ALbasis}
\end{equation}
the underlying Lax difference expression $L$ is given by 
\begin{align}
L &= \left(\begin{smallmatrix} \ddots &&\hspace*{-8mm}\ddots
&\hspace*{-10mm}\ddots &\hspace*{-12mm}\ddots
&\hspace*{-14mm}\ddots &&&
\raisebox{-3mm}[0mm][0mm]{\hspace*{-6mm}{\Huge $0$}}
\\
&0& -\alpha(0) \rho(-1) & -\beta(-1)\alpha(0) &
-\alpha(1)\rho(0) & \rho(0) \rho(1)
\\
&& \rho(-1) \rho(0) & \beta(-1) \rho(0) &
-\beta(0) \alpha(1) & \beta(0) \rho(1) & 0
\\
&&&0& -\alpha(2) \rho(1) & -\beta(1) \alpha(2) &
-\alpha(3) \rho(2) & \rho(2) \rho(3)
\\
&&\raisebox{-4mm}[0mm][0mm]{\hspace*{-6mm}{\Huge $0$}} &&
\rho(1) \rho(2) & \beta(1) \rho(2) & -\beta(2) \alpha(3)
& \beta(2) \rho(3) & 0
\\
&&&&&\hspace*{-14mm}\ddots &\hspace*{-14mm}\ddots
&\hspace*{-14mm}\ddots &\hspace*{-8mm}\ddots &\ddots
\end{smallmatrix}\right)    \lb{ALLop} \\ 
&= \rho^- \rho \, \deven \, S^{--} + (\beta^-\rho \, \deven - \alpha^+\rho \, \dodd) S^- 
- \beta\alpha^+   \no \\ \lb{ALLrec}
& \quad + (\beta \rho^+ \, \deven - \alpha^{++} \rho^+ \, \dodd) S^+ 
+ \rho^+ \rho^{++} \, \dodd \, S^{++}, 
\end{align}
where $\deven$ and $\dodd$ denote the characteristic functions of the even and odd integers,
\begin{equation}
\deven = \chi_{_{2\bbZ}}, \quad \dodd = 1 - \deven = \chi_{_{2\bbZ +1}}.
\end{equation}
In particular, terms of the form $-\beta(n) \alpha(n+1)$ 
represent the diagonal $(n,n)$-entries, $n\in\bbZ$, in the infinite matrix
\eqref{ALLop}. In addition, we used the abbreviation
\begin{equation}
\rho = \gamma^{1/2} = (1-\alpha \beta)^{1/2}.  \lb{ALga}
\end{equation}

Next, let $T$ be a bounded operator in the Hilbert space $\ell^2(\bbZ)$ (with scalar product denoted by $(\dott,\dott)$). Given the standard basis 
\eqref{ALbasis} in $\ell^2(\bbZ)$, we represent $T$ by
\begin{equation}
T=\big(T(m,n)\big)_{(m,n)\in\bbZ^2}, \quad 
T(m,n)=(\delta_m,T \, \delta_n), \quad (m, n) \in\bbZ^2. \lb{ALTop}
\end{equation}
Actually, for our purpose below, it is sufficient that $T$ is an $N$-diagonal matrix for some $N\in\bbN$. Moreover, we introduce the upper and lower triangular parts 
$T_\pm$ of $T$ by
\begin{equation}
T_\pm=\big(T_\pm (m,n)\big)_{(m,n)\in\bbZ^2}, \quad
T_\pm (m,n)=\begin{cases} T(m,n), &\pm(n-m)>0, \\ 0, & \text{otherwise.}
\end{cases}
\lb{ALTpm}
\end{equation}

Then, the finite difference expression $P_{\ul p}$ is given by 
\begin{align}
\begin{split}
P_{\ul p} & = \f{i}{2} \sum_{\ell=1}^{p_+} c_{p_+ -\ell,+} \big( (L^\ell)_+ - (L^\ell)_- \big)
- \f{i}{2} \sum_{\ell=1}^{p_-} c_{p_- -\ell,-} \big( (L^{-\ell})_+ - (L^{-\ell})_- \big) \\
& \quad  - \f{i}{2} c_{\ul p} \, Q_d,  \quad  \ul p=(p_-,p_+)\in\bbN_0^2,    \lb{ALP_p} 
\end{split}
\end{align} 
with $Q_d$ denoting the doubly infinite diagonal matrix
\begin{equation} 
Q_d=\big((-1)^k \delta_{k,\ell} \big)_{k,\ell \in\bbZ} 
\end{equation}
and $c_{\ul p}=(c_{p_-,-} + c_{p_+,+})/2$.  
The commutator relations $[P_{\ul p},L]=0$ and $L_{t_{\ul p}}-[P_{\ul p},L]=0$ are 
then equivalent to the stationary and time-dependent 
Ablowitz--Ladik equations \eqref{ALstat} and \eqref{AL_p}, respectively.

\section{Properties of Stationary Algebro-Geometric Solutions \\
of the Ablowitz--Ladik Hierarchy} \lb{s3} 

In this section we present a quick review of properties of algebro-geometric
solutions of the stationary Ablowitz--Ladik hierarchy. We refer to 
\cite{GesztesyHoldenMichorTeschl:2007a}  and 
\cite{GesztesyHoldenMichorTeschl:2007} for detailed presentations.

We recall the hyperelliptic curve $\calK_p$ of genus $p$, where 
\begin{align} \label{ALcalK_p}
\begin{split}
& \calK_p \colon \calF_{p}(z,y) = y^2 - 4c_{0,+}^{-2}z^{2p_-}R_{\ul p}(z) 
= y^2 - \prod_{m=0}^{2p+1}(z-E_m) = 0, \\
& R_{\ul p}(z) = \bigg(\frac{c_{0,+}}{2z^{p_-}}\bigg)^2\prod_{m=0}^{2p+1}(z-E_m), 
\quad  \{E_m\}_{m=0}^{2p+1} \subset \bbC \setminus\{ 0\}, \; p=p_- + p_+ -1.   
\end{split}
\end{align}
Throughout this section we make the assumption: 

\begin{hypothesis} \lb{hAL3.1} 
Suppose that \begin{equation}
\alpha, \beta\in \bbC^{\bbZ} \, \text{ and } \,  \alpha(n)\beta(n)\notin \{0,1\}  
\, \text{ for all } \,  n\in\bbZ.   \lb{ALneq 0,1}
\end{equation}
In addition, assume that the affine part of the hyperelliptic curve $\calK_p$ in \eqref{ALcalK_p} is nonsingular, that is, suppose that 
\begin{equation}
E_m\neq E_{m'} \text{  for $m\neq m'$, \; $m,m'=0,1,\dots,2p+1$.}   \lb{ALEneqE}
\end{equation} 
\end{hypothesis}

The curve $\calK_p$ is compactified by joining two points $P_{\infty_\pm}$,
$P_{\infty_+}\neq P_{\infty_-}$, but for notational simplicity  the
compactification is also denoted by $\calK_p$. Points $P$ on
$\calK_p\setminus\{\Pinfp, \Pinfm\}$ are  represented as pairs $P=(z,y)$, where
$y(\dott)$ is the meromorphic function on $\calK_p$ satisfying
$\calF_{p}(z,y)=0$. The complex structure on $\calK_p$ is then defined in the usual way, see Appendix \ref{AL.sA}. Hence, $\calK_p$ becomes a two-sheeted hyperelliptic Riemann surface of genus $p$ in a standard manner.

We also emphasize that by fixing the curve $\calK_p$ (i.e., by fixing
$E_0,\dots,E_{2p+1}$), the summation constants $c_{1,\pm},\dots,c_{p_\pm,\pm}$ in 
$f_{p_\pm,\pm}$, $g_{p_\pm,\pm}$, and $h_{p_\pm,\pm}$ 
(and hence in the corresponding stationary $\sAL_{\ul p}$ equations) are uniquely determined as is clear from 
\eqref{ALc_ell} which establishes the summation constants $c_{\ell,\pm}$ as symmetric functions of 
$E_0^{\pm 1},\dots,E_{2p+1}^{\pm 1}$. 

For notational simplicity we will usually tacitly assume that $p\in\bbN$. 

We denote by $\{\mu_j(n)\}_{j=1,\dots,p}$ and $\{\nu_j(n)\}_{j=1,\dots,p}$ the zeros of $(\dott)^{p_-}F_{\ul p}(\dott,n)$ and $(\dott)^{p_- -1} H_{\ul p}(\dott,n)$, respectively. Thus, we may write 
\begin{align}   
F_{\ul p}(z)&= - c_{0,+}\alpha^+ z^{-p_-}\prod_{j=1}^{p}(z-\mu_j),   \label{ALmu(n)}  \\
H_{\ul p}(z)&= c_{0,+}\beta z^{-p_- +1}\prod_{j=1}^{p}(z-\nu_j), \label{ALnu(n)}
\end{align}
and we recall that (cf.\ \eqref{ALR})
\begin{equation}  
R_{\ul p} - G_{\ul p}^2 = - F_{\ul p} H_{\ul p}.   \lb{ALquad}
\end{equation}
The next step is crucial; it permits us to ``lift'' the zeros $\mu_j$ and $\nu_j$ from the complex plane $\bbC$ to the curve $\calK_p$.
From \eqref{ALquad} one infers that
\begin{equation}
R_{\ul p}(z) -G_{\ul p}(z)^2=0, \quad
z\in\{\mu_j,\nu_k\}_{j,k=1,\dots,p}. \lb{3.3.7A}
\end{equation}
We now introduce $\{ \hat \mu_j \}_{j=1,\dots,p}\subset \calK_p$ and
$\{ \hat \nu_j \}_{j=1,\dots,p}\subset \calK_p$ by
\begin{equation} \label{ALhmu}
\hat \mu_j(n)=(\mu_j(n), (2/c_{0,+})\mu_j(n)^{p_-} G_{\ul p}(\mu_j(n),n)), 
\quad j=1, \dots, p, \; n\in\bbZ,   
\end{equation}
and 
\begin{equation}  \label{ALhnu}
\hat \nu_j(n)=(\nu_j(n), - (2/c_{0,+})\nu_j(n)^{p_-} G_{\ul p}(\nu_j(n),n)), 
\quad j=1, \dots, p, \; n\in\bbZ.
\end{equation}

We also introduce the points $P_{0,\pm}$ by 
\begin{equation}
    \Pzpm=(0,\pm (c_{0,-}/c_{0,+}))\in\calK_p, \quad 
    \f{c_{0,-}^2}{c_{0,+}^2} = \prod_{m=0}^{2p+1} E_m.   \lb{AL3.10}
\end{equation}
We emphasize that $\Pzpm$ and $\Pinfpm$ are not necessarily on the same
sheet of $\calK_p$. Moreover,
\begin{align} \label{ALy_asymp}
y(P)&\underset{\zeta\to 0}{=} \begin{cases}
\mp \zeta^{-2p}(1+\Oh(\zeta)), & P\to\Pinfpm, \quad \zeta=1/z, \\
\pm (c_{0,-}/c_{0,+}) + \Oh(\zeta), & P\to\Pzpm, \quad \zeta=z. 
 \end{cases}      
\end{align}

Next we introduce the fundamental meromorphic function
 on $\calK_p$ by  
\begin{align} 
\phi(P,n) &= \frac{(c_{0,+}/2)z^{-p_-} y + G_{\ul p}(z,n)}{F_{\ul p}(z,n)}  \label{ALphi} \\
&= \frac{-H_{\ul p}(z,n)}{(c_{0,+}/2)z^{-p_-} y - G_{\ul p}(z,n)},   \label{ALphi1}  \\
& \hspace*{.9cm}  P=(z,y)\in \calK_p, \; n\in \bbZ,   \no 
\end{align}
with divisor  $(\phi(\dott,n))$ of $\phi(\dott,n)$ given by
\begin{equation} \label{AL(phi)}
(\phi(\dott,n)) = \calD_{P_{0,-} \hunu(n)} - \calD_{\Pinfm \humu(n)},   
\end{equation}
using \eqref{ALmu(n)} and \eqref{ALnu(n)}. Here we abbreviated 
\begin{equation}
\humu = \{\hat \mu_1, \dots, \hat \mu_{p}\}, \, 
\hunu = \{\hat \nu_1, \dots, \hat \nu_{p}\} \in\symq.
\end{equation}
For brevity, and in close analogy to the Toda hierarchy, we will frequently refer to 
$\humu$ and $\hunu$ as the Dirichlet and Neumann divisors, respectively. 

Given $\phi(\dott,n)$, the meromorphic stationary Baker--Akhiezer vector 
$\Psi(\dott,n,n_0)$ on $\calK_p$ is then defined by
\begin{align} \no
\Psi(P,n,n_0) &= \binom{\psi_1(P,n,n_0)}{\psi_2(P,n,n_0)}, \\  \label{ALpsi1}
\psi_1(P,n,n_0) &= \begin{cases}      
\prod_{n'=n_0 + 1}^n \big(z + \alpha(n') \phi^-(P,n')\big), & n \geq n_0 +1, \\
1,                      &  n=n_0, \\
\prod_{n'=n + 1}^{n_0} \big(z + \alpha(n') \phi^-(P,n')\big)^{-1}, & n \leq n_0 -1,
\end{cases}   \\
\psi_2(P,n,n_0) &= \phi(P,n_0)
\begin{cases}      
\prod_{n'=n_0 + 1}^n \big(z \beta(n') \phi^-(P,n')^{-1} + 1\big), & n \geq n_0 +1, \\
1,                      &  n=n_0, \\
\prod_{n'=n + 1}^{n_0} \big(z \beta(n') \phi^-(P,n')^{-1} + 1\big)^{-1}, & n \leq n_0 -1.
\end{cases}         \label{ALpsi2}
\end{align}
Basic properties of $\phi$ and $\Psi$ are summarized in the following result.

\begin{lemma} [\cite{GesztesyHoldenMichorTeschl:2007a}]  \lb{lAL3.1}
Suppose that $\alpha, \beta$ satisfy \eqref{ALneq 0,1} and the $\ul p$th stationary 
Ablowitz--Ladik system \eqref{ALstat}. Moreover, assume \eqref{ALcalK_p} and 
\eqref{ALEneqE} and let
$P=(z,y) \in \calK_p\setminus \{\Pinfp, \Pinfm,\Pzp,\Pzm\}$, $(n, n_0) \in \bbZ^2$.
Then $\phi$ satisfies the Riccati-type equation
\begin{align} \label{ALriccati} 
& \alpha \phi(P)\phi^-(P) - \phi^-(P) + z \phi(P) = z \beta,   \\
\intertext{as well as}
  \label{ALphi 1}
& \phi(P) \phi(P^*) = \frac{H_{\ul p}(z)}{F_{\ul p}(z)},\\ \label{ALphi 2}
& \phi(P) + \phi(P^*) = 2\frac{G_{\ul p}(z)}{F_{\ul p}(z)},\\ \label{ALphi 3}
& \phi(P) - \phi(P^*) = c_{0,+}z^{-p_-} \frac{y(P)}{F_{\ul p}(z)}.
\end{align}
The vector $\Psi$ satisfies
\begin{align} 
& U(z) \Psi^-(P)=\Psi(P),  \label{ALpsi 2} \\ 
\label{ALpsi 3}
& V_{\ul p}(z)\Psi^-(P)= - (i/2)c_{0,+} z^{-p_-} y \Psi^-(P), \\ 
& \psi_2(P,n,n_0) = \phi(P,n) \psi_1(P,n,n_0),   \label{ALpsi 1} \\ \label{ALpsi 4}
& \psi_1(P,n,n_0) \psi_1(P^*,n,n_0) = z^{n-n_0} \frac{F_{\ul p}(z,n)}{F_{\ul p}(z,n_0)} 
\pgam(n,n_0),
\\ \label{ALpsi 5}
& \psi_2(P,n,n_0) \psi_2(P^*,n,n_0) = z^{n-n_0} \frac{H_{\ul p}(z,n)}{F_{\ul p}(z,n_0)} 
\pgam(n,n_0),\\
& \psi_1(P,n,n_0) \psi_2(P^*,n,n_0) +\psi_1(P^*,n,n_0) \psi_2(P,n,n_0) \label{ALpsi 6} \\
& \quad =2 z^{n-n_0} \frac{G_{\ul p}(z,n)}{F_{\ul p}(z,n_0)} 
\pgam(n,n_0),\no \\
& \psi_1(P,n,n_0) \psi_2(P^*,n,n_0) -\psi_1(P^*,n,n_0) \psi_2(P,n,n_0) \label{ALpsi 7} \\
& \quad =-c_{0,+} z^{n-n_0-p_-} \frac{y}{F_{\ul p}(z,n_0)}  \pgam(n,n_0), \no
\end{align}
where we used the abbreviation 
\begin{equation} \lb{ALpgam}
\pgam(n,n_0) = \begin{cases}      
\prod_{n'=n_0 + 1}^n \gamma(n') & n \geq n_0 +1, \\
1                      &  n=n_0, \\
\prod_{n'=n + 1}^{n_0} \gamma(n')^{-1}  & n \leq n_0 -1.
\end{cases}
\end{equation}
\end{lemma}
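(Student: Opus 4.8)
The plan is to reduce the entire lemma to two genuinely hierarchy-dependent computations and to treat everything else as formal bookkeeping. The two structural inputs are the quadratic relation $R_{\ul p} = G_{\ul p}^2 - F_{\ul p} H_{\ul p}$ from \eqref{ALR}/\eqref{ALquad} and the stationary zero-curvature relations \eqref{AL1,1}--\eqref{AL2,1}, specialized through $K_{\ul p} = G_{\ul p}$ (cf.\ \eqref{ALK=G}). Writing $\tilde{y} = (c_{0,+}/2) z^{-p_-} y$, so that $\tilde{y}^2 = R_{\ul p} = G_{\ul p}^2 - F_{\ul p} H_{\ul p}$, I would first record that the two representations \eqref{ALphi} and \eqref{ALphi1} of $\phi$ agree, since $(\tilde{y} + G_{\ul p})(\tilde{y} - G_{\ul p}) = -F_{\ul p} H_{\ul p}$ is exactly the cross-multiplication identifying them. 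Because the sheet exchange acts by $P = (z,y) \mapsto P^* = (z,-y)$, i.e.\ $\tilde{y} \mapsto -\tilde{y}$, relations \eqref{ALphi 1}--\eqref{ALphi 3} are then immediate: \eqref{ALphi 2} and \eqref{ALphi 3} follow by adding and subtracting \eqref{ALphi} at $P$ and $P^*$, and \eqref{ALphi 1} follows by multiplying \eqref{ALphi} at $P$ by \eqref{ALphi} at $P^*$ and invoking $\tilde{y}^2 = G_{\ul p}^2 - F_{\ul p} H_{\ul p}$ once more.

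The analytic heart is the Riccati equation \eqref{ALriccati}. I would substitute \eqref{ALphi} for both $\phi$ and $\phi^-$ into $\alpha \phi \phi^- - \phi^- + z\phi - z\beta$ and clear the denominator $F_{\ul p} F_{\ul p}^-$; the resulting expression splits into a part linear in $\tilde{y}$ and a part free of $\tilde{y}$. The coefficient of $\tilde{y}$ is $\alpha(G_{\ul p} + G_{\ul p}^-) - F_{\ul p} + z F_{\ul p}^-$, which vanishes by \eqref{AL1,2}. In the $\tilde{y}$-free part I would replace $\tilde{y}^2$ by $G_{\ul p}^2 - F_{\ul p} H_{\ul p}$, use $\alpha(G_{\ul p} + G_{\ul p}^-) = F_{\ul p} - z F_{\ul p}^-$ (again \eqref{AL1,2}) to collapse the terms quadratic in $G_{\ul p}$, and finish with \eqref{AL2,2} rewritten as $G_{\ul p} - G_{\ul p}^- - \alpha H_{\ul p} = z\beta F_{\ul p}^-$; what survives is exactly $z\beta F_{\ul p} F_{\ul p}^-$, matching the cleared right-hand side and proving \eqref{ALriccati}.

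With \eqref{ALriccati} available, the Baker--Akhiezer identities follow by bootstrapping off the product representations \eqref{ALpsi1}, \eqref{ALpsi2}. For \eqref{ALpsi 1} I would show by induction on $n$ that $\psi_2/\psi_1 = \phi$: the base case $n = n_0$ is the prefactor $\phi(P,n_0)$, and the one-step ratio of the defining products equals $\phi/\phi^-$ precisely by \eqref{ALriccati}. Then \eqref{ALpsi 2} is read off componentwise from \eqref{AL2.03}: its first component is the recursion $\psi_1 = (z + \alpha\phi^-)\psi_1^-$ built into \eqref{ALpsi1} combined with \eqref{ALpsi 1}, while its second component, after using \eqref{ALpsi 1}, reduces to \eqref{ALriccati} in the form $z\beta + \phi^- = \phi(z + \alpha\phi^-)$. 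For \eqref{ALpsi 3} I would apply $V_{\ul p}$ from \eqref{AL_v} (with $K_{\ul p}^- = G_{\ul p}^-$) to $\Psi^-$ and simplify each component via $\psi_2^- = \phi^- \psi_1^-$: the first component requires $G_{\ul p}^- - F_{\ul p}^- \phi^- = -\tilde{y}$, which is \eqref{ALphi} shifted, and the second requires $H_{\ul p}^- - G_{\ul p}^- \phi^- = -\tilde{y}\,\phi^-$, which is \eqref{ALphi1} shifted; both are immediate from the definition of $\phi^-$.

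Finally, the quadratic identities \eqref{ALpsi 4}--\eqref{ALpsi 7} all rest on \eqref{ALpsi 4}. I would obtain \eqref{ALpsi 4} by telescoping: from \eqref{ALpsi1} the one-step factor of $\psi_1(P,\dott)\,\psi_1(P^*,\dott)$ is $(z + \alpha\phi^-(P))(z + \alpha\phi^-(P^*))$, which after inserting \eqref{ALphi}/\eqref{ALphi1} at $P$ and $P^*$ and using $\tilde{y}^2 = (G_{\ul p}^-)^2 - F_{\ul p}^- H_{\ul p}^-$ equals $(z^2 F_{\ul p}^- + 2z\alpha G_{\ul p}^- + \alpha^2 H_{\ul p}^-)/F_{\ul p}^-$; the key identity $z^2 F_{\ul p}^- + 2z\alpha G_{\ul p}^- + \alpha^2 H_{\ul p}^- = z\gamma F_{\ul p}$ then follows by eliminating $\alpha H_{\ul p}^-$ through \eqref{AL1,1} and regrouping with \eqref{AL1,2}, so the one-step factor is $z\gamma\,F_{\ul p}/F_{\ul p}^-$ and the product telescopes to the right-hand side of \eqref{ALpsi 4} (the normalization fixed by the base case $n = n_0$ and the regime $n \leq n_0 - 1$ handled by taking reciprocals). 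Given \eqref{ALpsi 4}, identities \eqref{ALpsi 5}--\eqref{ALpsi 7} are immediate upon multiplying by $\phi(P)\phi(P^*) = H_{\ul p}/F_{\ul p}$, by $\phi(P) + \phi(P^*) = 2G_{\ul p}/F_{\ul p}$, and by $\phi(P^*) - \phi(P) = -c_{0,+} z^{-p_-} y/F_{\ul p}$, respectively, from \eqref{ALphi 1}--\eqref{ALphi 3}. The main obstacle is thus confined to the two hierarchy-dependent computations---the Riccati equation and the telescoping identity $z^2 F_{\ul p}^- + 2z\alpha G_{\ul p}^- + \alpha^2 H_{\ul p}^- = z\gamma F_{\ul p}$---both hinging on \eqref{AL1,1}, \eqref{AL1,2}, \eqref{AL2,2} with $K_{\ul p} = G_{\ul p}$; the remainder is bookkeeping across the three regimes $n \gtrless n_0$.
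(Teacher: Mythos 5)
The paper does not prove this lemma; it is quoted verbatim from \cite{GesztesyHoldenMichorTeschl:2007a}, so there is no in-text proof to compare against. Your argument is correct and is the standard one: I checked the two hierarchy-dependent computations (the coefficient of $\tilde y$ in the cleared Riccati identity vanishes by \eqref{AL1,2}, the $\tilde y$-free part collapses to $z\beta F_{\ul p}F_{\ul p}^-$ via \eqref{AL2,2} with $K_{\ul p}=G_{\ul p}$, and $z^2F_{\ul p}^-+2z\alpha G_{\ul p}^-+\alpha^2H_{\ul p}^-=z\gamma F_{\ul p}$ follows from \eqref{AL1,1} and \eqref{AL1,2}), and the remaining deductions \eqref{ALphi 1}--\eqref{ALphi 3}, \eqref{ALpsi 1}--\eqref{ALpsi 7} follow from these exactly as you describe.
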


Combining the Laurent polynomial recursion approach of Section \ref{s2} with 
\eqref{ALmu(n)} and \eqref{ALnu(n)} readily yields trace formulas for $f_{\ell,\pm}$ and $h_{\ell,\pm}$ in terms of symmetric functions of the zeros $\mu_j$ and $\nu_k$ of $(\dott)^{p_-}F_{\ul p}$ and $(\dott)^{p_- -1}H_{\ul p}$, respectively. For simplicity we just record the simplest cases.

\begin{lemma}  [\cite{GesztesyHoldenMichorTeschl:2007a}]  \lb{lAL3.2}
Suppose that $\alpha, \beta$ satisfy \eqref{ALneq 0,1} and the $\ul p$th stationary  
Ablowitz--Ladik system \eqref{ALstat}. Then, 
\begin{align} 
 \frac{\alpha}{\alpha^+}&= (-1)^{p+1}\f{c_{0,+}}{c_{0,-}}
\prod_{j=1}^{p}\mu_j,   \label{ALtr1} \\ 
\frac{\beta^+}{\beta}&= (-1)^{p+1}\f{c_{0,+}}{c_{0,-}}
\prod_{j=1}^{p}\nu_j,   \label{ALtr2} \\  
\sum_{j=1}^{p}\mu_j &= \alpha^+ \beta
- \gamma^+ \frac{\alpha^{++}}{\alpha^+} 
- \frac{c_{1,+}}{c_{0,+}},   \label{ALtr3} \\
\sum_{j=1}^{p}\nu_j &= \alpha^+ \beta
- \gamma \frac{\beta^-}{\beta} 
- \frac{c_{1,+}}{c_{0,+}}.  \label{ALtr4}
\end{align}
\end{lemma}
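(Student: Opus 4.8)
The plan is to exploit the fact that each of $F_{\ul p}$ and $H_{\ul p}$ admits two descriptions: the Laurent-polynomial expansion \eqref{ALF_p}, \eqref{ALH_p} in terms of the coefficients $f_{\ell,\pm}$, $h_{\ell,\pm}$, and the factored form \eqref{ALmu(n)}, \eqref{ALnu(n)} in terms of the zeros $\mu_j$, $\nu_j$. Since these are equalities of Laurent polynomials in $z$, I would simply match the coefficients of individual powers of $z$. Expanding the products via elementary symmetric functions, $\prod_{j=1}^{p}(z-\mu_j)=\sum_{k=0}^{p}(-1)^k e_k(\mu)\,z^{p-k}$ with $e_1(\mu)=\sum_{j=1}^p\mu_j$ and $e_p(\mu)=\prod_{j=1}^p\mu_j$ (and analogously for $\nu$), turns each trace formula into the comparison of one such coefficient against the corresponding $f_{\ell,\pm}$ or $h_{\ell,\pm}$.

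For the product formulas \eqref{ALtr1} and \eqref{ALtr2} I would read off the lowest power of $z$. In \eqref{ALF_p} the minimal power is $z^{-p_-}$ with coefficient $f_{0,-}=c_{0,-}\alpha$, while in \eqref{ALmu(n)} the same power carries $-c_{0,+}\alpha^+(-1)^p\prod_{j=1}^p\mu_j$; equating and solving yields \eqref{ALtr1}. Likewise, the minimal power of $H_{\ul p}$ is $z^{-(p_- -1)}$, with coefficient $h_{0,-}=-c_{0,-}\beta^+$ from \eqref{ALH_p} and $c_{0,+}\beta(-1)^p\prod_{j=1}^p\nu_j$ from \eqref{ALnu(n)}; this gives \eqref{ALtr2}. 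The only thing to watch here is the global exponent bookkeeping, which relies on $p=p_-+p_+-1$ so that the factored forms indeed have the stated span of powers.

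For the sum formulas \eqref{ALtr3} and \eqref{ALtr4} I would instead compare the next-to-leading coefficients. The coefficient of $z^{p_+-2}$ in $F_{\ul p}$ equals $f_{1,+}$ on the Laurent side and $-c_{0,+}\alpha^+(-e_1(\mu))=c_{0,+}\alpha^+\sum_{j=1}^p\mu_j$ on the factored side; inserting the explicit value $f_{1,+}=c_{0,+}\big(-\gamma^+\alpha^{++}+(\alpha^+)^2\beta\big)+c_{1,+}(-\alpha^+)$ and dividing by $c_{0,+}\alpha^+$ produces \eqref{ALtr3}. In the same way, the coefficient of $z^{p_+-1}$ in $H_{\ul p}$ is $h_{1,+}$ versus $-c_{0,+}\beta\sum_{j=1}^p\nu_j$; substituting $h_{1,+}=c_{0,+}\big(\gamma\beta^- -\alpha^+\beta^2\big)+c_{1,+}\beta$ and dividing by $-c_{0,+}\beta$ gives \eqref{ALtr4}.

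The one genuine subtlety---and the main obstacle---is that the coefficient extraction for \eqref{ALtr3}, \eqref{ALtr4} tacitly assumes the targeted power sits in the nonnegative block $\sum_{\ell=0}^{p_+-1}$ of \eqref{ALF_p}, \eqref{ALH_p}, i.e.\ that $p_+$ is large enough (here $p_+\geq 2$) for $f_{1,+}$, $h_{1,+}$ to appear there. When $p_+$ is small the relevant next-to-extreme coefficient instead lies in the $z^{-\ell}$ block, and one must run the parallel computation using $f_{p_- -1,-}$, $h_{p_- -1,-}$ together with \eqref{ALhat f}, which relates the summation constants $c_{\ell,\pm}$ to the homogeneous coefficients. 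Throughout, $\alpha,\beta\neq 0$ (from \eqref{ALneq 0,1}) and $c_{0,\pm}\neq 0$ (cf.\ \eqref{ALc0}) ensure that all the divisions above are legitimate. Once the exponent bookkeeping is organized in this way, each of the four identities drops out from a single coefficient comparison.
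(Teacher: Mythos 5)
Your proposal is correct and coincides with the argument the paper intends: the lemma is quoted from \cite{GesztesyHoldenMichorTeschl:2007a}, and the paragraph preceding it states that the trace formulas follow precisely by combining the Laurent polynomial expressions \eqref{ALF_p}, \eqref{ALH_p} with the factored forms \eqref{ALmu(n)}, \eqref{ALnu(n)}, which is exactly your coefficient comparison at the extreme and next-to-extreme powers of $z$. Your computations check out (including the signs via $e_p(\mu)=(-1)^p$ times the constant term and the explicit values of $f_{0,-}$, $h_{0,-}$, $f_{1,+}$, $h_{1,+}$), and your remark that for small $p_+$ the next-to-extreme coefficient migrates to the $z^{-\ell}$ block is a legitimate caveat handled by the parallel computation you describe.
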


\begin{remark}  \lb{rAL3.4}
The trace formulas in Lemma \ref{lAL3.2} illustrate why we assumed the condition 
$\alpha(n)\beta(n)\neq 0$ for all $n\in\bbN$ throughout this paper. Moreover, the following section shows that this condition is intimately connected with admissible divisors 
$\calD_{\humu}, \calD_{\hunu}$ avoiding the exceptional points $\Pinfpm, \Pzpm$. On the other hand, as is clear from the matrix representation \eqref{ALLop} of the Lax difference expression $L$, if $\alpha(n_0)\beta(n_0) = 1$ for some $n_0\in\bbN$, and hence $\rho(n_0)=0$, the infinite matrix $L$ splits into a direct sum of two half-line matrices $L_\pm (n_0)$ (in analogy to the familiar singular case of infinite Jacobi matrices $aS^+ + a^- S^- + b$ on $\bbZ$ with $a(n_0)=0$). This explains why we assumed $\alpha(n)\beta(n)\neq 1$ for all $n\in\bbN$ throughout this paper.
\end{remark}

Since nonspecial divisors and the linearization property of the Abel map when applied
to $\calD_{\humu}$ and $\calD_{\hunu}$
will play a fundamental role later on, we also recall the following facts.

\begin{lemma} [\cite{GesztesyHoldenMichorTeschl:2007a}, 
\cite{GesztesyHoldenMichorTeschl:2007}] \label{lAL3.4} 
Suppose that $\alpha, \beta$ satisfy \eqref{ALneq 0,1} and the
$\ul p$th stationary Ablowitz--Ladik system \eqref{ALstat}. Moreover, assume 
\eqref{ALcalK_p} and \eqref{ALEneqE} and let $n\in\bbZ$. Let $\calD_{\humu}$,
$\humu=\{\hmu_1,\dots,\hmu_{p}\}$, and $\calD_{\hunu}$,
$\hunu=\{\hunu_1,\dots,\hunu_{p}\}$, be the pole and zero divisors of degree
$p$, respectively, associated with $\alpha$, $\beta$, and $\phi$ defined
according to \eqref{ALhmu} and \eqref{ALhnu}, that is,
\begin{align}
\begin{split}
\hat\mu_j (n) &= (\mu_j (n), (2/c_{0,+}) \mu_j(n)^{p_-} G_{\ul p}(\mu_j(n),n)), 
\quad j=1,\dots,p, \\
\hat\nu_j (n) &= (\nu_j (n),- (2/c_{0,+}) \nu_j(n)^{p_-} G_{\ul p}(\nu_j(n),n)), 
\quad j=1,\dots,p.
\end{split}
\end{align}
Then $\calD_{\humu(n)}$ and $\calD_{\hunu(n)}$ are nonspecial for all
$n\in\bbZ$. Moreover, the Abel map linearizes the auxiliary divisors 
$\calD_{\humu}$ and $\calD_{\hunu}$ in the sense that
\begin{align}
\amap(\calD_{\humu(n)}) &=\amap(\calD_{\humu(n_{0})})
+ (n-n_{0}) \underline{A}_{\Pzm}(\Pinfp), \label{AL3.63} \\
\amap(\calD_{\hunu(n)})&=\amap(\calD_{\hunu(n_{0})})
+ (n-n_{0}) \underline{A}_{\Pzm}(\Pinfp), \label{AL3.64}
\end{align}
where $Q_0 \in \calK_p$ is a given base point. In addition,
\begin{equation}
\ual_{Q_0} (\calD_{\hat{\ul{\nu}}(n)}) = \ual_{Q_0}
(\calD_{\humu (n)}) + \underline{A}_{P_{0,-}}
(\Pinfm).    \lb{ALS3.65} 
\end{equation}
\end{lemma}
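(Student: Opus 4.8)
The plan is to establish the three assertions of Lemma \ref{lAL3.4}---nonspeciality of the divisors and the two linearization statements---by exploiting the explicit structure of the meromorphic function $\phi(\dott,n)$ and its Baker--Akhiezer vector, together with the divisor identity \eqref{AL(phi)}. These are standard algebro-geometric facts whose proofs follow the template established for the Toda hierarchy, and since the statement is quoted from \cite{GesztesyHoldenMichorTeschl:2007a} and \cite{GesztesyHoldenMichorTeschl:2007}, I expect the argument to proceed by carefully transcribing those techniques into the Ablowitz--Ladik setting.

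First I would address nonspeciality. The key input is the factorization \eqref{ALphi 1}, $\phi(P)\phi(P^*)=H_{\ul p}(z)/F_{\ul p}(z)$, which shows that if $\hat\mu_j(n)$ is a pole of $\phi(\dott,n)$ lying over a branch point or a coinciding projection, then the corresponding point on the other sheet must be a zero. More precisely, since $\calD_{\humu}$ is the pole divisor of a \emph{meromorphic function} on $\calK_p$, any special behavior would force $\phi$ or its reciprocal to have fewer independent sections than Riemann--Roch permits. The clean way to see nonspeciality is: a positive divisor of degree $p$ on a hyperelliptic curve of genus $p$ is special if and only if it contains a pair $\{\hat\mu_j,\hat\mu_j^*\}$ of points conjugate under the hyperelliptic involution. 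I would rule this out by noting that whenever $\mu_j=\mu_k$ for $j\neq k$, the relation $R_{\ul p}(\mu_j)=G_{\ul p}(\mu_j)^2$ in \eqref{3.3.7A} forces both lifted points to lie on the \emph{same} sheet (both given by the same value $(2/c_{0,+})\mu_j^{p_-}G_{\ul p}(\mu_j,n)$ via \eqref{ALhmu}), so no conjugate pair can occur; the same argument applies to $\hunu$ via \eqref{ALhnu}. I expect this to be the main conceptual obstacle, since one must argue carefully that even at branch points the lifts cannot produce a special configuration, using that $E_m\neq 0$ and the nonsingularity hypothesis \eqref{ALEneqE}.

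Next I would prove the linearization formulas \eqref{AL3.63} and \eqref{AL3.64}. The engine here is the divisor relation \eqref{AL(phi)}, namely $(\phi(\dott,n))=\calD_{P_{0,-}\hunu(n)}-\calD_{\Pinfm\humu(n)}$. Since the Abel map sends principal divisors to zero (Abel's theorem), applying $\amap$ to both sides gives
\begin{equation}
\amap(\calD_{\hunu(n)}) - \amap(\calD_{\humu(n)}) = \underline{A}_{P_{0,-}}(\Pinfm),
\end{equation}
which is precisely \eqref{ALS3.65} after passing to $\ual_{Q_0}$. To obtain the $n$-dependence, I would compare $\phi(\dott,n)$ with $\phi(\dott,n-1)$ using the Riccati-type relation \eqref{ALriccati}, or equivalently track how the divisor of the Baker--Akhiezer component $\psi_1(\dott,n,n_0)$ from \eqref{ALpsi1} evolves: the factor $z+\alpha(n')\phi^-(P,n')$ contributes a zero at $\Pinfp$ and a pole at $\Pzm$ for each increment of $n$, reflecting the asymptotics \eqref{ALy_asymp}. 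Taking the divisor of the quotient $\psi_1(\dott,n,n_0)/\psi_1(\dott,n-1,n_0)$ and applying Abel's theorem then yields the incremental shift $\underline{A}_{\Pzm}(\Pinfp)$; iterating from $n_0$ to $n$ gives \eqref{AL3.63}, and the identical computation for $\psi_2$ (or for $\hunu$ directly via \eqref{ALS3.65}) gives \eqref{AL3.64}.

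The remaining technical care concerns the bookkeeping of which points sit over $z=0$ and $z=\infty$ on which sheet, since \eqref{AL3.10} and \eqref{ALy_asymp} warn that $\Pzpm$ and $\Pinfpm$ need not lie on matching sheets. I would verify the asymptotic expansions \eqref{ALy_asymp} feed correctly into the divisor count so that the shift vector is genuinely $\underline{A}_{\Pzm}(\Pinfp)$ and not its conjugate. Once the divisor of the transition factor is pinned down, the linearization is a direct consequence of the additivity of the Abel map together with Abel's theorem, so the genuine difficulty is entirely front-loaded into the nonspeciality argument and the precise identification of the basepoint-shift divisor; the rest is a routine application of the fundamental theorems of the theory of Riemann surfaces.
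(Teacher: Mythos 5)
Your strategy is the standard one (and the one used in the cited sources): nonspeciality via the hyperelliptic criterion of Theorem \ref{thm3}, the offset \eqref{ALS3.65} by applying Abel's theorem to the divisor \eqref{AL(phi)} of $\phi$, the linearization \eqref{AL3.63} by applying Abel's theorem to the divisor of the one-step factor $z+\alpha(n+1)\phi(\dott,n)$ of the Baker--Akhiezer function, and \eqref{AL3.64} as an immediate consequence of the other two. But the step you explicitly postpone in the nonspeciality argument is precisely where the content lies, and it is not filled by merely invoking \eqref{ALEneqE}. Your same-sheet observation rules out conjugate pairs $\{P,P^*\}$ with $y(P)\neq 0$, since every lift uses the branch $+G_{\ul p}$; what remains is a multiple zero of $F_{\ul p}$ sitting over a branch point, which would put $2P=P+P^*$ into the divisor and make it special. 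The argument you need is: if $\mu_j(n)$ is a zero of $F_{\ul p}(\dott,n)$ of multiplicity at least two with $R_{\ul p}(\mu_j(n))=0$, then \eqref{3.3.7A} forces $G_{\ul p}(\mu_j(n),n)=0$, so $G_{\ul p}^2$ vanishes to order at least two there; since $F_{\ul p}$ also vanishes to order at least two, $R_{\ul p}=G_{\ul p}^2-F_{\ul p}H_{\ul p}$ would have a double zero at $\mu_j(n)$, contradicting \eqref{ALEneqE}. Hence a repeated $\mu_j$ is never a branch point (this is \eqref{ALS4.18} read in reverse), a simple $\mu_j$ over a branch point contributes $P=P^*$ only with multiplicity one and creates no pairing, and Theorem \ref{thm3} gives $i(\calD_{\humu(n)})=0$; the same reasoning with $H_{\ul p}$ and $-G_{\ul p}$ handles $\calD_{\hunu(n)}$.

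In the linearization step you also have the zero/pole assignment inverted: the factor $f_{n+1}(P)=z+\alpha(n+1)\phi(P,n)$ has a simple \emph{pole} at $\Pinfp$ and a simple \emph{zero} at $\Pzm$, not the reverse. Near $\Pinfm$ one finds from \eqref{ALphi} that $\phi(P,n)\sim -z/\alpha(n+1)$, so the leading terms cancel and $f_{n+1}$ stays finite there, while near $\Pinfp$ the function $\phi$ is bounded and the summand $z$ produces the pole; over $z=0$ the simple zero sits at $\Pzm$ because $\phi$ vanishes there by \eqref{AL(phi)} while it is finite and nonzero at $\Pzp$. This gives $(f_{n+1})=\calD_{\Pzm\humu(n+1)}-\calD_{\Pinfp\humu(n)}$ and hence the increment $\underline{A}_{\Pzm}(\Pinfp)$ with the correct sign; with the assignment as you wrote it the shift would come out as $-\underline{A}_{\Pzm}(\Pinfp)$. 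You did announce that exactly this bookkeeping needed verification, and the remainder of your argument, including deducing \eqref{AL3.64} from \eqref{AL3.63} and \eqref{ALS3.65}, is fine once these two points are repaired.
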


For theta function representations of $\alpha$ and $\beta$ we refer to 
\cite{GesztesyHoldenMichorTeschl:2007} and the references cited therein. 
These theta function representations also show that $\gamma(n)\notin \{0,1\}$ 
for all $n\in\bbZ$, that is, the second condition in \eqref{ALneq 0,1} is 
satisfied for the stationary algebro-geometric AL solutions discussed in 
this section provided the associated Dirichlet and Neumann divisors are 
admissible.

\section{The Stationary Algorithm} 
\lb{ALSs4}  

The aim of this section is to derive an algorithm that enables
one to construct algebro-geometric solutions for the
stationary Ablowitz--Ladik hierarchy for general initial data. Equivalently, we offer 
a solution of the inverse algebro-geometric spectral problem for general 
Lax operators $L$ in \eqref{ALLrec}, starting with initial divisors in general position.

Up to the end of Section \ref{s3} the material was based on the assumption that 
$\alpha, \beta \in \bbC^{\bbZ}$
satisfy the $\ul p$th stationary AL system \eqref{ALstat}. Now we
embark on the corresponding inverse problem consisting of constructing
a solution of \eqref{ALstat} given certain initial data. More precisely,
we seek to construct solutions $\alpha, \beta \in\bbC^{\bbZ}$  satisfying the
$\ul p$th stationary Ablowitz--Ladik system \eqref{ALstat} starting from a properly restricted set $\calM_0$ of admissible nonspecial Dirichlet divisor initial data
$\calD_{\humu(n_0)}$ at some fixed $n_0\in\bbZ$, 
\begin{align}
\begin{split}
\humu(n_0)&=\{\hmu_1(n_0),\dots,\hmu_{p}(n_0)\}\in \calM_0, \quad 
\calM_0\subset\symq, \\
\hat\mu_j (n_0) &= (\mu_j (n_0), (2/c_{0,+}) \mu_j(n_0)^{p_-} 
G_{\ul p}(\mu_j(n_0),n_0)), 
\quad j=1,\dots,p.  \lb{ALS4.1}
\end{split}
\end{align}

For convenience we will frequently use the phrase that $\alpha, \beta$ 
{\it blow up} in this manuscript whenever one of the divisors $\calD_{\humu}$ or $\calD_{\hunu}$ hits one of the points $\Pinfpm$, $\Pzpm$.

Of course we would like to ensure that the sequences $\alpha, \beta$ obtained via our 
algorithm do not blow up. To investigate when this happens, we study 
the image of our divisors under the Abel map. A key
ingredient in our analysis will be \eqref{AL3.63} which yields a linear
discrete dynamical system on the Jacobi variety
$J(\calK_p)$. In particular, we will be led to investigate solutions 
$\calD_{\humu}$,  $\calD_{\hunu}$ of the discrete initial value problem
\begin{align}
\begin{split}
&\amap(\calD_{\humu(n)}) =\amap(\calD_{\humu(n_{0})})
+ (n-n_{0}) \underline{A}_{\Pzm}(\Pinfp), \lb{ALS4.2} \\
&\humu(n_0)=\{\hmu_1(n_0),\dots,\hmu_{p}(n_0)\}\in \symq,
\end{split}
\end{align}
respectively
\begin{align}
\begin{split}
&\amap(\calD_{\hunu(n)}) =\amap(\calD_{\humu(n_{0})}) + \ul{A}_{P_{0,-}}(\Pinfm)
+ (n-n_{0}) \underline{A}_{\Pzm}(\Pinfp), \lb{ALS4.2a} \\
&\hunu(n_0)=\{\hnu_1(n_0),\dots,\hnu_{p}(n_0)\}\in \symq,
\end{split}
\end{align}
where $Q_0\in\calK_p$ is a given base point.  Eventually, we will be
interested in solutions $\calD_{\humu}$, $\calD_{\hunu}$ of \eqref{ALS4.2}, 
\eqref{ALS4.2a} with initial
data $\calD_{\humu(n_0)}$ satisfying \eqref{ALS4.1} and $\calM_0$ to be
specified as in (the proof of) Lemma \ref{ALSl4.1}.

Before proceeding to develop the stationary Ablowitz--Ladik algorithm, we briefly
analyze the dynamics of \eqref{ALS4.2}. 

\begin{lemma} \lb{ALSl4.1}
Let $n\in\bbZ$ and suppose that 
$\calD_{\humu(n)}$ is defined via \eqref{ALS4.2} for some divisor 
$\calD_{\humu(n_0)}\in\symq$. \\
$(i)$ If $\calD_{\humu(n)}$ is nonspecial and does not contain any of the points
$P_{0,\pm}$, $P_{\infty_\pm}$, and $\calD_{\humu(n+1)}$
contains one of the points $P_{0,\pm}$, $P_{\infty_\pm}$, then 
$\calD_{\humu(n+1)}$ contains $\Pzm$ or $\Pinfm$ but not $\Pinfp$ or $\Pzp$. \\
$(ii)$ If $\calD_{\humu(n)}$ is nonspecial and $\calD_{\humu(n+1)}$ is
special, then $\calD_{\humu(n)}$ contains at least one of the points 
$P_{\infty_+}$, $P_{\infty_-}$ and one of the points $P_{0,+}$, $P_{0,-}$. \\
$(iii)$ Item $(i)$ holds if $n+1$ is replaced by $n-1$, $P_{\infty_+}$
by $P_{\infty_-}$, and $\Pzp$ by $\Pzm$.  \\
$(iv)$ Items $(i)$--$(iii)$ also hold for $\calD_{\hunu(n)}$.
\end{lemma}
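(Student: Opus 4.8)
The plan is to turn the additive relation \eqref{ALS4.2} on the Jacobian into a single linear-equivalence relation between consecutive divisors, and then to exploit the rigidity of nonspecial divisors of degree equal to the genus. Subtracting \eqref{ALS4.2} at $n$ from the same relation at $n+1$ and using $\underline{A}_{\Pzm}(\Pinfp)=\Amap(\Pinfp)-\Amap(\Pzm)$, one sees that the two degree-$(p+1)$ divisors $\calD_{\humu(n)}+\Pinfp$ and $\calD_{\humu(n+1)}+\Pzm$ have the same image under the Abel map, so Abel's theorem yields the master equivalence
\[
\calD_{\humu(n)}+\Pinfp \;\sim\; \calD_{\humu(n+1)}+\Pzm. \qquad (\star)
\]
I would record two standard facts about $\calK_p$ that drive everything: first, the meromorphic function $z$ has divisor $\Pzp+\Pzm-\Pinfp-\Pinfm$, so $\Pinfp+\Pinfm\sim\Pzp+\Pzm$ and, more generally, $P+P^*\sim\Pinfp+\Pinfm$ for every $P$ (the $g^1_2$ class); second, a nonspecial positive divisor $\calD$ of degree $p$ on a genus-$p$ curve has $\ell(\calD)=1$ and is therefore the \emph{unique} positive divisor in its linear equivalence class.

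For $(i)$ I would argue by elimination, assuming $\calD_{\humu(n)}$ nonspecial and disjoint from the four exceptional points. If $\calD_{\humu(n+1)}\ni\Pinfp$, write $\calD_{\humu(n+1)}=\Pinfp+\calD'$ and cancel $\Pinfp$ in $(\star)$ to get $\calD_{\humu(n)}\sim\calD'+\Pzm$; uniqueness then forces $\calD_{\humu(n)}=\calD'+\Pzm\ni\Pzm$, a contradiction. If instead $\calD_{\humu(n+1)}\ni\Pzp$, write $\calD_{\humu(n+1)}=\Pzp+\calD''$, replace $\Pzp+\Pzm$ by $\Pinfp+\Pinfm$ via the $g^1_2$ relation, cancel $\Pinfp$, and obtain $\calD_{\humu(n)}\sim\calD''+\Pinfm$, again contradicting disjointness from $\Pinfm$. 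Hence $\calD_{\humu(n+1)}$ contains neither $\Pinfp$ nor $\Pzp$, and since by hypothesis it meets $\{\Pinfpm,\Pzpm\}$, it must contain $\Pinfm$ or $\Pzm$. Statement $(iii)$ is the mirror image: reading $(\star)$ between $n-1$ and $n$, the identical cancellations (again using the $g^1_2$ relation) rule out $\Pinfm$ and $\Pzm$ in $\calD_{\humu(n-1)}$, leaving $\Pinfp$ or $\Pzp$.

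For $(ii)$ I would invoke the hyperelliptic characterization of special divisors: since $\calD_{\humu(n+1)}$ is special of degree $p$, it contains a conjugate pair $P+P^*$, and $P+P^*\sim\Pinfp+\Pinfm$. Substituting into $(\star)$ and cancelling $\Pinfp$ gives $\calD_{\humu(n)}\sim\calD_0+\Pinfm+\Pzm$ with $\calD_0\geq0$ of degree $p-2$; uniqueness of the nonspecial $\calD_{\humu(n)}$ then yields $\calD_{\humu(n)}=\calD_0+\Pinfm+\Pzm$, so $\calD_{\humu(n)}$ in fact contains both $\Pinfm$ and $\Pzm$, which is stronger than the asserted one point from each pair. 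Finally, $(iv)$ is immediate: by \eqref{AL3.64} the divisor $\calD_{\hunu(n)}$ obeys the very same Abel-map increment $\underline{A}_{\Pzm}(\Pinfp)$, so $(\star)$ holds verbatim with $\humu$ replaced by $\hunu$ and the arguments for $(i)$--$(iii)$ transcribe unchanged.

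The step I expect to be the main obstacle is the careful bookkeeping of multiplicities in the cancellations, i.e.\ justifying that ``$\calD_{\humu(n+1)}\ni P$'' permits subtracting a single copy of $P$ from both sides of $(\star)$ and that the resulting positive divisors still have degree $p$. This is exactly where the uniqueness-in-class property does the decisive work (it is valid precisely because $\deg\calD_{\humu(n)}$ equals the genus and $\calD_{\humu(n)}$ is nonspecial), and where one must be sure to apply the hyperelliptic pair characterization in the correct direction (special $\Rightarrow$ contains a pair).
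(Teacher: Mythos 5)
Your proof is correct and follows essentially the same route as the paper's: Abel's theorem combined with the uniqueness of a nonspecial positive divisor of degree $p$ (= genus) in its linear equivalence class, plus the hyperelliptic pair criterion for specialty in part $(ii)$ (and, like the paper, you actually obtain the stronger conclusion that $\calD_{\humu(n)}$ contains both $\Pinfm$ and $\Pzm$ there). The only cosmetic difference is that you encode the hyperelliptic involution through the principal divisor of $z$ (the $g^1_2$ relation $\Pzp+\Pzm\sim\Pinfp+\Pinfm$), whereas the paper chooses the base point $Q_0$ to be a branch point so that $\underline{A}_{Q_0}(P^*)=-\underline{A}_{Q_0}(P)$.
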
 
\begin{proof}
$(i)$ Suppose one point in $\calD_{\humu(n+1)}$ equals $P_{\infty_+}$ and 
denote the remaining ones by $\calD_{\tilde{\underline{\mu}}(n+1)}$. Then
\eqref{ALS4.2} implies that 
$\underline{\alpha}_{Q_0}(\calD_{\tilde{\underline{\mu}}(n+1)}) 
+ \underline{A}_{Q_0}(P_{\infty_+})
= \underline{\alpha}_{Q_0}(\calD_{\humu(n)}) + \underline{A}_{P_{0,-}}(P_{\infty_+})$. 
Since $\calD_{\humu(n)}$ is assumed to be nonspecial one concludes 
$\calD_{\humu(n)}=\calD_{\tilde{\underline{\mu}}(n+1)} + \calD_{P_{0,-}}$, contradicting
our assumption on $\calD_{\humu(n)}$. The statement for $P_{0,+}$ follows similarly;
here we choose $Q_0$ to be a branch point of $\calK_p$ such that 
$\underline{A}_{Q_0}(P^*)=-\underline{A}_{Q_0}(P)$. \\
$(ii)$ Next, we choose $Q_0$ to be a branch point of $\calK_p$. If 
$\calD_{\humu(n+1)}$ is special, then it contains a pair of points $(Q,Q^*)$
whose contribution will cancel under the Abel map, that is,
$\underline{\alpha}_{Q_0}(\calD_{\humu(n+1)})= 
\underline{\alpha}_{Q_0}(\calD_{{\underline{\hat \eta}}(n+1)})$
for some $\calD_{{\underline{\hat\eta}}(n+1)}\in \sym^{p-2} (\calK_p)$. 
Invoking \eqref{ALS4.2} then shows that 
$\underline{\alpha}_{Q_0}(\calD_{\humu(n)})=
\underline{\alpha}_{Q_0}(\calD_{{\underline{\hat\eta}}(n+1)}) + 
\underline{A}_{Q_0}(P_{\infty_-}) + \underline{A}_{Q_0}(P_{0,-})$. 
As $\calD_{\humu(n)}$ was assumed to be 
nonspecial, this shows that $\calD_{\humu(n)} =
\calD_{{\underline{\hat \eta}}(n+1)} + \calD_{P_{\infty_-}} + \calD_{P_{0,-}}$, as claimed.\\
$(iii)$ This is proved as in item $(i)$. \\
$(iv)$ Since $\calD_{\hunu(n)}$ satisfies the same equation as 
$\calD_{\humu(n)}$ in \eqref{ALS4.2} (cf.\ \eqref{AL3.64}), 
items $(i)$--$(iii)$ also hold for $\calD_{\hunu(n)}$.
\end{proof} 

We also note the following result: 

\begin{lemma} \lb{ALSl4.1a}
Let $n\in\bbZ$ and assume that $\calD_{\humu(n)}$ and $\calD_{\hunu(n)}$ are nonspecial. Then $\calD_{\humu(n)}$ contains $\Pzm$ if and only if $\calD_{\hunu(n)}$ contains $\Pinfm$. Moreover, $\calD_{\humu(n)}$ contains $\Pinfp$ if and only if $\calD_{\hunu(n)}$ contains $\Pzp$.
\end{lemma}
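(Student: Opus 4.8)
The plan is to exploit the relation \eqref{ALS3.65}, namely $\ual_{Q_0}(\calD_{\hunu(n)}) = \ual_{Q_0}(\calD_{\humu(n)}) + \underline{A}_{P_{0,-}}(\Pinfm)$, which links the Dirichlet and Neumann divisors under the Abel map. The key idea is that this single linear relation, combined with the nonspeciality of both divisors, forces the presence of a point in one divisor precisely when a companion point occurs in the other. First I would suppose that $\calD_{\humu(n)}$ contains $\Pzm$, split off this point, and write $\calD_{\humu(n)} = \calD_{\ti{\underline{\mu}}(n)} + \calD_{\Pzm}$ for some $\calD_{\ti{\underline{\mu}}(n)}\in\sym^{p-1}(\calK_p)$. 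Inserting this into \eqref{ALS3.65} yields
\begin{equation}
\ual_{Q_0}(\calD_{\hunu(n)}) = \ual_{Q_0}(\calD_{\ti{\underline{\mu}}(n)}) + \underline{A}_{Q_0}(\Pzm) + \underline{A}_{P_{0,-}}(\Pinfm).
\end{equation}

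The next step is to simplify the combination $\underline{A}_{Q_0}(\Pzm) + \underline{A}_{P_{0,-}}(\Pinfm)$. Here I would choose $Q_0$ to be a branch point of $\calK_p$, so that $\underline{A}_{Q_0}(P^*) = -\underline{A}_{Q_0}(P)$ and the Abel map behaves antisymmetrically under the hyperelliptic involution. Using the additivity $\underline{A}_{P_{0,-}}(\Pinfm) = \underline{A}_{Q_0}(\Pinfm) - \underline{A}_{Q_0}(\Pzm)$, the two $\Pzm$-contributions cancel and one is left with $\ual_{Q_0}(\calD_{\hunu(n)}) = \ual_{Q_0}(\calD_{\ti{\underline{\mu}}(n)}) + \underline{A}_{Q_0}(\Pinfm)$. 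Since $\calD_{\hunu(n)}$ is assumed nonspecial, this equality of Abel images forces the linear equivalence (and hence, by nonspeciality, the actual equality) $\calD_{\hunu(n)} = \calD_{\ti{\underline{\mu}}(n)} + \calD_{\Pinfm}$, which shows $\calD_{\hunu(n)}$ contains $\Pinfm$. The converse direction is obtained by reversing the roles: starting from a $\Pinfm$ in $\calD_{\hunu(n)}$ and using the nonspeciality of $\calD_{\humu(n)}$ to recover a $\Pzm$.

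For the second assertion I would proceed entirely analogously, now splitting off $\Pinfp$ from $\calD_{\humu(n)}$. The relevant simplification is the combination $\underline{A}_{Q_0}(\Pinfp) + \underline{A}_{P_{0,-}}(\Pinfm)$, and the goal is to rewrite it as $\underline{A}_{Q_0}(\Pzp)$ modulo the period lattice, so that nonspeciality of $\calD_{\hunu(n)}$ produces $\Pzp$ in that divisor. This is where the concrete geometry of $\calK_p$ enters: one needs the relation $\underline{A}_{P_{0,-}}(\Pinfm) + \underline{A}_{\Pinfp}(\Pzp)$ to vanish on the Jacobian, equivalently $\underline{A}_{Q_0}(\Pinfp) + \underline{A}_{Q_0}(\Pinfm) = \underline{A}_{Q_0}(\Pzp) + \underline{A}_{Q_0}(\Pzm)$, which reflects the linear equivalence $\calD_{\Pinfp} + \calD_{\Pinfm} \sim \calD_{\Pzp} + \calD_{\Pzm}$ coming from the function $z$ (or $1/z$) on the curve.

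The main obstacle I anticipate is verifying exactly this linear-equivalence relation among the four distinguished points $\Pinfpm, \Pzpm$ and keeping careful track of the sheet assignments, since the excerpt explicitly warns that $\Pzpm$ and $\Pinfpm$ need not lie on the same sheet (cf.\ the remark after \eqref{AL3.10}). The divisor of $z$ on $\calK_p$ has its zeros at $\Pzp, \Pzm$ and its poles at $\Pinfp, \Pinfm$, which gives the desired equivalence $\calD_{\Pzp} + \calD_{\Pzm} \sim \calD_{\Pinfp} + \calD_{\Pinfm}$ directly; translated through the Abel map this is precisely the identity needed to close the argument. Subtracting the first assertion's relation ($\Pzm \leftrightarrow \Pinfm$) from this equivalence isolates the correspondence $\Pinfp \leftrightarrow \Pzp$, so the two halves of the lemma are not independent but rather linked by the standard function $z$. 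Once this equivalence is in hand, the remaining manipulations are routine applications of nonspeciality exactly as in the proof of Lemma \ref{ALSl4.1}.
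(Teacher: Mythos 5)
Your proposal is correct and follows essentially the same route as the paper: split off the exceptional point, push it through the relation \eqref{ALS3.65} using additivity of the Abel map, and invoke nonspeciality to upgrade equality of Abel images to equality of divisors. The only (harmless) variation is in the second assertion, where you justify the identity $\underline{A}_{Q_0}(\Pinfp)+\underline{A}_{Q_0}(\Pinfm)=\underline{A}_{Q_0}(\Pzp)+\underline{A}_{Q_0}(\Pzm)$ via the principal divisor of the function $z$, whereas the paper's terse ``follows similarly'' points to the equivalent device of taking $Q_0$ a branch point so that $\underline{A}_{Q_0}(P^*)=-\underline{A}_{Q_0}(P)$; also note that for the first assertion the branch-point choice you mention is not actually needed, since only additivity enters.
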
 
\begin{proof}
Suppose a point in $\calD_{\humu(n)}$ equals $P_{0,-}$ and 
denote the remaining ones by $\calD_{\tilde{\underline{\mu}}(n)}$.
By \eqref{ALS3.65}, 
\begin{equation}
\underline{\alpha}_{Q_0}(\calD_{{\hunu}(n)}) 
= \underline{\alpha}_{Q_0}(\calD_{\tilde{\underline{\mu}}(n)}) 
+ \underline{A}_{Q_0}(P_{0,-}) + \underline{A}_{P_{0,-}}(P_{\infty_-})
= \underline{\alpha}_{Q_0}(\calD_{\tilde{\underline{\mu}}(n)}) 
+ \underline{A}_{Q_0}(P_{\infty_-}). 
\end{equation} 
Since $\calD_{\hunu(n)}$ is nonspecial, 
$\calD_{\hunu(n)}$ contains $P_{\infty_-}$, and vice versa.
The second statement follows similarly.
\end{proof} 

Let us call the points $P_{\infty_+}$, $P_{\infty_-}$, $\Pzp$, and $\Pzm$ 
{\it exceptional points}. Then Lemma \ref{ALSl4.1} yields the following behavior of 
$\calD_{\humu(n)}$ assuming one starts 
with some nonspecial initial divisor $\calD_{\humu(n_0)}$ without exceptional points:
As $n$ increases, $\calD_{\humu(n)}$ stays nonspecial as long as it does not include 
exceptional points. If an exceptional point appears, $\calD_{\humu(n)}$ is still nonspecial
and contains $\Pzm$ or $P_{\infty_-}$ at least once (but not $\Pzp$ and $P_{\infty_+}$).
Further increasing $n$, all instances of $\Pzm$ and $P_{\infty_-}$ will be rendered into
$\Pzp$ and $P_{\infty_+}$, until we have again a nonspecial divisor that has the same
number of $\Pzp$ and $P_{\infty_+}$ as the first one had of $\Pzm$ and $P_{\infty_-}$.
Generically, one expects the subsequent divisor to be nonspecial without exceptional points again.

Next we show that most initial divisors are well-behaved in the 
sense that their iterates stay away from $\Pinfpm$, $\Pzpm$. Since we 
want to show that this set is of full measure, it will be 
convenient to identify $\symq$ with the Jacobi variety $J(\calK_p)$
via the Abel map and take the Haar measure on $J(\calK_p)$.
Of course, the Abel map is only injective when restricted to the
set of nonspecial divisors, but these are the only ones we are interested in. 

\begin{lemma} \lb{ALSl4.2} 
The set $\calM_0\subset\symq$ of initial divisors $\calD_{\humu(n_0)}$
for which $\calD_{\humu(n)}$ and $\calD_{\hunu(n)}$, defined via
\eqref{ALS4.2} and \eqref{ALS4.2a}, are admissible $($i.e., do not contain the points 
$P_{\infty_\pm}, P_{0,\pm}$$)$ and hence are nonspecial for all $n\in\bbZ$, forms a dense set of full measure  in the set $\symq$ of positive divisors of degree $p$.  
\end{lemma}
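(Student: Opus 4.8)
The plan is to show that the ``bad'' set of initial divisors---those whose forward or backward iterates under \eqref{ALS4.2}, \eqref{ALS4.2a} ever hit an exceptional point---has measure zero, whence its complement $\calM_0$ has full measure (and density follows since a full-measure set in $J(\calK_p)$ is dense). The governing dynamics is linear on the Jacobi variety: by \eqref{ALS4.2} the image $\amap(\calD_{\humu(n)})$ is an arithmetic progression with common difference $\underline A_{\Pzm}(\Pinfp)$, and similarly for $\calD_{\hunu(n)}$ by \eqref{ALS4.2a}, differing only by the fixed shift $\ul A_{P_{0,-}}(\Pinfm)$. So the whole problem transports, via the Abel map, to the flat torus $J(\calK_p)=\bbC^p/L$ with translation dynamics.

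First I would fix the translation vector $\ul v := \underline A_{\Pzm}(\Pinfp)\in J(\calK_p)$ and characterize, on the torus, the condition ``$\calD_{\humu(n)}$ contains an exceptional point.'' A positive divisor of degree $p$ contains a prescribed point $P$ exactly when its Abel image lies in the set $\Theta_P := \{\amap(\calD_{P+\calD'}) : \calD'\in\sym^{p-1}(\calK_p)\}$, which is the translate by $\ul A_{Q_0}(P)$ of the image of $\sym^{p-1}(\calK_p)$ under the Abel map. The image of $\sym^{p-1}(\calK_p)$ is the classical theta divisor (up to the Riemann constant), a proper analytic subvariety of $J(\calK_p)$ of complex codimension one, hence real codimension two, and in particular of Haar measure zero. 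Thus for each of the four exceptional points $P\in\{\Pinfpm,\Pzpm\}$, the ``forbidden'' set $\Theta_P$ is a measure-zero (indeed, lower-dimensional analytic) subset of the torus.

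Next I would use the linear dynamics to express the bad set of initial data directly. The initial divisor $\calD_{\humu(n_0)}$ produces a forbidden $\calD_{\humu(n)}$ precisely when $\amap(\calD_{\humu(n_0)}) + (n-n_0)\ul v \in \bigcup_{P}\Theta_P$, i.e.\ when the initial image lies in the preimage $\bigcup_{P}(\Theta_P - (n-n_0)\ul v)$. The analogous condition for the Neumann divisors $\calD_{\hunu(n)}$ gives a further family of translates of the same four sets $\Theta_P$. The total bad set is therefore
\begin{equation}
\bigcup_{m\in\bbZ}\ \bigcup_{P\in\{\Pinfpm,\Pzpm\}}\Big[(\Theta_P - m\,\ul v)\ \cup\ (\Theta_P - \ul A_{P_{0,-}}(\Pinfm) - m\,\ul v)\Big],
\end{equation}
a countable union of translates of measure-zero sets. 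Since Haar measure is translation invariant and countably additive, each translate still has measure zero and the whole union does as well; its complement $\calM_0$ thus has full measure. Density of $\calM_0$ is then immediate because a full-measure subset of a torus is dense. Finally, the nonspeciality claim is essentially free: by construction the divisors in $\calM_0$ avoid all exceptional points for every $n\in\bbZ$, and Lemma \ref{ALSl4.1}$(ii)$ shows that a nonspecial divisor can only degenerate to a special one by first acquiring an exceptional point, so staying away from $\Pinfpm,\Pzpm$ forces $\calD_{\humu(n)}$ and $\calD_{\hunu(n)}$ to remain nonspecial for all $n$.

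The main obstacle I expect is not the measure-theoretic bookkeeping, which is routine once the linearization is in place, but making the geometric identification of ``contains $P$'' with a codimension-one analytic subset completely rigorous, and in particular handling the interplay between the Dirichlet and Neumann constraints so that one genuinely controls \emph{both} $\calD_{\humu(n)}$ and $\calD_{\hunu(n)}$ for all $n$ simultaneously. One must be careful that the identification of $\symq$ with $J(\calK_p)$ via the Abel map is only injective on nonspecial divisors, so the argument has to be phrased on the torus and then pulled back, using Lemma \ref{ALSl4.1} and Lemma \ref{ALSl4.1a} to guarantee that the special locus and the exceptional-point locus are themselves contained in the measure-zero bad set and therefore cause no trouble on $\calM_0$.
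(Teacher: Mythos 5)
Your proposal is correct and follows essentially the same route as the paper: transport everything to $J(\calK_p)$ via the Abel map, observe that the image of $\sym^{p-1}(\calK_p)$ (and hence of the exceptional and special loci) has Haar measure zero, and note that the bad set is a countable union of translates of this set under the linear dynamics, so its complement has full measure and is dense. The only cosmetic difference is that you justify the measure-zero statement by identifying the forbidden sets with translates of the theta divisor (complex codimension one), whereas the paper simply invokes Sard's theorem for the Abel image of $\sym^{p-1}(\calK_p)$; both arguments, including your use of Lemma \ref{ALSl4.1} to handle nonspeciality, are sound.
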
 
\begin{proof}
Let $\calM_{\infty,0}$ be the set of divisors in $\symq$ for which 
(at least) one point is equal to $P_{\infty_{\pm}}$ or $\Pzpm$. The image
$\underline{\alpha}_{Q_0}(\calM_{\infty,0})$ of $\calM_{\infty,0}$ is then contained in the following set, 
\begin{align}
\underline{\alpha}_{Q_0}(\calM_{\infty,0}) & \subseteq 
\bigcup_{P\in \{\Pzpm,P_{\infty_\pm}\}} \big(\underline{A}_{Q_0}(P) + 
\underline{\alpha}_{Q_0}(\sym^{p-1} (\calK_p))\big) \subset J(\calK_p). 
\end{align}
Since the (complex) dimension of $\sym^{p-1} (\calK_p)$ is 
$p-1$, its image must be of measure zero by Sard's theorem (see, e.g.,
\cite[Sect.\ 3.6]{AbrahamMarsdenRatiu:1988}). Similarly, let $\calM_{\rm sp}$ be the set of
special divisors, then its image is given by 
\begin{equation}
\underline{\alpha}_{Q_0}(\calM_{\rm sp}) =
\underline{\alpha}_{Q_0}(\sym^{p-2} (\calK_p)), 
\end{equation}
assuming $Q_0$ to be a branch point. In particular, we conclude that 
$\underline{\alpha}_{Q_0}(\calM_{\rm sp})\subset
\underline{\alpha}_{Q_0}(\calM_{\infty,0})$ and thus
$\underline{\alpha}_{Q_0}(\calM_{\rm sing})=
\underline{\alpha}_{Q_0}(\calM_{\infty,0})$ has measure zero, where
\begin{equation}
\calM_{\rm sing}=\calM_{\infty,0}\cup \calM_{\rm sp}. 
\end{equation}
Hence, 
\begin{equation}
\calS_\mu = \bigcup_{n\in\bbZ} \big(\underline{\alpha}_{Q_0}(\calM_{\rm sing}) 
+ n\underline{A}_{P_{0,-}}(P_{\infty_+}) \big) \quad\text{and}\quad
\calS_\nu = \calS_\mu + \ul{A}_{P_{0,-}}(\Pinfm) \lb{ALS4.5}
\end{equation}
are of measure zero as well. But the set $\calS_\mu\cup \calS_\nu$ contains all initial divisors for which $\calD_{\humu(n)}$ or $\calD_{\hunu(n)}$ will hit $\Pinfpm$ or 
$\Pzpm$, or become special at some $n\in\bbZ$. We denote by $\calM_0$ the inverse image of the complement of the set $\calS_\mu\cup \calS_\nu$ under the Abel map,
\begin{equation}
\calM_0= \underline{\alpha}_{Q_0}^{-1} \big(\symq \setminus (\calS_\mu\cup \calS_\nu)\big).
\end{equation}
Since $\calM_0$ is of full measure, it is automatically dense in $\symq$.
\end{proof} 

\medskip

Next, we describe the stationary Ablowitz--Ladik algorithm. Since this is a somewhat
lengthy affair, we will break it up into several steps.  

\smallskip

\noindent {\bf The Stationary Ablowitz--Ladik Algorithm:} 

We prescribe the following data\\
$(i)$ The coefficient $\alpha(n_0) \in \bbC \setminus\{0\}$ and the constant 
$c_{0,+}\in \bbC \setminus\{0\}$. \\
$(ii)$ The set
\begin{equation}
\{E_m\}_{m=0}^{2p+1} \subset\bbC\setminus\{0\}, \quad E_m\neq E_{m'} \, \text{ for } \, 
m\neq m', \quad m,m'=0,\dots,2p+1,   \lb{ALS4.8}
\end{equation}
for some fixed $p\in\bbN$. Given $\{E_m\}_{m=0}^{2p+1}$, we introduce 
the function 
\begin{equation} \label{R_2p}
R_{\ul p}(z) = \bigg(\frac{c_{0,+}}{2 z^{p_-}}\bigg)^2 \prod_{m=0}^{2p+1}(z-E_m)
\end{equation}
and the hyperelliptic curve $\calK_p$ with nonsingular affine part as in 
\eqref{ALcalK_p}. \\
$(iii)$ The nonspecial divisor 
\begin{equation}
\calD_{\humu(n_0)}\in \symq,   \lb{ALS4.9}
\end{equation}
where $\humu(n_0)$ is of the form
\begin{equation}
\humu(n_0) =\{\underbrace{\hat\mu_1(n_0),\dots,\hat\mu_1(n_0)}_{p_1(n_0) \text{
times}},
\dots,\underbrace{\hat\mu_{q(n_0)},\dots,
\hat\mu_{q(n_0)}}_{p_{q(n_0)}(n_0) \text{ times}}\}    \lb{ALS4.10}
\end{equation}
with 
\begin{equation}
\hat\mu_k(n_0)=(\mu_k(n_0),y(\hat\mu_k(n_0))), \quad  
\mu_k(n_0)\neq \mu_{k'}(n_0) \, \text{ for } \, k\neq k', \; 
k,k'=1,\dots,q(n_0),    \lb{ALS4.11}
\end{equation}
and
\begin{equation}
p_k(n_0)\in\bbN, \; k=1,\dots,q(n_0), \quad 
\sum_{k=1}^{q(n_0)} p_k(n_0) = p.   \lb{ALS4.12}
\end{equation}
With $\{E_m\}_{m=0}^{2p+1}$, $\calD_{\humu(n_0)}$, $\alpha(n_0)$, and $c_{0,+}$ prescribed, we next
introduce the following quantities (for $z\in\bbC\setminus\{0\}$):
\begin{align} \lb{ALS4.13}
\alpha^+(n_0) &= \alpha(n_0) \bigg(\prod_{m=0}^{2p+1}E_m\bigg)^{1/2} 
\prod_{k=1}^{q(n_0)}\mu_k(n_0)^{-p_k(n_0)}, \\  \lb{ALS4.14}
c_{0,-}^2 &= c_{0,+}^2 \prod_{m=0}^{2p+1}E_m, \\ \lb{ALS4.15}
F_{\ul p}(z,n_0) &= - c_{0,+}\alpha^+(n_0)z^{-p_-} \prod_{k=1}^{q(n_0)} (z-\mu_k(n_0))^{p_k(n_0)},\\ \label{ALS4.16}
G_{\ul p}(z,n_0) 
& = \frac{1}{2} \bigg(\frac{1}{\alpha(n_0)}-\frac{z}{\alpha^+(n_0)} \bigg) F_{\ul p}(z,n_0)\\ \no
&\hspace*{.35cm} -\frac{z}{2\alpha^+(n_0)}F_{\ul p}(z,n_0)
\sum_{k=1}^{q(n_0)}\sum_{\ell=0}^{p_k(n_0)-1} 
\f{\big(d^\ell
\big(\zeta^{-1} y(P)\big)/d\zeta^\ell\big)\big|_{P=(\zeta,\eta)
=\hat \mu_k(n_0)}}{\ell!(p_k(n_0)-\ell-1)!}\\ \no
& \hspace*{-.5cm} \times 
\Bigg(\f{d^{p_k(n_0)-\ell-1}}{d \zeta^{p_k(n_0)-\ell-1}}\Bigg(
(z-\zeta)^{-1}\prod_{k'=1, \, k'\neq k}^{q(n_0)} 
(\zeta-\mu_{k'}(n_0))^{-p_{k'}(n_0)} 
\Bigg)\Bigg)\Bigg|_{\zeta=\mu_k(n_0)}.
\end{align}
Here the sign of the square root is chosen according to \eqref{ALS4.11}. 

Next we record a series of facts: \\
\smallskip
$\textbf{(I)}$ By construction (cf.\ Lemma \ref{ALlB.1}),
\begin{align}
\begin{split}
& \f{d^\ell \big(G_{\ul p}(z,n_0)^2\big)}{dz^\ell}\bigg|_{z=\mu_k(n_0)}= 
\f{d^\ell R_{\ul p}(z)}{dz^\ell}\bigg|_{z=\mu_k(n_0)}, \\
& z\in\bbC\setminus\{0\}, \quad \ell=0,\dots,p_k(n_0)-1, \; k=1,\dots,q(n_0).  
\lb{ALS4.17}
\end{split}
\end{align}
$\textbf{(II)}$ Since $\calD_{\humu(n_0)}$ is nonspecial by hypothesis,
one concludes that
\begin{equation}
p_k(n_0)\geq 2 \, \text{ implies } \, R_{\ul p}(\mu_k(n_0))\neq 0, \quad 
k=1,\dots,q(n_0).    \lb{ALS4.18}
\end{equation}
$\textbf{(III)}$ By \eqref{ALS4.16} and \eqref{ALS4.17} one infers that $F_{\ul p}$
divides $G_{\ul p}^2-R_{\ul p}$. \\
$\textbf{(IV)}$ By \eqref{R_2p} and \eqref{ALS4.16} one verifies
that   
\begin{align} \lb{ALS4.20}
G_{\ul p}(z,n_0)^2-R_{\ul p}(z) &\underset{z\to \infty}{=} \Oh(z^{2p_+ -1}), \\  
G_{\ul p}(z,n_0)^2-R_{\ul p}(z) &\underset{z\to 0}{=} \Oh(z^{-2p_- +1}).  
\lb{ALS4.20z}
\end{align}
By $\textbf{(III)}$ and $\textbf{(IV)}$ we may write
\begin{equation}
G_{\ul p}(z,n_0)^2-R_{\ul p}(z)=F_{\ul p}(z,n_0) \check H_{q,r}(z,n_0), 
\quad z\in\bbC\setminus\{0\},   \lb{ALS4.20a}
\end{equation}
for some $q\in\{0,\dots,p_- -1\}$, $r\in\{0,\dots,p_+\}$, where 
$\check H_{q,r}(z,n_0)$ is a Laurent polynomial of the form 
$c_{-q}z^{-q}+\dots+c_{r}z^r$. 
If, in fact, $\check H_{0,0} =0$, then $R_{\ul p}(z)=G_{\ul p}(z,n_0)^2$ would
yield double zeros of $R_{\ul p}$, contradicting our basic hypothesis
\eqref{ALS4.8}. Thus we conclude that in the case $r=q=0$, $\check H_{0,0}$ 
cannot vanish identically and hence we may break up \eqref{ALS4.20a} in the
following manner 
\begin{equation}
\begin{split}
\check \phi(P,n_0)=\f{G_{\ul p}(z,n_0)+(c_{0,+}/2)z^{-p_-}y}{F_{\ul p}(z,n_0)}
=\f{\check H_{q,r}(z,n_0)}{G_{\ul p}(z,n_0)-(c_{0,+}/2)z^{-p_-}y}, \\
 P=(z,y)\in\calK_p. \lb{ALS4.20b}
\end{split}
\end{equation} 
Next we decompose
\begin{equation}
\check H_{q,r}(z,n_0) = C z^{-q}\prod_{j=1}^{r+q}(z-\nu_j(n_0)), 
\quad z \in \bbC\setminus\{0\},
\lb{ALS4.20c}
\end{equation} 
where $C \in\bbC\setminus\{0\}$ and
$\{\nu_j(n_0)\}_{j=1}^{r+q}\subset 
\bbC$ (if $r=q=0$ we replace the product in \eqref{ALS4.20c} by $1$). 
By inspection of the local zeros and poles as well
as the behavior near $P_{0,\pm}$, $\Pinfpm$ of the function $\check\phi (\dott,n_0)$ using \eqref{ALy_asymp},
its divisor, $\big(\check \phi(\dott,n_0) \big)$, is given by
\begin{equation}
\big(\check \phi(\dott,n_0) \big)=\calD_{P_{0,-} 
\hat{\underline{\nu}}(n_0)}-\calD_{\Pinfm\hat{\underline{\mu}}(n_0)},
\lb{ALS4.20d}
\end{equation}
where
\begin{equation}
\hat{\ul{\nu}}(n_0)=\{\underbrace{P_{0,-},\dots,P_{0,-}}_{p_- -1-q \text{ times}},
\hat{\nu}_1(n_0),\dots,
\hat{\nu}_{r+q}(n_0),\underbrace{\Pinfp,\dots,\Pinfp}_{p_+ -r 
\text{ times}}\}.  
\lb{ALS4.20e}
\end{equation}

In the following we call a positive divisor of degree $p$ {\it admissible}
if it does not contain any of the points $P_{\infty_\pm}, P_{0,\pm}$.

Hence, 
\begin{equation}
\calD_{\hat{\ul{\nu}}(n_0)} \, \text{ is an admissible divisor if
and only if } \, r=p_+ \, \text{ and } \, q=p_- -1.  \lb{ALS4.20f}
\end{equation}
We note that 
\begin{equation}
\ual_{Q_0} (\calD_{\hat{\ul{\nu}}(n_0)}) = \ual_{Q_0}
(\calD_{\humu (n_0)}) + \underline{A}_{P_{0,-}}
(\Pinfm), \lb{ALS4.20g} 
\end{equation}
in accordance with \eqref{ALS3.65}. \\
$\textbf{(V)}$ Assuming that \eqref{ALS4.20}, \eqref{ALS4.20z} are precisely of order
$z^{\pm(2p_\pm -1)}$, that is, assuming $r=p_+$ and $q=p_- -1$ in 
\eqref{ALS4.20a}, we rewrite \eqref{ALS4.20a}
in the more appropriate manner 
\begin{equation}
G_{\ul p}(z,n_0)^2-R_{\ul p}(z)=F_{\ul p}(z,n_0) H_{\ul p}(z,n_0), 
\quad z\in\bbC\setminus\{0\}.   \lb{ALS4.21}
\end{equation}
(We will later discuss conditions which indeed guarantee that $q=p_- -1$ and 
$r=p_+$, cf.\ \eqref{ALS4.20f} and the discussion in step $\textbf{(X)}$ below.)
By construction, $H_{\ul p}(\dott,n_0)$ is then of the type
\begin{align}
&H_{\ul p}(z,n_0)=c_{0,+}\beta(n_0) z^{-p_- +1}\prod_{k=1}^{\ell(n_0)} 
(z-\nu_k(n_0))^{s_k(n_0)}, 
\quad \sum_{k=1}^{\ell(n_0)} s_k(n_0) =p,  \no \\
& \hspace*{1.3cm} \nu_k(n_0) \neq \nu_{k'}(n_0) \, \text{ for } \, k\neq k', \; 
k, k' =1,\dots,\ell(n_0), \;\; z\in\bbC\setminus\{0\},  \lb{ALS4.22}
\end{align}
where we introduced the coefficient $\beta(n_0)$. We define 
\begin{equation}
\hat\nu_k(n_0)=(\nu_k(n_0),-(2/c_{0,+})\nu_k(n_0)^{p_-} G_{\ul p}(\nu_k(n_0),n_0)), \quad 
k=1,\dots,\ell(n_0).  \lb{ALS4.23}
\end{equation}
An explicit computation of $\beta(n_0)$ then yields
\begin{align}
\alpha^+(n_0)\beta(n_0) &= -\f{1}{2} \sum_{k=1}^{q(n_0)}
\f{\big(d^{p_k(n_0)-1} \big(\zeta^{-1}y(P)\big)/d \zeta^{p_k(n_0)-1}\big)\big|_{P=(\zeta,\eta)
=\hat\mu_k(n_0)}}{(p_k(n_0)-1)!}  \no \\
& \quad \times \prod_{k'=1, \, k'\neq k}^{q(n_0)}
(\mu_k(n_0)-\mu_{k'}(n_0))^{-p_k(n_0)}  \no \\
& \quad
+\f{1}{2}\bigg(\f{\alpha^+(n_0)}{\alpha(n_0)} + \sum_{k=1}^{q(n_0)}p_k(n_0)\mu_k(n_0) 
- \f{1}{2}\sum_{m=0}^{2p+1}E_m \bigg). \lb{ALS4.24a}
\end{align}
The result \eqref{ALS4.24a} is obtained by inserting the expressions \eqref{ALS4.15}, \eqref{ALS4.16}, and  \eqref{ALS4.22} 
for $F_{\ul p}(\dott,n_0)$, $G_{\ul p}(\dott,n_0)$, and $H_{\ul p}(\dott,n_0)$ into \eqref{ALS4.21}
and collecting all terms of order $z^{2p_+-1}$.  \\
$\textbf{(VI)}$ Introduce
\begin{equation} \lb{ALS4.24}
\beta^+(n_0) = \beta(n_0) \prod_{k=1}^{\ell(n_0)}\nu_k(n_0)^{s_k(n_0)} 
\bigg(\prod_{m=0}^{2p+1} E_m \bigg)^{-1/2}. 
\end{equation}
$\textbf{(VII)}$ 
Using $G_{\ul p}(z,n_0)$, $H_{\ul p}(z,n_0)$, $F_{\ul p}(z,n_0)$, $\beta(n_0)$, $\alpha^+(n_0)$, and $\beta^+(n_0)$,  we next construct the $n_0 \pm 1$ terms from the following equations:
\begin{align} \label{F^-}
F_{\ul p}^- &= \frac{1}{z\gamma} (\alpha^2 H_{\ul p} - 2\alpha G_{\ul p} + F_{\ul p}),\\ \lb{H^-}
H_{\ul p}^- &= \frac{z}{\gamma} (\beta^2 F_{\ul p} - 2\beta G_{\ul p} + H_{\ul p}),\\ \lb{G^-}
G_{\ul p}^- &= \frac{1}{\gamma} ((1+\alpha\beta) G_{\ul p} - \alpha H_{\ul p} - \beta F_{\ul p}),
\end{align}
respectively, 
\begin{align} \label{F^+}
F_{\ul p}^+ &= \frac{1}{z\gamma^+} ((\alpha^+)^2 H_{\ul p} + 2\alpha^+ z G_{\ul p} + z^2 F_{\ul p}),\\ \label{H^+}
H_{\ul p}^+ &= \frac{1}{z\gamma^+} ((\beta^+ z)^2 F_{\ul p} + 2\beta^+ z G_{\ul p} + H_{\ul p}),\\ \label{G^+}
G_{\ul p}^+ &= \frac{1}{z\gamma^+} ((1+\alpha^+\beta^+) z G_{\ul p} + \alpha^+ H_{\ul p} + \beta^+ z^2 F_{\ul p}).
\end{align}
Moreover,  
\begin{equation}
(G_{\ul p}^-)^2 - F_{\ul p}^- H_{\ul p}^-= R_{\ul p}, \quad (G_{\ul p}^+)^2 - F_{\ul p}^+ H_{\ul p}^+=R_{\ul p}.
\end{equation}
Inserting \eqref{ALS4.15}, \eqref{ALS4.16}, and \eqref{ALS4.22} in 
\eqref{F^-}--\eqref{G^-} one verifies
\begin{align} 
F_{\ul p}^-(z,n_0) &\underset{z\to \infty}{=} - c_{0,+}\alpha(n_0) z^{p_+ -1} 
+ \Oh(z^{p_+ -2}), \\  
H_{\ul p}^-(z,n_0) &\underset{z\to \infty}{=} \Oh(z^{p_+}), \\ 
F_{\ul p}^-(z,n_0) &\underset{z\to 0}{=} \Oh(z^{-p_-}), \\ 
H_{\ul p}^-(z,n_0) &\underset{z\to 0}{=} - c_{0,-}\beta(n_0) z^{-p_- +1} 
+ \Oh(z^{-p_- +2}), \\
G_{\ul p}^-(z,n_0) & = \tfrac{1}{2} c_{0,-}z^{-p_-} + \dots + \tfrac{1}{2} c_{0,+}z^{p_+}.
\lb{ALS4.25}
\end{align}
The last equation implies  
\begin{align} \lb{ALS4.26}
G_{\ul p}(z,n_0-1)^2-R_{\ul p}(z) &\underset{z\to \infty}{=} \Oh(z^{2p_+ -1}), \\  
\lb{ALS4.26a}
G_{\ul p}(z,n_0-1)^2-R_{\ul p}(z) &\underset{z\to 0}{=} \Oh(z^{-2p_- +1}),  
\end{align}
so we may write 
\begin{equation} \lb{ALS4.27}
G_{\ul p}(z,n_0-1)^2 - R_{\ul p}(z) = \check F_{s,p_+ -1}(z,n_0-1) 
\check H_{p_- -1,r}(z,n_0-1), \quad z \in \bbC\setminus\{0\}, 
\end{equation}
for some $s \in \{1, \dots, p_-\}$, $r \in \{1, \dots, p_+\}$, where 
\begin{align*}
\check F_{s,p_+ -1}(n_0-1)&= c_{-s}z^{-s} + \dots - c_{0,+}\alpha(n_0) z^{p_+ -1}, \\
\check H_{p_- -1,r}(n_0-1)&= - c_{0,-}\beta(n_0) z^{-p_- +1} + \dots + c_{r} z^r.
\end{align*}
The right-hand side of \eqref{ALS4.27} cannot vanish identically (since otherwise 
$R_{\ul p}(z)=G_{\ul p}(z,n_0-1)^2$ would yield double zeros of $R_{\ul p}(z)$), and hence, 
\begin{align} 
& \check \phi(P,n_0-1)=\f{G_{\ul p}(z,n_0-1)+(c_{0,+}/2)z^{-p_-}y}{\check F_{s,p_+ -1}(z,n_0-1)}
=\f{\check H_{p_- -1,r}(z,n_0-1)}{G_{\ul p}(z,n_0-1)-(c_{0,+}/2)z^{-p_-}y},   \no \\
& \hspace*{8.5cm} P=(z,y)\in\calK_p. \lb{ALS4.28}
\end{align}
Next, we decompose
\begin{align} \lb{ALS4.29}
\check F_{s,p_+ -1}(z,n_0-1) &= - c_{0,+}\alpha(n_0)z^{-s}\prod_{j=1}^{p_+ -1+s} 
(z-\mu_j(n_0-1)), \\
\check H_{p_- -1,r}(z,n_0-1) &= C z^{-p_- +1}\prod_{j=1}^{p_- -1+r}(z-\nu_j(n_0-1)), 
\end{align}
where $C \in \bbC \setminus\{0\}$ and 
$\{\mu_j(n_0-1)\}_{j=1}^{p_+ -1+s}\subset \bbC$, 
$\{\nu_j(n_0-1)\}_{j=1}^{p_- -1+r}\subset \bbC$.
The divisor of $\check \phi(\dott,n_0-1)$ is then given by
\begin{equation} \lb{ALS4.30}
\big(\check \phi(\dott,n_0-1) \big)=\calD_{P_{0,-} 
\hat{\underline{\nu}}(n_0-1)}-\calD_{\Pinfm\hat{\underline{\mu}}(n_0-1)},
\end{equation}
where
\begin{align} \lb{ALS4.31}
\hat{\ul{\mu}}(n_0-1)&=\{\underbrace{P_{0,+},\dots,P_{0,+}}_{p_- -s \text{ times}},
\hat{\mu}_1(n_0-1),\dots,
\hat{\mu}_{p_+ -1+s}(n_0-1)\}, \\  \lb{ALS4.31a}
\hat{\ul{\nu}}(n_0-1)&=\{\hat{\nu}_1(n_0-1),\dots,
\hat{\nu}_{p_- -1+r}(n_0-1),\underbrace{\Pinfp,\dots,\Pinfp}_{p_+ -r 
\text{ times}}\}.  
\end{align}
In particular, 
\begin{align} \lb{ALS4.32}
& \calD_{\hat{\ul{\mu}}(n_0-1)} \, \text{ is an admissible divisor if
and only if } \, s=p_-,  \\ \lb{ALS4.32a}
& \calD_{\hat{\ul{\nu}}(n_0-1)} \, \text{ is an admissible divisor if
and only if } \, r=p_+.
\end{align}
$\textbf{(VIII)}$ Assuming that \eqref{ALS4.26}, \eqref{ALS4.26a} are precisely of order $z^{\pm(2p_\pm -1)}$, that is, assuming $s=p_-$ and $r=p_+$ in 
\eqref{ALS4.28}, we rewrite \eqref{ALS4.28} as
\begin{equation} \lb{ALS4.33}
G_{\ul p}(z,n_0-1)^2-R_{\ul p}(z)=F_{\ul p}(z,n_0-1) H_{\ul p}(z,n_0-1),  
\quad z\in\bbC\setminus\{0\}.   
\end{equation}
By construction, $F_{\ul p}(\dott,n_0-1)$ and $H_{\ul p}(\dott,n_0-1)$ are then of the type
\begin{align}
& F_{\ul p}(z,n_0-1)=- c_{0,+}\alpha(n_0)z^{-p_-}\prod_{k=1}^{q(n_0-1)}(z-\mu_j(n_0-1))^{p_k(n_0-1)},   \no \\
& \sum_{k=1}^{q(n_0-1)} p_k(n_0-1) =p,   \lb{ALS4.34} \\
& \mu_k(n_0-1) \neq \mu_{k'}(n_0-1) \, \text{ for } \, k\neq k', \; 
k, k' =1,\dots,q(n_0-1), \;\; z\in\bbC\setminus\{0\},   \no \\
& H_{\ul p}(z,n_0-1)=c_{0,+}\beta(n_0-1) z^{-p_- +1}\prod_{k=1}^{\ell(n_0-1)} (z-\nu_k(n_0-1))^{s_k(n_0-1)},   \no \\
& \sum_{k=1}^{\ell(n_0-1)} s_k(n_0-1) =p,    \lb{ALS4.34a} \\
& \nu_k(n_0-1) \neq \nu_{k'}(n_0-1) \, \text{ for } \, k\neq k', \; 
k, k' =1,\dots,\ell(n_0-1), \;\; z\in\bbC\setminus\{0\},   \no 
\end{align}
where we introduced the coefficient $\beta(n_0-1)$. We define 
\begin{align}
\begin{split}
\hat\mu_k(n_0-1)=(\mu_k(n_0-1),(2/c_{0,+})\mu_k(n_0-1)^{p_-} 
G_{\ul p}(\mu_k(n_0-1),n_0-1)), \\ \quad 
k=1,\dots,q(n_0-1),  \lb{ALS4.35} \\
\hat\nu_k(n_0-1)=(\nu_k(n_0-1),-(2/c_{0,+})\nu_k(n_0-1)^{p_-} 
G_{\ul p}(\nu_k(n_0-1),n_0-1)), \\ \quad 
k=1,\dots,\ell(n_0-1).  
\end{split}
\end{align}
$\textbf{(IX)}$ At this point one can iterate the procedure step by step to
construct $F_{\ul p}(\dott,n)$, $G_{\ul p}(\dott,n)$, $H_{\ul p}(\dott,n)$, $\alpha(n)$, $\beta(n)$,
$\mu_j(n)$, $\nu_j(n)$, etc., for $n\in (-\infty,n_0]\cap\bbZ$,
subject to the following assumption (cf.\ \eqref{ALS4.32}, \eqref{ALS4.32a}) at each step:
\begin{align}
&\calD_{\humu(n-1)} \, \text{ is an admissible divisor (and hence   
$\alpha(n-1) \neq 0)$}    \lb{ALS4.61} \\
& \quad \text{for all $n\in(-\infty,n_0] \cap \bbZ$},   \no \\   
&\calD_{\hunu(n-1)} \, \text{ is an admissible divisor (and hence   
$\beta(n-1) \neq 0)$}  \lb{ALS4.61a} \\
& \quad \text{for all $n\in(-\infty,n_0] \cap \bbZ$}.   \no 
\end{align}
The formalism is symmetric with respect to $n_0$ and can equally well be
developed for $n\in (-\infty,n_0]\cap\bbZ$ subject to the analogous
assumption
\begin{align} 
&\calD_{\humu(n+1)} \, \text{ is an admissible divisor (and hence 
$\alpha(n+2) \neq 0)$}   \lb{ALS4.62} \\
& \quad \text{for all $n\in[n_0,\infty) \cap \bbZ$,}   \no \\ 
&\calD_{\hunu(n+1)} \, \text{ is an admissible divisor (and hence 
$\beta(n+2) \neq 0)$}   \lb{ALS4.62a} \\
& \quad \text{for all $n\in[n_0,\infty) \cap \bbZ$.}  \no 
\end{align}
$\textbf{(X)}$ Choosing the initial data $\calD_{\humu(n_0)}$ such that 
\begin{equation}
\calD_{\humu(n_0)}\in \calM_0,  \lb{ALS4.64}
\end{equation}
where $\calM_0\subset\symq$ is the set of admissible initial divisors introduced in
Lemma \ref{ALSl4.2}, then guarantees that assumptions 
\eqref{ALS4.61}--\eqref{ALS4.62a} are satisfied for all $n\in\bbZ$. 

Equations \eqref{F^-}--\eqref{G^+} (for arbitrary $n\in\bbZ$) are equivalent to 
$\sAL_{\ul p} (\alpha, \beta)=0$.

At this stage we have verified the basic hypotheses of Section \ref{s3} (i.e., 
 \eqref{ALneq 0,1} and the assumption that $\alpha, \beta$ satisfy the $\ul p$th stationary AL  system \eqref{ALstat}) and hence all results of Section \ref{s3} apply. 

In summary, we proved the following result:

\begin{theorem} \lb{ALSt4.3}
Let $n\in\bbZ$, suppose the set $\{E_m\}_{m=0}^{2p+1} \subset\bbC$ satisfies 
$E_m\neq E_{m'}$ for $m\neq m'$, $m,m'=0,\dots,2p+1$, and introduce the
function $R_{\ul p}$ and the hyperelliptic curve $\calK_p$ as in
\eqref{ALcalK_p}. Choose $\alpha(n_0)\in \bbC \setminus \{ 0\}$, 
$c_{0,+} \in \bbC \setminus \{ 0\}$, and 
a nonspecial divisor $\calD_{\humu(n_0)}\in
\calM_0$, where $\calM_0\subset\symq$ is the set of admissible initial divisors 
introduced in Lemma \ref{ALSl4.2}. Then the stationary $($complex$)$ 
Ablowitz--Ladik algorithm as outlined in steps \textbf{$($I\,$)$}--\textbf{$($X\,$)$}
produces solutions $\alpha, \beta$ of the $\ul p$th stationary Ablowitz--Ladik system,
\begin{align}
\begin{split}
& \sAL_{\ul p}(\alpha, \beta) = \begin{pmatrix} 
- \alpha(g_{p_+,+} + g_{p_-,-}^-) + f_{p_+ -1,+} - f_{p_- -1,-}^-\\  
\beta(g_{p_+,+}^- + g_{p_-,-}) + h_{p_+ -1,+}^- - h_{p_- -1,-}  \end{pmatrix}=0,  \\
& \hspace*{6.85cm}   \ul p=(p_-,p_+)\in\bbN_0^2, 
\end{split}
\end{align}
satisfying \eqref{ALneq 0,1} and 
\begin{align}
& \alpha(n) = \bigg(\prod_{m=0}^{2p+1}E_m\bigg)^{(n-n_0)/2} \calA(n,n_0) \, \alpha(n_0),  
 \lb{ALS4.65} \\ 
& \beta(n) = \Bigg(-\f{1}{2} \sum_{k=1}^{q(n)}
\f{\big(d^{p_k(n)-1} \big(\zeta^{-1}y(P)\big)/d \zeta^{p_k(n)-1}\big)\big|_{P=(\zeta,\eta)
=\hat\mu_k(n)}}{(p_k(n)-1)!}  \no \\
& \hspace*{2.8cm} \times \prod_{k'=1, \, k'\neq k}^{q(n)}
(\mu_k(n)-\mu_{k'}(n))^{-p_k(n)}  \no \\
& \quad +\f{1}{2}\bigg(\bigg(\prod_{m=0}^{2p+1}E_m\bigg)^{1/2} \, 
\prod_{k=1}^{q(n)}\mu_k(n)^{-p_k(n)} + \sum_{k=1}^{q(n)}p_k(n)\mu_k(n) 
- \f{1}{2}\sum_{m=0}^{2p+1}E_m \bigg)\Bigg)   \no \\ 
& \qquad \times \bigg(\prod_{m=0}^{2p+1}E_m\bigg)^{-(n+1-n_0)/2} 
\calA(n+1,n_0)^{-1} \, \alpha(n_0)^{-1},     \lb{ALS4.56}
\end{align}
where
\begin{equation}
\calA(n,n_0)=\begin{cases} \prod_{n'=n_0}^{n-1} 
\prod_{k=1}^{q(n')} \mu_k(n')^{-p_k(n')},  & n\geq n_0 +1, \\
1, & n=n_0, \\
\prod_{n'=n}^{n_0-1} \prod_{k=1}^{q(n')} \mu_k(n')^{p_k(n')},  & n\leq n_0 -1.  
\end{cases}  
\end{equation}
Moreover, Lemmas \ref{lAL3.1}--\ref{lAL3.4} apply.  
\end{theorem}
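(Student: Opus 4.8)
The plan is to verify that the construction carried out in steps $\textbf{(I)}$--$\textbf{(X)}$ is globally well defined and that its output solves $\sAL_{\ul p}(\alpha,\beta)=0$; the explicit formulas \eqref{ALS4.65}, \eqref{ALS4.56} then drop out of the trace relations. First I would check the base step: given the data of $\textbf{(I)}$--$\textbf{(III)}$, the Laurent polynomials $F_{\ul p}(\dott,n_0)$, $G_{\ul p}(\dott,n_0)$, $H_{\ul p}(\dott,n_0)$ in \eqref{ALS4.15}, \eqref{ALS4.16}, \eqref{ALS4.22} must have the shape required by \eqref{ALF_p}--\eqref{ALK_p}. The only nonroutine input here is that the interpolation ansatz \eqref{ALS4.16} for $G_{\ul p}$ forces the derivative-matching identity \eqref{ALS4.17}, which is precisely Lemma \ref{ALlB.1}; granting it, fact $\textbf{(III)}$ yields $F_{\ul p}\mid(G_{\ul p}^2-R_{\ul p})$, and the asymptotics \eqref{ALS4.20}, \eqref{ALS4.20z} bound the degree of the cofactor. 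The crucial point is the order conditions $q=p_- -1$, $r=p_+$ of \eqref{ALS4.20f} (and $s=p_-$, $r=p_+$ of \eqref{ALS4.32}, \eqref{ALS4.32a} for the backward shift): exactly when they hold does the cofactor $H_{\ul p}(\dott,n_0)$ have the correct degree, giving the factorization \eqref{ALS4.21}, and likewise at $n_0\pm1$ through \eqref{F^-}--\eqref{G^+}.

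The heart of the argument is that the iteration never degenerates. By $\textbf{(VII)}$--$\textbf{(IX)}$ the scheme closes up if and only if these order conditions recur at every step, equivalently if and only if $\calD_{\humu(n)}$ and $\calD_{\hunu(n)}$ stay admissible. The divisor computations \eqref{ALS4.20d}, \eqref{ALS4.30} identify these divisors with the orbit of \eqref{ALS4.2}, \eqref{ALS4.2a} under the Abel map, turning admissibility into a question about the linear flow on $J(\calK_p)$. I would then invoke Lemma \ref{ALSl4.1}, which controls how exceptional points may enter or leave and shows that speciality requires a full pair of them, together with Lemma \ref{ALSl4.2}: for $\calD_{\humu(n_0)}\in\calM_0$ the orbit avoids $\Pinfpm,\Pzpm$ and stays nonspecial for all $n\in\bbZ$. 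This is step $\textbf{(X)}$, and it guarantees both $\alpha(n)\beta(n)\notin\{0,1\}$ (i.e.\ \eqref{ALneq 0,1}) and that the degrees of $F_{\ul p},G_{\ul p},H_{\ul p}$ never drop, so that \eqref{F^-}--\eqref{G^+} genuinely hold for every $n$. I expect this preservation of admissibility to be the main obstacle: without the full-measure control of Lemma \ref{ALSl4.2} one cannot rule out a collision driving a divisor into an exceptional point and collapsing the factorization.

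It remains to identify the output as a stationary solution. Using $\det U=z\gamma$, one sees that \eqref{F^-}--\eqref{G^-} arise from solving the stationary zero-curvature relation \eqref{ALstatzc} (with $K_{\ul p}=G_{\ul p}$, cf.\ \eqref{ALK=G}) for the backward-shifted entries, and \eqref{F^+}--\eqref{G^+} for the forward shift; conversely these relations reproduce \eqref{ALstatzc}, which in Section \ref{s2} was shown equivalent to \eqref{AL2.50}, \eqref{AL2.51}, that is, to $\sAL_{\ul p}=0$. One also confirms that the algorithmic $F_{\ul p},G_{\ul p},H_{\ul p}$ coincide with those built from the recursion coefficients $f_{\ell,\pm},g_{\ell,\pm},h_{\ell,\pm}$: since $G_{\ul p}^2-F_{\ul p}H_{\ul p}=R_{\ul p}$ has the prescribed zeros $\{E_m\}$, the summation constants are fixed through \eqref{ALcell}, and the leading and trailing coefficients match by \eqref{ALS4.13}, \eqref{ALS4.14}.

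Finally, I would extract the closed formulas. Iterating the trace formula \eqref{ALtr1} produces \eqref{ALS4.65} for $\alpha(n)$, while combining the $\alpha^+\beta$ computation \eqref{ALS4.24a} with \eqref{ALtr2} and the definition of $\calA(n,n_0)$ yields \eqref{ALS4.56} for $\beta(n)$. Since \eqref{ALneq 0,1} and the $\ul p$th stationary Ablowitz--Ladik system are now in force, the basic hypotheses of Section \ref{s3} are verified, so Lemmas \ref{lAL3.1}--\ref{lAL3.4} apply directly.
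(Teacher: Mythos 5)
Your proposal is correct and follows essentially the same route as the paper: Theorem \ref{ALSt4.3} is stated there as a summary of steps \textbf{(I)}--\textbf{(X)}, whose justification rests on exactly the ingredients you cite --- the interpolation formula of Lemma \ref{ALlB.1} for \eqref{ALS4.17}, the admissibility and full-measure arguments of Lemmas \ref{ALSl4.1} and \ref{ALSl4.2} to keep the order conditions $q=p_- -1$, $r=p_+$, $s=p_-$ intact for all $n$, the equivalence of \eqref{F^-}--\eqref{G^+} with the stationary zero-curvature relation, and the identities \eqref{ALS4.13}, \eqref{ALS4.24a} iterated to produce \eqref{ALS4.65} and \eqref{ALS4.56}. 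The only cosmetic quibble is that \eqref{ALtr2} is not actually needed for \eqref{ALS4.56}, which follows from \eqref{ALS4.24a} divided by $\alpha^+(n)$ as given by \eqref{ALS4.65} at $n+1$.
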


Finally, we briefly illustrate some aspects of this analysis in the special case
$\ul p=(1,1)$ (i.e., the case where \eqref{ALcalK_p} represents an elliptic
Riemann surface) in more detail. 

\begin{example} \lb{ALSe4.3}  The case $\ul p=(1,1)$. \\ 
In this case one has
\begin{align}
F_{(1,1)}(z,n)&= - c_{0,+} \alpha(n+1) z^{-1} (z - \mu_1(n)), \no \\
G_{(1,1)}(z,n)&= \frac{1}{2}\bigg(\frac{1}{\alpha(n)} - \frac{z}{\alpha(n+1)}\bigg) 
F_{(1,1)}(z,n) + R_{(1,1)}(\hmu_1(n))^{1/2},  \\
R_{(1,1)}(z)&= \bigg(\frac{c_{0,+} \alpha^+}{z}\bigg)^2\prod_{m=0}^3 (z-E_m), \no
\end{align}
and hence a straightforward calculation shows that
\begin{align}
G_{(1,1)}(z,n)^2& - R_{(1,1)}(z) = - c_{0,+}^2 \alpha(n+1) 
\beta(n)z^{-1} (z-\mu_1(n))(z-\nu_1(n))  \no \\ 
&=  -\frac{c_{0,+}^2}{2 z} (z-\mu_1(n))\bigg( \bigg(-\frac{y(\hat\mu_1(n))}{\mu_1(n)}
+\frac{\ti E^{1/2}}{\mu_1(n)}+
\mu_1(n)-\frac{\hatt{E}^+}{2} \bigg) z  \\ 
& \quad  - \frac{\ti E}{\mu_1(n)}
\bigg(-\frac{1}{\ti E^{1/2}}\frac{y(\hat\mu_1(n))}{\mu_1(n)}
+\frac{\mu_1(n)}{\ti E^{1/2}}+
\frac{1}{\mu_1(n)}-\frac{\hatt{E}^-}{2}\bigg)\bigg),   \no 
\end{align}
where
\begin{equation}
\hatt{E}^\pm= \sum_{m=0}^3 E_m^{\pm 1}, \quad \ti{E} = \prod_{m=0}^3 E_m.
\end{equation}
Solving for $\nu_1(n)$ one then obtains
\begin{equation}
\nu_1(n) = \frac{\ti{E}}{\mu_1(n)} \frac{-\frac{y(\hat\mu_1(n))}{\mu_1(n)}
+\frac{\ti E^{1/2}}{\mu_1(n)}+
\mu_1(n)-\frac{\hatt{E}^+}{2}}{-\frac{1}{\ti E^{1/2}}\frac{y(\hat\mu_1(n))}{\mu_1(n)}
+\frac{\mu_1(n)}{\ti E^{1/2}}+\frac{1}{\mu_1(n)}-\frac{\hatt{E}^-}{2}}.
\end{equation}
Thus, $\nu_1(n_0)$ could be $0$ or $\infty$ even if $\mu_1(n_0)\ne 0,\infty$. 
\end{example} 

\section{Properties of Algebro-Geometric Solutions \\ of the  
Time-Dependent Ablowitz--Ladik Hierarchy} \lb{s5}  

In this section we present a quick review of properties of
algebro-geometric solutions of the time-dependent Ablowitz--Ladik hierarchy.
Again we omit all proofs and refer to \cite{GesztesyHolden:2005},  \cite{GesztesyHoldenMichorTeschl:2007a}, and \cite{GesztesyHoldenMichorTeschl:2007} 
for details.

For most of this section we assume the following hypothesis.

\begin{hypothesis} \label{hAL4.1}
$(i)$ Suppose that $\alpha, \beta$ satisfy
\begin{align}
\begin{split}
& \alpha(\dott,t), \beta(\dott,t)\in \bbC^\bbZ,\; t\in\bbR,
\quad 
\alpha(n,\dott), \, \beta(n,\dott) \in C^1(\bbR), \; n\in\bbZ,  \\
& \alpha(n,t)\beta(n,t)\notin\{0,1\}, \; (n,t)\in\bbZ\times\bbR.   \lb{AL4.1A}
\end{split}
\end{align}
$(ii)$  Assume that the hyperelliptic curve $\calK_p$ satisfies 
\eqref{ALcalK_p} and \eqref{ALEneqE}.
\end{hypothesis}

In order to briefly analyze algebro-geometric solutions of the
time-dependent Ablowitz--Ladik hierarchy we proceed as follows. Given 
$\ul p\in\bbN_0^2$,
consider a complex-valued solution $\alpha^{(0)}, \beta^{(0)}$ of the $\ul p$th stationary Ablowitz--Ladik system $\sAL_{\ul p}(a,b)= 0$, associated with $\calK_p$ and a given set of summation constants 
$\{c_{\ell,\pm}\}_{\ell=1,\dots,p_\pm}\subset\bbC$. Next, let
$\ul r\in\bbN_0^2$; we intend to consider solutions $\alpha=\alpha(t_{\ul r}), \beta=\beta(t_{\ul r})$ of the $\ul r$th AL  flow $\AL_{\ul r}(\alpha,\beta)=0$ with $\alpha(t_{0,\ul r})=\alpha^{(0)},
\beta(t_{0,\ul r})=\beta^{(0)}$ for some $t_{0,\ul r}\in\bbR$. To emphasize that the summation constants in the definitions of the stationary and the time-dependent Ablowitz--Ladik equations are independent of each other, we indicate
this by adding a tilde on all the time-dependent quantities. Hence we shall employ the notation $\ti V_{\ul r}$, $\ti F_{\ul r}$, $\ti G_{\ul r}$, $\ti H_{\ul r}$,  $\ti K_{\ul r}$, 
$\tilde f_{s,\pm}$,
$\tilde g_{s,\pm}$, $\tilde h_{s,\pm}$, $\tilde c_{s,\pm}$, in order to distinguish them from $V_{\ul p}$, $F_{\ul p}$, $G_{\ul p}$, $H_{\ul p}$, $K_{\ul p}$, $f_{\ell,\pm}$, 
$g_{\ell,\pm}$, 
$h_{\ell,\pm}$, $c_{\ell,\pm}$, in the following. In addition, we will follow a more elaborate notation inspired by Hirota's $\tau$-function approach and indicate the individual $\ul r$th Ablowitz--Ladik flow by  a separate time variable 
$t_{\ul r} \in \bbR$. 
More precisely, we will review properties of solutions $\alpha, \beta$ of the 
time-dependent algebro-geometric initial value problem 
\begin{align}
\begin{split}
& \ti \AL_{\ul r} (\alpha, \beta) =
\begin{pmatrix}-i\alpha_{t_{\ul r}} 
- \alpha(\tilde g_{r_+,+} + \tilde g_{r_-,-}^-) 
+ \tilde f_{r_+ -1,+} - \tilde f_{r_- -1,-}^-\\
-i\beta_{t_{\ul r}}+ \beta(\tilde g_{r_+,+}^- 
+ \tilde g_{r_-,-}) - \tilde h_{r_- -1,-} + \tilde h_{r_+ -1,+}^-   
  \end{pmatrix} = 0,  \lb{ALal_ivp}  \\
& (\alpha, \beta)\big|_{t_{\ul r}=t_{0,\ul r}} = \big(\alpha^{(0)}, \beta^{(0)}\big) ,  
\end{split} \\
& \sAL_{\ul p} \big(\alpha^{(0)}, \beta^{(0)}\big) =  \begin{pmatrix} 
-\alpha^{(0)}(g_{p_+,+} + g_{p_-,-}^-) + f_{p_+ -1,+} - f_{p_- -1,-}^-\\
  \beta^{(0)}(g_{p_+,+}^- + g_{p_-,-}) - h_{p_- -1,-} + h_{p_+ -1,+}^-       
\end{pmatrix}=0   \lb{AL4.3A}
\end{align}
for some $t_{0,\ul r}\in\bbR$, where $\alpha=\alpha(n,t_{\ul r})$, $\beta=\beta(n,t_{\ul r})$ satisfy \eqref{AL4.1A} and a fixed curve $\calK_p$ is associated
with the stationary solutions $\alpha^{(0)}, \beta^{(0)}$ in \eqref{AL4.3A}. Here,
\begin{equation}
\ul p=(p_-,p_+)\in\bbN_0^2\setminus\{(0,0)\}, \quad \ul r=(r_-,r_+)\in\bbN_0^2, 
\quad p=p_- + p_+ -1. 
\end{equation}
In terms of the zero-curvature formulation this amounts to solving 
\begin{align} 
U_{t_{\ul r}}(z,t_{\ul r}) + U(z,t_{\ul r}) \ti V_{\ul r}(z,t_{\ul r}) - \ti V_{\ul r}^+(z,t_{\ul r}) U(z,t_{\ul r}) &= 0, 
\label{ALzc tilde}  \\
U(z,t_{0,\ul r}) V_{\ul p}(z,t_{0,\ul r}) - V_{\ul p}^+(z,t_{0,\ul r}) U(z,t_{0,\ul r}) &= 0. 
\lb{ALzcstat}
\end{align}
One can show (cf.\ Lemma \ref{ALTl6.2}) that the stationary Ablowitz--Ladik system \eqref{ALzcstat} is actually satisfied for all times $t_{\ul r}\in\bbR$. Thus, we impose
\begin{align} 
U_{t_{\ul r}} + U \ti V_{\ul r} - \ti V_{\ul r}^+ U & = 0, 
\label{ALzctilde}  \\ 
U V_{\ul p} - V_{\ul p}^+ U & = 0,  \lb{ALzctstat}
\end{align}
instead of \eqref{ALzc tilde} and \eqref{ALzcstat}. 
For further reference, we recall the relevant quantities here 
(cf.\ \eqref{AL2.03}, \eqref{AL_v}, \eqref{ALF_p}--\eqref{ALK_p}, \eqref{ALK=G}):
\begin{align}
\begin{split}
U(z) &= \begin{pmatrix}
z & \alpha     \\
z \beta & 1\\
\end{pmatrix},    \\
V_{\ul p}(z) &= i  \begin{pmatrix}
G_{\ul p}^-(z) & - F_{\ul p}^-(z)     \\[1.5mm]
H_{\ul p}^-(z)  & - G_{\ul p}^-(z)  \\
\end{pmatrix}, \quad 
\ti V_{\ul r}(z) = i  \begin{pmatrix}
\ti G_{\ul r}^-(z)  & - \ti F_{\ul r}^-(z)     \\[1.5mm]
\ti H_{\ul r}^-(z) & - \ti K_{\ul r}^-(z)  \\
\end{pmatrix},   \lb{ALv_osv} 
\end{split}
\end{align}
and 
\begin{align}
F_{\ul p}(z) &= \sum_{\ell=1}^{p_-} f_{p_- -\ell,-} z^{-\ell}  
+ \sum_{\ell=0}^{p_+ -1} f_{p_+ -1-\ell,+} z^\ell 
=- c_{0,+}\alpha^+ z^{-p_-}\prod_{j=1}^{p}(z-\mu_j),  \no \\ \no
G_{\ul p}(z) &= \sum_{\ell=1}^{p_-} g_{p_- -\ell,-} z^{-\ell} 
+ \sum_{\ell=0}^{p_+} g_{p_+ -\ell,+} z^\ell,  \\ \no
H_{\ul p}(z) &= \sum_{\ell=0}^{p_- -1} h_{p_- -1-\ell,-} z^{-\ell} 
+ \sum_{\ell=1}^{p_+} h_{p_+ -\ell,+} z^\ell 
= c_{0,+}\beta z^{-p_- +1}\prod_{j=1}^{p}(z-\nu_j), \\
\ti F_{\ul r}(z) &= \sum_{s=1}^{r_-} \tilde f_{r_- -s,-} z^{-s}  
+ \sum_{s=0}^{r_+ -1} \tilde f_{r_+ -1-s,+} z^s,   \lb{AL4.9}  \\ \no
\ti G_{\ul r}(z) &= \sum_{s=1}^{r_-} \tilde g_{r_- -s,-} z^{-s}  
+ \sum_{s=0}^{r_+} \tilde g_{r_+ -s,+} z^s,\\ \no
\ti H_{\ul r}(z) &= \sum_{s=0}^{r_- -1} \tilde h_{r_- -1-s,-} z^{-s}  
+ \sum_{s=1}^{r_+} \tilde h_{r_+ -s,+} z^s, \\
\ti K_{\ul r}(z) &= \sum_{s=0}^{r_-} \tilde g_{r_- -s,-} z^{-s}  
+  \sum_{s=1}^{r_+} \tilde g_{r_+ -s,+} z^s 
= \ti G_{\ul r}(z)+\tilde g_{r_-,-}-\tilde g_{r_+,+}   \no 
\end{align}
for fixed $\ul p\in\bbN_0^2\setminus\{(0,0)\}$, $\ul r\in\bbN_0^2$. Here 
$f_{\ell,\pm}$, $\tilde f_{s,\pm}$, $g_{\ell,\pm}$, 
$\tilde g_{s,\pm}$, $h_{\ell,\pm}$, and $\tilde h_{s,\pm}$ are defined as in 
\eqref{AL0+}--\eqref{ALh_l-} with appropriate sets of summation constants 
$c_{\ell,\pm}$, $\ell\in\bbN_0$, and $\tilde c_{k,\pm}$, $k\in\bbN_0$. Explicitly, 
\eqref{ALzctilde} and \eqref{ALzctstat} are equivalent to 
(cf.\ \eqref{AL1,1}--\eqref{AL2,1},  \eqref{ALalphat}--\eqref{AL2,2r})
\begin{align}  \label{ALalpha_t}
\alpha_{t_{\ul r}} &= i \big(z \ti F_{\ul r}^- + \alpha (\ti G_{\ul r} + \ti{K}_{\ul r}^-)
- \ti F_{\ul r}\big),\\ \label{ALbeta_t}
\beta_{t_{\ul r}} &= - i \big(\beta (\ti G_{\ul r}^- + \ti{K}_{\ul r}) - \ti H_{\ul r} 
+ z^{-1} \ti H_{\ul r}^-\big), \\ \label{AL1,1 r}
0 &= z (\ti G_{\ul r}^- - \ti G_{\ul r}) + z\beta \ti F_{\ul r} + \alpha \ti H_{\ul r}^-,\\ \label{AL2,2 r}
0 &= z \beta \ti F_{\ul r}^- + \alpha \ti H_{\ul r} + \ti{K}_{\ul r}^- - \ti{K}_{\ul r}, \\
0 &= z (G_{\ul p}^- - G_{\ul p}) + z \beta F_{\ul p} + \alpha H_{\ul p}^-,  \lb{AL11} \\  
0 &=z \beta F_{\ul p}^- + \alpha H_{\ul p} - G_{\ul p} + G_{\ul p}^-,  \lb{AL12} \\
0 &= - F_{\ul p} + z F_{\ul p}^- + \alpha (G_{\ul p} + G_{\ul p}^-), \lb{AL21}  \\  
0 &= z \beta (G_{\ul p} + G_{\ul p}^-) - z H_{\ul p} + H_{\ul p}^-,  \lb{AL22}
\end{align}
respectively. In particular, \eqref{ALR} holds in the present $t_{\ul r}$-dependent setting, that is,  
\begin{equation} \label{ALR_t}
G_{\ul p}^2 - F_{\ul p} H_{\ul p} = R_{\ul p}.
\end{equation}

As in the stationary context \eqref{ALhmu}, \eqref{ALhnu} we introduce
\begin{align}
\begin{split}
\hat \mu_j(n,t_{\ul r})&=(\mu_j(n,t_{\ul r}), (2/c_{0,+}) \mu_j(n,t_{\ul r})^{p_-} G_{\ul p}(\mu_j(n,t_{\ul r}),n,t_{\ul r}))\in\calK_p, \\ 
& \hspace*{4.4cm} j=1, \dots, p, \; (n,t_{\ul r})\in\bbZ\times\bbR,   \lb{AL4.20}
\end{split}
\end{align}
and
\begin{align}
\begin{split}
\hat \nu_j(n,t_{\ul r})&=(\nu_j(n,t_{\ul r}), - (2/c_{0,+}) \nu_j(n,t_{\ul r})^{p_-} G_{\ul p}(\nu_j(n,t_{\ul r}),n,t_{\ul r}))\in\calK_p, \\
& \hspace*{4.6cm}  j=1, \dots, p, \; (n,t_{\ul r})\in\bbZ\times\bbR,   \lb{AL4.21}
\end{split}
\end{align}
and note that the regularity assumptions \eqref{AL4.1A} on $\alpha, \beta$ imply
continuity of $\mu_j$ and $\nu_k$ with respect to $t_{\ul r}\in\bbR$ (away from collisions of these zeros, $\mu_j$ and $\nu_k$ are of course $C^\infty$).  

In analogy to \eqref{ALphi}, \eqref{ALphi1}, one defines the following
meromorphic function $\phi (\dott,n,t_{\ul r})$ on $\calK_p$,
\begin{align}
\phi(P,n,t_{\ul r}) & = \frac{(c_{0,+}/2) z^{-p_-} y + G_{\ul p}(z,n,t_{\ul r})}{F_{\ul p}(z,n,t_{\ul r})}    \lb{AL4.22}  \\
& = \frac{-H_{\ul p}(z,n,t_{\ul r})}{(c_{0,+}/2) z^{-p_-} y - G_{\ul p}(z,n,t_{\ul r})},    \lb{AL4.23}  \\ 
&  \hspace*{-.05cm}  P=(z,y)\in\calK_p, \; (n,t_{\ul r})\in\bbZ\times\bbR,  \no
\end{align}
with divisor $(\phi(\dott,n,t_{\ul r}))$ of $\phi(\dott,n,t_{\ul r})$ given by 
\begin{equation} 
(\phi(\dott,n,t_{\ul r})) = \calD_{P_{0,-} \hunu(n,t_{\ul r})} - \calD_{\Pinfm \humu(n,t_{\ul r})}.  
\end{equation}
The time-dependent Baker--Akhiezer vector is then defined in terms of $\phi$ by 
\begin{align}
\Psi(P,n,n_0,t_{\ul r},t_{0,\ul r}) &= \begin{pmatrix}\psi_1(P,n,n_0,t_{\ul r},t_{0,\ul r})\\
\psi_2(P,n,n_0,t_{\ul r},t_{0,\ul r})\end{pmatrix}, \\ 
\psi_1(P,n,n_0,t_{\ul r},t_{0,\ul r}) &= \exp \Big(i \int_{t_{0,\ul r}}^{t_{\ul r}} ds 
\big(\ti G_{\ul r} (z,n_0,s) - \ti F_{\ul r}(z,n_0,s) \phi(P,n_0,s)\big)\Big) \\ \no
&\quad \times 
\begin{cases}
\prod_{n'=n_0+1}^{n} \big(z + \alpha(n',t_{\ul r}) \phi^-(P,n',t_{\ul r})\big), & n \geq n_0 +1,\\
1, & n=n_0, \\
\prod_{n'=n+1}^{n_0} \big(z + \alpha(n',t_{\ul r}) \phi^-(P,n',t_{\ul r})\big)^{-1}, & n \leq n_0 -1, 
\end{cases} 
\\
\psi_2(P,n,n_0,t_{\ul r},t_{0,\ul r}) &= \exp \Big(i \int_{t_{0,\ul r}}^{t_{\ul r}} ds 
\big(\ti G_{\ul r} (z,n_0,s) - \ti F_{\ul r}(z,n_0,s)  \phi(P,n_0,s)\big)\Big) \\ \no
&\hspace*{-2.5cm} \times \phi(P,n_0,t_{\ul r})
\begin{cases}
\prod_{n'=n_0+1}^{n} \big(z \beta(n',t_{\ul r})\phi^-(P,n',t_{\ul r})^{-1} + 1\big), 
& n \geq n_0 +1,\\
1, & n=n_0, \\
\prod_{n'=n+1}^{n_0} \big(z \beta(n',t_{\ul r})\phi^-(P,n',t_{\ul r})^{-1} + 1\big)^{-1}, 
& n \leq n_0 -1, 
\end{cases} \no \\ \no
&\hspace*{-.4cm} P=(z,y)\in\calK_p\setminus\{\Pinfp,\Pinfm,\Pzp,\Pzm\}, 
\; (n,t_{\ul r})\in\bbZ\times\bbR.
\end{align}
One observes that
\begin{align} 
\begin{split}
& \psi_1(P,n,n_0,t_{\ul r},\tilde t_{\ul r}) = \psi_1(P,n_0,n_0,t_{\ul r},\tilde t_{\ul r})\psi_1(P,n,n_0,t_{\ul r},t_{\ul r}), \label{ALpsi t n_0}  \\
& P=(z,y)\in\calK_p\setminus
\{\Pinfp,\Pinfm,\Pzp,\Pzm\}, \; (n,n_0,t_{\ul r},\tilde t_{\ul r})\in\bbZ^2\times\bbR^2.
\end{split}
\end{align}

The following lemma records basic properties of $\phi$ and $\Psi$ in analogy to the stationary case discussed in Lemma \ref{lAL3.1}. 

\begin{lemma} [\cite{GesztesyHoldenMichorTeschl:2007a}]  \lb{lAL4.2}
Assume Hypothesis \ref{hAL4.1} and suppose that 
\eqref{ALzctilde}, \eqref{ALzctstat} hold. In addition, let
$P=(z,y)\in\calK_p\setminus\{\Pinfp, \Pinfm\}$, 
$(n,n_0,t_{\ul r},t_{0,\ul r})\in\bbZ^2\times\bbR^2$. Then $\phi$ satisfies 
\begin{align} \label{ALriccati_time} 
& \alpha \phi(P)\phi^-(P) - \phi^-(P) + z \phi(P) = z \beta, \\
& \phi_{t_{\ul r}}(P) =i \ti F_{\ul r} \phi^2(P) - 
i \big(\ti G_{\ul r}(z) + \ti K_{\ul r}(z)\big)\phi(P) +  i \ti H_{\ul r}(z),    \label{ALphi_t}  \\
  \label{ALphi 1t}
& \phi(P) \phi(P^*) = \frac{H_{\ul p}(z)}{F_{\ul p}(z)},\\ \label{ALphi 2t}
& \phi(P) + \phi(P^*) = 2\frac{G_{\ul p}(z)}{F_{\ul p}(z)},\\ \label{ALphi 3t}
& \phi(P) - \phi(P^*) = c_{0,+} z^{-p_-} \frac{y(P)}{F_{\ul p}(z)}.
\end{align}
Moreover, assuming  
$P=(z,y)\in\calK_p\setminus\{\Pinfp, \Pinfm,\Pzp,\Pzm\}$, then 
$\Psi$ satisfies
\begin{align}  \label{ALpsi t 1}
& \psi_2(P,n,n_0,t_{\ul r},t_{0,\ul r})= \phi(P,n,t_{\ul r}) \psi_1(P,n,n_0,t_{\ul r},t_{0,\ul r}),\\ 
\label{ALpsi 2t}
& U(z) \Psi^-(P)=\Psi(P),\\ \label{ALpsi 3t}
& V_{\ul p}(z)\Psi^-(P)= - (i/2) c_{0,+} z^{-p_-} y \Psi^-(P), \\
& \Psi_{t_{\ul r}}(P) = \ti V_{\ul r}^+(z) \Psi(P),   
\lb{ALtime_AL} \\
& \psi_1(P,n,n_0,t_{\ul r},t_{0,\ul r}) \psi_1(P^*,n,n_0,t_{\ul r},t_{0,\ul r}) = z^{n-n_0} \frac{F_{\ul p}(z,n,t_{\ul r})}{F_{\ul p}(z,n_0,t_{0,\ul r})}  \pgam(n,n_0,t_{\ul r}), \label{ALpsi 4t}\\ 
& \psi_2(P,n,n_0,t_{\ul r},t_{0,\ul r}) \psi_2(P^*,n,n_0,t_{\ul r},t_{0,\ul r}) 
= z^{n-n_0} \frac{H_{\ul p}(z,n,t_{\ul r})}{F_{\ul p}(z,n_0,t_{0,\ul r})}  \pgam(n,n_0,t_{\ul r}),
\label{ALpsi 5t} \\
& \psi_1(P,n,n_0,t_{\ul r},t_{0,\ul r}) \psi_2(P^*,n,n_0,t_{\ul r},t_{0,\ul r}) 
+\psi_1(P^*,n,n_0,t_{\ul r},t_{0,\ul r}) \psi_2(P,n,n_0,t_{\ul r},t_{0,\ul r}) \no \\
&\quad =2 z^{n-n_0} \frac{G_{\ul p}(z,n,t_{\ul r})}{F_{\ul p}(z,n_0,t_{0,\ul r})}
\pgam(n,n_0,t_{\ul r}), \label{ALpsi 6t} \\
& \psi_1(P,n,n_0,t_{\ul r},t_{0,\ul r}) \psi_2(P^*,n,n_0,t_{\ul r},t_{0,\ul r}) 
-\psi_1(P^*,n,n_0,t_{\ul r},t_{0,\ul r}) \psi_2(P,n,n_0,t_{\ul r},t_{0,\ul r}) \no\\
& \quad =-c_{0,+} z^{n-n_0-p_-} \frac{y}{F_{\ul p}(z,n_0,t_{0,\ul r})} \pgam(n,n_0,t_{\ul r}), 
\label{ALpsi 7t}
\end{align}
where  
\begin{equation} \lb{ALTpgam}
\pgam(n,n_0,t_{\ul r}) = \begin{cases}      
\prod_{n'=n_0 + 1}^n \gamma(n',t_{\ul r}) & n \geq n_0 +1, \\
1                      &  n=n_0, \\
\prod_{n'=n + 1}^{n_0} \gamma(n',t_{\ul r})^{-1}  & n \leq n_0 -1.
\end{cases}
\end{equation}
In addition, as long as the zeros $\mu_j(n_0,s)$ of 
$(\dott)^{p_-}F_{\ul p} (\dott,n_0,s)$ are all  simple and distinct from zero for 
$s \in\calI_{\mu}$, $\calI_{\mu}\subseteq\bbR$ an open interval, 
$\Psi(\dott,n,n_0,t_{\ul r},t_{0,\ul r})$ is meromorphic on $\calK_p\setminus \{\Pinfp,\Pinfm,$ $\Pzp,\Pzm\}$ for $(n,t_{\ul r},t_{0,\ul r})\in\bbZ\times\calI_{\mu}^2$.
\end{lemma}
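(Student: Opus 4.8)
The plan is to split the lemma into four groups and to exploit that, at each fixed $t_{\ul r}$, the spatial structure is identical to the stationary one treated in Lemma~\ref{lAL3.1}. The three quadratic relations \eqref{ALphi 1t}--\eqref{ALphi 3t} are purely algebraic: inserting the two representations \eqref{AL4.22}, \eqref{AL4.23} and using $y(P^*)=-y(P)$ gives $\phi(P)\phi(P^*)=(G_{\ul p}^2-(c_{0,+}^2/4)z^{-2p_-}y^2)/F_{\ul p}^2$, and since $y^2=4c_{0,+}^{-2}z^{2p_-}R_{\ul p}$ by \eqref{ALcalK_p} this equals $(G_{\ul p}^2-R_{\ul p})/F_{\ul p}^2=H_{\ul p}/F_{\ul p}$ by \eqref{ALR_t}; the sum and difference are immediate. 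The Riccati relation \eqref{ALriccati_time}, the identity $\psi_2=\phi\psi_1$ in \eqref{ALpsi t 1}, and the eigenvalue relations \eqref{ALpsi 2t}, \eqref{ALpsi 3t} involve only fixed-$t_{\ul r}$ data and the spatial zero-curvature relations \eqref{AL11}--\eqref{AL22}, so their proofs are verbatim copies of the stationary arguments; one only observes that the $n$-independent exponential prefactor of $\psi_1,\psi_2$ is common to a function and its backward shift $S^-$ and therefore passes unchanged through every spatial recursion and telescoping product.

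The genuinely new content is the $t_{\ul r}$-evolution. For \eqref{ALphi_t} I would differentiate $\phi=((c_{0,+}/2)z^{-p_-}y+G_{\ul p})/F_{\ul p}$ at fixed $P$, obtaining $\phi_{t_{\ul r}}=G_{\ul p,t_{\ul r}}/F_{\ul p}-\phi\,F_{\ul p,t_{\ul r}}/F_{\ul p}$, and then substitute the $t_{\ul r}$-derivatives of $F_{\ul p},G_{\ul p},H_{\ul p}$ extracted from the zero-curvature relation \eqref{ALzctilde}; re-expressing the result through \eqref{ALphi 1t}--\eqref{ALphi 3t} collapses it to $i\ti F_{\ul r}\phi^2-i(\ti G_{\ul r}+\ti K_{\ul r})\phi+i\ti H_{\ul r}$. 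The evolution \eqref{ALtime_AL} of $\Psi$ is then checked by induction on $n$: the prefactor was designed so that its $t_{\ul r}$-derivative contributes $i(\ti G_{\ul r}-\ti F_{\ul r}\phi(\dott,n_0))\Psi$, and the remaining entries of $\ti V_{\ul r}^+$ are produced by differentiating the telescoping product, where \eqref{ALphi_t}, the Riccati relation \eqref{ALriccati_time}, and $\psi_2=\phi\psi_1$ are exactly what closes the induction step.

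For the bilinear identities \eqref{ALpsi 4t}--\eqref{ALpsi 7t} I would factor each $\psi_i$ as prefactor times spatial product. The spatial products reproduce the stationary expressions $z^{n-n_0}F_{\ul p}(z,n,t_{\ul r})/F_{\ul p}(z,n_0,t_{\ul r})\,\pgam(n,n_0,t_{\ul r})$ and their analogues, now carrying $F_{\ul p}$ and $\pgam$ at time $t_{\ul r}$. The product of the two prefactors $E(P)E(P^*)$ reduces via \eqref{ALphi 2t} to $\exp\big(2i\int_{t_{0,\ul r}}^{t_{\ul r}}ds\,(\ti G_{\ul r}-\ti F_{\ul r}G_{\ul p}/F_{\ul p})(z,n_0,s)\big)$, which by the $t_{\ul r}$-evolution of $F_{\ul p}$ established above equals $F_{\ul p}(z,n_0,t_{\ul r})/F_{\ul p}(z,n_0,t_{0,\ul r})$; multiplying, the time-$t_{\ul r}$ denominators cancel and leave exactly the stated right-hand sides with $F_{\ul p}(z,n_0,t_{0,\ul r})$.

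The last assertion, meromorphicity of $\Psi$, is the main obstacle and is where simplicity of the $\mu_j(n_0,s)$ enters. The prefactor $\exp\big(i\int_{t_{0,\ul r}}^{t_{\ul r}}ds\,(\ti G_{\ul r}-\ti F_{\ul r}\phi)(P,n_0,s)\big)$ has an integrand with simple poles at the moving points $\hat\mu_j(n_0,s)$, and exponentiating the integral of a function with poles would in general create essential singularities. The point is to show these are spurious: near $z=\mu_j(n_0,s)$ the residue of $\phi(\dott,n_0,s)$ is proportional to $y(\hat\mu_j)/\prod_{k\ne j}(\mu_j-\mu_k)$, and the time-dependent Dubrovin equation for $\mu_j(n_0,s)$ expresses $\mu_{j,s}(n_0)$ through $\ti F_{\ul r}(\mu_j,n_0,s)$ times precisely this residue, so that the singular part of $-i\ti F_{\ul r}\phi$ integrates to $\partial_s\ln(z-\mu_j(n_0,s))$ and contributes the genuinely meromorphic factor $(z-\mu_j(n_0,t_{0,\ul r}))/(z-\mu_j(n_0,t_{\ul r}))$; summing over $j$ recovers $F_{\ul p}(z,n_0,t_{0,\ul r})/F_{\ul p}(z,n_0,t_{\ul r})$. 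Simplicity, non-collision, and non-vanishing of the $\mu_j(n_0,s)$ on $\calI_\mu$ are exactly the hypotheses making this residue bookkeeping legitimate. The only surviving essential singularities stem from the Laurent-polynomial parts of $\ti G_{\ul r},\ti F_{\ul r}$ together with the asymptotics \eqref{ALy_asymp} of $y$ at $\Pinfpm,\Pzpm$, so $\Psi(\dott,n,n_0,t_{\ul r},t_{0,\ul r})$ is meromorphic precisely on $\calK_p\setminus\{\Pinfp,\Pinfm,\Pzp,\Pzm\}$ for $(n,t_{\ul r},t_{0,\ul r})\in\bbZ\times\calI_\mu^2$. I expect the residue matching against the Dubrovin equations to be the delicate step, since it is where the meromorphic analysis meets the algebraic recursion.
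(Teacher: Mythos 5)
The paper itself gives no proof of this lemma (it is imported verbatim from \cite{GesztesyHoldenMichorTeschl:2007a}), so I can only test your outline against the surrounding statements. Most of your architecture is the standard one: the algebraic identities, the spatial relations, and the induction for \eqref{ALtime_AL} are fine, and your route to \eqref{ALphi_t} (differentiate \eqref{AL4.22} and insert the time derivatives of $F_{\ul p},G_{\ul p}$) is exactly the route the authors themselves take later, in the proof of Lemma \ref{ALTl6.3}. Be aware, though, that \eqref{ALF_t}--\eqref{ALH_t} are not ``extracted from'' \eqref{ALzctilde} by inspection: they are the content of Lemma \ref{lAL4.3}, which requires its own derivation (in this paper it is effectively redone via the coefficient recursions \eqref{ALT6.19}--\eqref{ALT6.36}); citing it is legitimate, but your phrasing hides a nontrivial step and inverts the paper's logical order.

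The step that does not close as written is the prefactor identification in \eqref{ALpsi 4t}--\eqref{ALpsi 7t}. By \eqref{ALphi 2t} the exponent of $E(P)E(P^*)$ is $2i\int_{t_{0,\ul r}}^{t_{\ul r}}ds\,\big(\ti G_{\ul r}-\ti F_{\ul r}G_{\ul p}/F_{\ul p}\big)(z,n_0,s)$, whereas \eqref{ALF_t} gives $\partial_s\ln F_{\ul p}(z,n_0,s)=i\big(\ti G_{\ul r}+\ti K_{\ul r}\big)-2i\ti F_{\ul r}G_{\ul p}/F_{\ul p}$. The two differ by $i\int_{t_{0,\ul r}}^{t_{\ul r}}ds\,\big(\ti G_{\ul r}-\ti K_{\ul r}\big)(z,n_0,s)=i\int_{t_{0,\ul r}}^{t_{\ul r}}ds\,\big(\tilde g_{r_+,+}-\tilde g_{r_-,-}\big)(n_0,s)$, and the paper explicitly warns that $\ti K_{\ul r}=\ti G_{\ul r}$ ``ceases to be valid in the time-dependent context.'' Concretely, for $\ul r=(0,0)$ one has $\ti F_{(0,0)}=\ti H_{(0,0)}=0$, $\ti G_{(0,0)}=\tilde c_{0,+}/2$, $\ti K_{(0,0)}=\tilde c_{0,-}/2$, so $E(P)E(P^*)=\exp\big(i\tilde c_{0,+}(t_{\ul r}-t_{0,\ul r})\big)$ while $F_{\ul p}(z,n_0,t_{\ul r})/F_{\ul p}(z,n_0,t_{0,\ul r})=\exp\big(i(\tilde c_{0,+}+\tilde c_{0,-})(t_{\ul r}-t_{0,\ul r})/2\big)$; these agree only if $\tilde c_{0,+}=\tilde c_{0,-}$. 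You must either show that $(\tilde g_{r_+,+}-\tilde g_{r_-,-})(n_0,\cdot)$ vanishes under a suitable normalization of the summation constants at the reference site, or carry the trace factor through -- as written your argument silently assumes $\ti K_{\ul r}=\ti G_{\ul r}$. Two smaller points: in the meromorphy argument the singular part of $i(\ti G_{\ul r}-\ti F_{\ul r}\phi)$ near $\hat\mu_j(n_0,s)$ integrates to $\ln\big((z-\mu_j(n_0,t_{\ul r}))/(z-\mu_j(n_0,t_{0,\ul r}))\big)$, the reciprocal of the factor you wrote (harmless for meromorphy, but fix the bookkeeping); and the residue-versus-Dubrovin matching you correctly identify as the delicate step is indeed the standard mechanism here.
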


The trace formulas recorded in Lemma \ref{lAL3.2} extend to the present 
time-dependent context without any change as $t_{\ul r}\in\bbR$ can be viewed as a fixed parameter. Further details are thus omitted. 

For completeness we next mention the Dubrovin-type equations for the time variation of the zeros $\mu_j$ of $(\dott)^{p_-}F_{\ul p}$ and $\nu_j$ of 
$(\dott)^{p_- -1}H_{\ul p}$ governed by the $\ti{\AL}_{\ul r}$ flow. 

\begin{lemma} [\cite{GesztesyHoldenMichorTeschl:2007a}]  \lb{lAL4.5}
Assume Hypothesis \ref{hAL4.1} and suppose that \eqref{ALzctilde}, 
\eqref{ALzctstat} hold on $\bbZ\times \calI_{\mu}$ with $\calI_{\mu} \subseteq\bbR$ an open interval. In addition, assume that the
zeros $\mu_j$, $j=1,\dots,p$, of $(\dott)^{p_-}F_{\ul p}(\dott)$ remain distinct and nonzero on 
$\bbZ\times \calI_{\mu}$. Then $\{\hmu_j\}_{j=1,\dots,p}$, defined in \eqref{AL4.20}, satisfies the following first-order system of differential equations on 
$\bbZ\times \calI_{\mu}$, 
\begin{equation}
\mu_{j,t_{\ul r}} = - i \ti F_{\ul r}(\mu_j)  y(\hmu_j) (\alpha^+)^{-1} 
\prod_{\substack{k=1\\k\neq j}}^{p} (\mu_j-\mu_k)^{-1},  \quad 
j=1,\dots,p,   \lb{AL4.69a}
\end{equation}
with 
\begin{equation}
\hmu_j(n,\cdot)\in C^\infty(\calI_\mu,\calK_p), \quad j=1,\dots,p, 
\; n\in\bbZ.   \lb{AL4.69b}
\end{equation} 
For the zeros $\nu_j$, $j=1,\dots,p$, of $(\dott)^{p_- -1}H_{\ul p}(\dott)$, identical statements hold with 
$\mu_j$ and $\calI_{\mu}$ replaced by $\nu_j$ and $\calI_{\nu}$, etc.\ $($with 
$\calI_{\nu} \subseteq\bbR$ an open interval\,$)$. In particular, 
$\{\hat \nu_j\}_{j=1,\dots,p}$, defined in \eqref{AL4.21}, satisfies the first-order 
system on $\bbZ\times \calI_{\nu}$, 
\begin{equation}
\nu_{j,t_{\ul r}} = i  \ti H_{\ul r}(\nu_j) y(\hnu_j) ( \beta \nu_j)^{-1} 
\prod_{\substack{k=1\\k\neq j}}^{p} (\nu_j-\nu_k)^{-1},  \quad 
j=1,\dots,p,   \lb{AL4.71a}
\end{equation}
with 
\begin{equation}
\hnu_j(n,\cdot)\in C^\infty(\calI_\nu,\calK_p), \quad j=1,\dots,p, 
\; n\in\bbZ.   \lb{AL4.71b}
\end{equation} 
\end{lemma}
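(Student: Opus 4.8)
The plan is to reduce both systems to implicit differentiation of the defining relations
$F_{\ul p}(\mu_j(n,t_{\ul r}),n,t_{\ul r})=0$ and $H_{\ul p}(\nu_j(n,t_{\ul r}),n,t_{\ul r})=0$, which hold since $\mu_j,\nu_j$ are nonzero zeros of $(\dott)^{p_-}F_{\ul p}$ and $(\dott)^{p_- -1}H_{\ul p}$. The only genuinely new ingredient is a closed-form expression for the $t_{\ul r}$-derivatives $\partial_{t_{\ul r}}F_{\ul p}$ and $\partial_{t_{\ul r}}H_{\ul p}$ at fixed $z$; everything else is algebra involving the product representations in \eqref{AL4.9} and the sheet normalizations \eqref{AL4.20}, \eqref{AL4.21}.

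First I would derive the auxiliary evolution equation
\[
F_{\ul p, t_{\ul r}} = -2i\,\ti F_{\ul r}\,G_{\ul p} + i\big(\ti G_{\ul r} + \ti K_{\ul r}\big) F_{\ul p}
\]
from the Riccati-type flow \eqref{ALphi_t}. Using $\phi = ((c_{0,+}/2)z^{-p_-}y + G_{\ul p})/F_{\ul p}$ from \eqref{AL4.22} and differentiating in $t_{\ul r}$ (noting that $y$, hence $R_{\ul p}$, is $t_{\ul r}$-independent) gives $\phi_{t_{\ul r}} = G_{\ul p, t_{\ul r}}/F_{\ul p} - \phi\,F_{\ul p, t_{\ul r}}/F_{\ul p}$. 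Equating this to the right-hand side of \eqref{ALphi_t}, writing the resulting identity at both $P$ and $P^*$, subtracting, and dividing by the factor $\phi(P) - \phi(P^*) = c_{0,+}z^{-p_-}y/F_{\ul p}$ (cf.\ \eqref{ALphi 3t}) eliminates $G_{\ul p,t_{\ul r}}$ and yields the displayed equation after inserting $\phi(P)+\phi(P^*) = 2G_{\ul p}/F_{\ul p}$ from \eqref{ALphi 2t}. The analogous computation applied to $1/\phi = (G_{\ul p} - (c_{0,+}/2)z^{-p_-}y)/H_{\ul p}$ (from \eqref{AL4.23}), whose flow is $\partial_{t_{\ul r}}(1/\phi) = -i\ti H_{\ul r}(1/\phi)^2 + i(\ti G_{\ul r}+\ti K_{\ul r})(1/\phi) - i\ti F_{\ul r}$, produces $H_{\ul p, t_{\ul r}} = 2i\,\ti H_{\ul r}\,G_{\ul p} - i(\ti G_{\ul r}+\ti K_{\ul r})H_{\ul p}$.

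Next I would differentiate $F_{\ul p}(\mu_j,\cdot)=0$ in $t_{\ul r}$, giving $\mu_{j,t_{\ul r}} = -\partial_{t_{\ul r}}F_{\ul p}(\mu_j)/\partial_z F_{\ul p}(\mu_j)$. The denominator follows from the product form, $\partial_z F_{\ul p}(\mu_j) = -c_{0,+}\alpha^+ \mu_j^{-p_-}\prod_{k\neq j}(\mu_j - \mu_k)$, and the distinctness hypothesis guarantees this product is nonzero, so $\mu_j$ is a simple zero and the ratio is well defined. For the numerator, evaluating the $F_{\ul p}$-evolution at $z=\mu_j$ (where $F_{\ul p}(\mu_j)=0$) gives $\partial_{t_{\ul r}}F_{\ul p}(\mu_j) = -2i\,\ti F_{\ul r}(\mu_j)G_{\ul p}(\mu_j)$; substituting $G_{\ul p}(\mu_j) = (c_{0,+}/2)\mu_j^{-p_-}y(\hmu_j)$, which is precisely the sheet choice in \eqref{AL4.20}, the factors $c_{0,+}$ and $\mu_j^{-p_-}$ cancel and \eqref{AL4.69a} drops out. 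The $\nu_j$ case is identical: differentiate $H_{\ul p}(\nu_j,\cdot)=0$, use $\partial_z H_{\ul p}(\nu_j) = c_{0,+}\beta\,\nu_j^{-p_- +1}\prod_{k\neq j}(\nu_j-\nu_k)$, the $H_{\ul p}$-evolution, and $G_{\ul p}(\nu_j) = -(c_{0,+}/2)\nu_j^{-p_-}y(\hnu_j)$ from \eqref{AL4.21}; here the surviving power $\nu_j^{-p_-}/\nu_j^{-p_- +1}=\nu_j^{-1}$ supplies the factor $(\beta\nu_j)^{-1}$ in \eqref{AL4.71a}.

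Finally, the smoothness claims \eqref{AL4.69b}, \eqref{AL4.71b} follow from the Dubrovin system itself: on $\calI_\mu$ its right-hand side is a smooth function of the (simple, nonzero) $\mu_k$'s and of $\alpha^+$, while $\alpha^+$ is recovered smoothly from the divisor data through the reconstruction of Section \ref{ALSs4} (valid for each fixed $t_{\ul r}$ since the stationary relations persist in time); a $C^1$ start via the implicit function theorem applied to $F_{\ul p}(\mu_j,\cdot)=0$ then bootstraps to $C^\infty$ by standard ODE regularity. I expect the main obstacle to be the derivation of the auxiliary evolution equations for $F_{\ul p}$ and $H_{\ul p}$: one must exploit the two-sheeted structure (the $P$/$P^*$ antisymmetrization) to isolate $F_{\ul p, t_{\ul r}}$ from the Riccati flow, and justify the division by $\phi(P)-\phi(P^*)$ as an identity of Laurent polynomials in $z$ (it extends off the branch locus by meromorphy). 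Carefully tracking the powers $z^{-p_-}$ in the Laurent—rather than polynomial—representations of $F_{\ul p}, G_{\ul p}, H_{\ul p}$ is the only other point requiring care.
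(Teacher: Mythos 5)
Your argument is correct and is essentially the standard route (the paper itself only cites \cite{GesztesyHoldenMichorTeschl:2007a} for this lemma, but the evolution equations you derive for $F_{\ul p}$ and $H_{\ul p}$ by $P/P^*$ antisymmetrization of the Riccati flow \eqref{ALphi_t} are precisely \eqref{ALF_t} and \eqref{ALH_t} of Lemma \ref{lAL4.3}, already recorded there, and the remaining step is the same implicit differentiation of $F_{\ul p}(\mu_j)=0$ and $H_{\ul p}(\nu_j)=0$ using the product representations and the sheet choices \eqref{AL4.20}, \eqref{AL4.21}; all constants check out). The only loose spot is the $C^\infty$ claim: smoothness of the right-hand side of \eqref{AL4.69a} in $t_{\ul r}$ requires smoothness of $\alpha^+(n,\cdot)$, which is cleanest to get from the autonomous polynomial system \eqref{ALT6.61} for the coefficients $f_{\ell,\pm},g_{\ell,\pm},h_{\ell,\pm}$ (whence $F_{\ul p}$ is $C^\infty$ in $t_{\ul r}$ and its simple zeros are too) rather than from the stationary reconstruction of Section \ref{ALSs4}.
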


When attempting to solve the Dubrovin-type systems \eqref{AL4.69a} and 
\eqref{AL4.71a}, they must be augmented with appropriate divisors 
$\calD_{\humu(n_0,t_{0,\ul r})}\in\sym^{p} \calK_p$, $t_{0,\ul r}\in \calI_\mu$, and 
$\calD_{\hunu(n_0,t_{0,\ul r})}\in\sym^{p} \calK_p$, $t_{0,\ul r}\in \calI_\nu$,  
as initial conditions. 

For the $t_{\ul r}$-dependence of $F_{\ul p}$, $G_{\ul p}$, and $H_{\ul p}$ one obtains the following result.

\begin{lemma} [\cite{GesztesyHoldenMichorTeschl:2007a}]  \lb{lAL4.3}
Assume Hypothesis \ref{hAL4.1} and suppose that 
\eqref{ALzctilde}, \eqref{ALzctstat} hold. In addition, let
$(z,n,t_{\ul r})\in\bbC\times\bbZ\times\bbR$. Then, 
\begin{align} \label{ALF_t}
F_{\ul p,t_{\ul r}} &= - 2 i G_{\ul p} \ti F_{\ul r} 
+ i \big(\ti G_{\ul r} + \ti K_{\ul r}\big)F_{\ul p}, \\ \label{ALG_t}
G_{\ul p,t_{\ul r}} &= i F_{\ul p} \ti H_{\ul r} - i H_{\ul p} \ti F_{\ul r}, \\  \label{ALH_t}
H_{\ul p,t_{\ul r}} &= 2 i G_{\ul p} \ti H_{\ul r} - i \big(\ti G_{\ul r} + \ti K_{\ul r}\big)H_{\ul p}.
\end{align}
In particular, \eqref{ALF_t}--\eqref{ALH_t} are equivalent to 
\begin{equation} \label{ALV_t}
V_{\ul p,t_{\ul r}} = \big[\ti V_{\ul r}, V_{\ul p}\big].
\end{equation}
\end{lemma}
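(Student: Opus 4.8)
The plan is to read off the three scalar evolution equations \eqref{ALF_t}--\eqref{ALH_t} directly from the $t_{\ul r}$-evolution of the fundamental meromorphic function $\phi$, and only afterwards repackage them as the matrix identity \eqref{ALV_t}. Recall from Lemma \ref{lAL4.2} that $\phi$ has the explicit quotient representation \eqref{AL4.22}, together with the Riccati-type evolution \eqref{ALphi_t} and the symmetric-function identities \eqref{ALphi 1t}--\eqref{ALphi 3t}. The decisive structural fact is that the curve $\calK_p$ is fixed in time, so $y=y(P)$ does not depend on $t_{\ul r}$; differentiating the quotient \eqref{AL4.22} at a fixed point $P=(z,y)$ therefore yields
\[
\phi_{t_{\ul r}}(P) = \frac{G_{\ul p,t_{\ul r}}}{F_{\ul p}} - \phi(P)\,\frac{F_{\ul p,t_{\ul r}}}{F_{\ul p}},
\]
in which $F_{\ul p,t_{\ul r}}$ and $G_{\ul p,t_{\ul r}}$ are the unknowns I want to isolate.

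First I would equate this with the right-hand side of \eqref{ALphi_t} and clear $F_{\ul p}$, obtaining the identity
\[
G_{\ul p,t_{\ul r}} - \phi(P)F_{\ul p,t_{\ul r}} = i\ti F_{\ul r}F_{\ul p}\,\phi(P)^2 - i(\ti G_{\ul r}+\ti K_{\ul r})F_{\ul p}\,\phi(P) + i\ti H_{\ul r}F_{\ul p}.
\]
Because this holds at every point of $\calK_p$, I evaluate it at both $P=(z,y)$ and its sheet-exchanged partner $P^*=(z,-y)$ and form linear combinations. Subtracting the two copies and dividing by $\phi(P)-\phi(P^*)=c_{0,+}z^{-p_-}y/F_{\ul p}$ (legitimate wherever $y\neq 0$) eliminates $G_{\ul p,t_{\ul r}}$, and inserting $\phi(P)+\phi(P^*)=2G_{\ul p}/F_{\ul p}$ from \eqref{ALphi 2t} gives precisely \eqref{ALF_t}. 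Adding the two copies instead, and using \eqref{ALphi 2t}, $\phi(P)\phi(P^*)=H_{\ul p}/F_{\ul p}$ from \eqref{ALphi 1t}, together with the value of $F_{\ul p,t_{\ul r}}$ just found, produces \eqref{ALG_t} after the $G_{\ul p}$-dependent terms cancel. Since these are identities between Laurent polynomials in $z$ that agree on the cofinite set $\{y\neq 0\}$, they hold identically.

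For \eqref{ALH_t} the cleanest route avoids a third differentiation: as $R_{\ul p}=G_{\ul p}^2-F_{\ul p}H_{\ul p}$ depends on $z$ only (equation \eqref{ALR_t}), differentiating in $t_{\ul r}$ gives $0=2G_{\ul p}G_{\ul p,t_{\ul r}}-F_{\ul p,t_{\ul r}}H_{\ul p}-F_{\ul p}H_{\ul p,t_{\ul r}}$, and solving for $H_{\ul p,t_{\ul r}}$ with the already-established \eqref{ALF_t}--\eqref{ALG_t} yields \eqref{ALH_t} (the $G_{\ul p}H_{\ul p}\ti F_{\ul r}$ contributions cancel and an overall factor $F_{\ul p}$ divides out). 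Finally, the ``in particular'' claim is a purely algebraic check: writing $V_{\ul p,t_{\ul r}}$ and $[\ti V_{\ul r},V_{\ul p}]$ entrywise from \eqref{ALv_osv}, the $(1,1)$ and $(2,2)$ entries both reproduce \eqref{ALG_t} while the $(1,2)$ and $(2,1)$ entries reproduce \eqref{ALF_t} and \eqref{ALH_t}; it is essential here that the lower-right block of $\ti V_{\ul r}$ carries $\ti K_{\ul r}$ rather than $\ti G_{\ul r}$, which is exactly what produces the combination $\ti G_{\ul r}+\ti K_{\ul r}$ in \eqref{ALF_t} and \eqref{ALH_t}. I anticipate no serious obstacle; the only points demanding care are keeping $y$ (hence $R_{\ul p}$) frozen under $\partial_{t_{\ul r}}$ and justifying the division by $\phi(P)-\phi(P^*)$, which is precisely why splitting the computation over $P$ and $P^*$ is the right device.
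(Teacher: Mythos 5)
Your argument is correct, but note that the paper itself offers no proof of Lemma \ref{lAL4.3}: it is quoted verbatim from \cite{GesztesyHoldenMichorTeschl:2007a} (Section \ref{s5} explicitly omits all proofs), so there is nothing in this manuscript to compare against line by line. That said, your route is precisely the standard one in this circle of papers: differentiate the quotient representation \eqref{AL4.22} of $\phi$ in $t_{\ul r}$ with $y$ frozen, combine with \eqref{ALphi_t}, and split over $P$ and $P^*$ using \eqref{ALphi 1t}--\eqref{ALphi 3t}; your subtraction step is equivalent to differentiating $\phi(P)-\phi(P^*)=c_{0,+}z^{-p_-}y/F_{\ul p}$, and the addition step to differentiating $\phi(P)+\phi(P^*)=2G_{\ul p}/F_{\ul p}$. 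I checked the algebra: the cancellations you claim in the derivation of \eqref{ALG_t} and \eqref{ALH_t} do occur, the division by $F_{\ul p}$ is legitimate since $F_{\ul p}$ has nonvanishing leading coefficient $-c_{0,+}\alpha^+$, and the entrywise identification of $V_{\ul p,t_{\ul r}}=[\ti V_{\ul r},V_{\ul p}]$ with the three scalar equations (shifted by $n\to n-1$) is correct, including your observation that the $\ti K_{\ul r}$ in the lower-right corner of $\ti V_{\ul r}$ is what produces the combination $\ti G_{\ul r}+\ti K_{\ul r}$. The one point worth flagging is logical rather than computational: your proof takes \eqref{ALphi_t} as given, and \eqref{ALphi_t} is itself an unproven quoted fact from the same reference (Lemma \ref{lAL4.2}). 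Within the dependency structure of this paper that is acceptable, but if one were proving everything from scratch one would have to establish \eqref{ALphi_t} independently (e.g.\ from the zero-curvature equations and a uniqueness argument for the induced Riccati flow) to avoid circularity, since in principle \eqref{ALphi_t} could equally well be \emph{derived from} \eqref{ALF_t}--\eqref{ALH_t}.
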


It will be shown in Section \ref{ALTs6} that Lemma \ref{lAL4.3} yields a
first-order system of differential equations for $f_{\ell,\pm}$, $g_{\ell,\pm}$, and 
$h_{\ell,\pm}$, that serves as a pertinent substitute for the Dubrovin
equations \eqref{AL4.69a} even (in fact, especially) when some of the $\mu_j$
coincide.

Lemma \ref{lAL3.4} on nonspecial divisors and the linearization property of the Abel map extend to the present time-dependent setting. For this fact we need to introduce a particular differential of the second kind, $\ti \Omega_{\ul r}^{(2)}$, defined as follows. Let  $\omega_{\Pinfpm,q}^{(2)}$ and $\omega_{\Pzpm,q}^{(2)}$ be the normalized differentials of the second kind with a unique pole at $\Pinfpm$ and $\Pzpm$, respectively, and principal parts  
\begin{align}
\omega_{\Pinfpm,q}^{(2)}&\underset{\zeta\to 0}{=}
\big(\zeta^{-2-q}+\Oh(1) \big)d\zeta, \quad P\to\Pinfpm, \; \zeta=1/z, \; 
q\in\bbN_0, \label{AL4.69}  \\
\omega_{\Pzpm,q}^{(2)}&\underset{\zeta\to 0}{=}
\big(\zeta^{-2-q}+\Oh(1) \big)d\zeta, \quad P\to\Pzpm, \; \zeta=z, \; 
q\in\bbN_0, \label{AL4.69A}
\end{align}
with vanishing $a$-periods,
\begin{equation}
\int_{a_j}\omega_{\Pinfpm,q}^{(2)} = \int_{a_j}\omega_{\Pzpm,q}^{(2)} =0, 
\quad j=1,\dots,p.
\label{AL4.70}
\end{equation}
Moreover, we define
\begin{align} \no
\ti \Omega_{\ul r}^{(2)}&=\frac{i}{2}\bigg(\sum_{s=1}^{r_-} s \tilde c_{r_--s, -}
\big(\omega_{P_{0,+},s-1}^{(2)}-\omega_{P_{0,-},s-1}^{(2)} \big) \\
&\hspace*{1cm} + \sum_{s=1}^{r_+} s \tilde c_{r_+-s, +}
\big(\omega_{\Pinfp,s-1}^{(2)}-\omega_{\Pinfm,s-1}^{(2)} \big)\bigg),  \label{AL4.71}
\end{align}
where $\tilde c_{\ell,\pm}$ are the summation constants in $\ti F_{\ul r}$. 
The corresponding vector of $b$-periods of $\ti\Omega_{\ul r}^{(2)}/(2\pi i)$ is 
then denoted by 
\begin{equation}
\ti {\underline U}_{\ul r}^{(2)}=\big(\ti U_{\ul r,1}^{(2)},\dots,\ti U_{\ul r,p}^{(2)}
\big), \quad \ti U_{\ul r,j}^{(2)}
=\frac{1}{2\pi i}\int_{b_j}\ti \Omega_{\ul r}^{(2)}, \quad j=1,\dots,p.
\label{AL4.74}
\end{equation}

The time-dependent analog of Lemma \ref{lAL3.4} then reads as follows. 

\begin{lemma} [\cite{GesztesyHoldenMichorTeschl:2007a}, 
\cite{GesztesyHoldenMichorTeschl:2007}] \label{lAL4.6a} 
Assume Hypothesis \ref{hAL4.1} and suppose that \eqref{ALzctilde}, \eqref{ALzctstat} hold. Moreover, let $(n,t_{\ul r})\in\bbZ\times\bbR$. Denote by $\calD_{\humu}$,
$\humu=\{\hmu_1,\dots,\hmu_{p}\}$ and $\calD_{\hunu}$,
$\hunu=\{\hnu_1,\dots,\hnu_{p}\}$, the pole and zero divisors of degree
$p$, respectively, associated with $\alpha$, $\beta$, and $\phi$ defined
according to \eqref{AL4.20} and \eqref{AL4.21}, that is,
\begin{align}
\hat\mu_j (n,t_{\ul r}) &= (\mu_j (n,t_{\ul r}),(2/c_{0,+}) \mu_j(n,t_{\ul r})^{p_-} 
G_{\ul p}(\mu_j(n,t_{\ul r}),n,t_{\ul r})), 
\quad  j=1,\dots,p,  \\
\hat\nu_j (n,t_{\ul r}) &= (\nu_j (n,t_{\ul r}),- (2/c_{0,+}) \nu_j(n,t_{\ul r})^{p_-} 
G_{\ul p}(\nu_j(n,t_{\ul r}),n,t_{\ul r})), 
\quad j=1,\dots,p. 
\end{align}
Then $\calD_{\humu(n,t_{\ul r})}$ and $\calD_{\hunu(n,t_{\ul r})}$ are nonspecial for all
$(n,t_{\ul r})\in\bbZ\times\bbR$. Moreover, the Abel map linearizes the auxiliary divisors $\calD_{\humu(n,t_{\ul r})}$ and $\calD_{\hunu(n,t_{\ul r})}$ in the sense that
\begin{align}
\amap(\calD_{\humu(n,t_{\ul r})})
&=\amap(\calD_{\humu(n_0,t_{0,\ul r})})+ (n-n_0) \underline A_{\Pzm}(\Pinfp)
+ (t_{\ul r}-t_{0,\ul r}) \ti{\underline U}^{(2)}_{\ul r}, \label{AL4.82} \\
\amap(\calD_{\hunu(n,t_{\ul r})})
&=\amap(\calD_{\hunu(n_0,t_{0,\ul r})})
+ (n-n_0) \underline A_{\Pzm}(\Pinfp)+ (t_{\ul r}-t_{0,\ul r}) \ti{\underline U}^{(2)}_{\ul r},
\label{AL4.83}
\end{align}
where $Q_0 \in \calK_p$ is a given base point and $\ti{\underline U}^{(2)}_{\ul r}$
is the vector of $b$-periods introduced in \eqref{AL4.74}.
\end{lemma}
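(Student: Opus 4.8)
The plan is to deduce the nonspeciality and the $n$-dependent part of \eqref{AL4.82}, \eqref{AL4.83} directly from the stationary Lemma~\ref{lAL3.4}, and to extract the $t_{\ul r}$-dependent part by integrating the Dubrovin-type system of Lemma~\ref{lAL4.5}. By Lemma~\ref{ALTl6.2} the stationary zero-curvature relation \eqref{ALzctstat} holds for every $t_{\ul r}\in\bbR$, so for each fixed time the coefficients $\alpha(\dott,t_{\ul r}),\beta(\dott,t_{\ul r})$ fulfil the hypotheses of Section~\ref{s3}. Applying Lemma~\ref{lAL3.4} with $t_{\ul r}$ regarded as a fixed parameter therefore shows at once that $\calD_{\humu(n,t_{\ul r})}$ and $\calD_{\hunu(n,t_{\ul r})}$ are nonspecial for all $(n,t_{\ul r})$, and that \eqref{AL3.63}, \eqref{AL3.64} supply the term $(n-n_0)\,\ul A_{\Pzm}(\Pinfp)$ in \eqref{AL4.82}, \eqref{AL4.83}. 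It remains to prove that the time flow contributes exactly $(t_{\ul r}-t_{0,\ul r})\,\ti{\ul U}_{\ul r}^{(2)}$.

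First I would fix $n$ and work on an open interval $\calI_\mu\subseteq\bbR$ on which the $\mu_j(n,\dott)$ stay distinct and nonzero, so that Lemma~\ref{lAL4.5} applies and each $\hmu_j(n,\dott)$ is $C^\infty$. Writing the Abel map as $\amap(\calD_{\humu})=\sum_{j=1}^p\int_{Q_0}^{\hmu_j}\ul\omega$ with $\ul\omega=(\omega_1,\dots,\omega_p)$ the normalized holomorphic differentials, differentiating in $t_{\ul r}$ and inserting \eqref{AL4.69a} yields, after the factor $y(\hmu_j)$ in \eqref{AL4.69a} cancels the $y^{-1}$ in the hyperelliptic differentials $\omega_\ell$,
\begin{equation}
\frac{d}{dt_{\ul r}}\,\amap(\calD_{\humu(n,t_{\ul r})})
= -\,i\,\alpha^+(n,t_{\ul r})^{-1}\sum_{j=1}^p
\frac{\ti F_{\ul r}(\mu_j)}{\prod_{k\neq j}(\mu_j-\mu_k)}\,\ul c(\mu_j),
\end{equation}
where $\ul c(z)$ is the vector of Laurent-polynomial numerators of the $\omega_\ell$. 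The essential gain is that the right-hand side no longer involves the square root $y$ and is a rational symmetric function of the $\mu_j$.

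The crux of the argument, and the step I expect to be hardest, is to show that this symmetric sum equals the constant vector $\ti{\ul U}_{\ul r}^{(2)}$, independent of $n$ and $t_{\ul r}$. Using \eqref{ALmu(n)} to write $\prod_{k\neq j}(\mu_j-\mu_k)$ as the derivative of $z^{p_-}F_{\ul p}(z)/(-c_{0,+}\alpha^+)$ at $z=\mu_j$, each summand becomes the residue at $z=\mu_j$ of a rational expression built from $\ti F_{\ul r}$, $\ul c$, and $F_{\ul p}$; summing over $j$ and invoking the residue theorem on $\bbP^1$ replaces the sum by minus the residues at $z=0$ and $z=\infty$. These remaining residues depend only on the principal parts of $\ti F_{\ul r}$ at $0$ and $\infty$ — that is, on the summation constants $\tilde c_{s,\pm}$ — and on the normalization of $\ul\omega$, which are precisely the data defining the second-kind differential $\ti\Omega_{\ul r}^{(2)}$ in \eqref{AL4.71}. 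A Riemann bilinear relation then identifies this combination of residues with the $b$-period vector $\ti{\ul U}_{\ul r}^{(2)}$ of \eqref{AL4.74}; the requisite evaluation of the singular parts is exactly what the interpolation formalism of Appendix~\ref{AL.sB} provides.

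Integrating $\tfrac{d}{dt_{\ul r}}\amap(\calD_{\humu(n,\dott)})=\ti{\ul U}_{\ul r}^{(2)}$ over $\calI_\mu$ gives \eqref{AL4.82} on $\calI_\mu$. Both sides of \eqref{AL4.82} are continuous in $t_{\ul r}$ by the regularity in \eqref{AL4.1A}, and the set of times on which the $\mu_j(n,\dott)$ are simple and nonzero is open, so the identity propagates through the intervening (isolated) collision times to all $t_{\ul r}\in\bbR$ by continuity; this is the one place where collisions of the $\mu_j$ must be controlled, and it is handled by continuity rather than by the Dubrovin flow itself. Finally, \eqref{AL4.83} follows in the same way from the $\nu_j$-system \eqref{AL4.71a}, or more economically by combining \eqref{AL4.82} with \eqref{ALS3.65}: the latter shows that $\amap(\calD_{\hunu})-\amap(\calD_{\humu})=\ul A_{\Pzm}(\Pinfm)$ is independent of $(n,t_{\ul r})$, so the linearization of $\calD_{\hunu}$ is inherited from that of $\calD_{\humu}$.
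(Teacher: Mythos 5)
The paper itself offers no proof of Lemma \ref{lAL4.6a} (Section \ref{s5} explicitly defers all proofs to \cite{GesztesyHoldenMichorTeschl:2007a}, \cite{GesztesyHoldenMichorTeschl:2007}), so I am measuring your argument against the standard route taken there. Your reduction of the nonspeciality claim and of the $(n-n_0)\,\ul A_{\Pzm}(\Pinfp)$ term to the stationary Lemma \ref{lAL3.4}, applied at each frozen $t_{\ul r}$, is correct, as is the observation that \eqref{AL4.83} follows from \eqref{AL4.82} together with the $t_{\ul r}$-independence of $\amap(\calD_{\hunu})-\amap(\calD_{\humu})$ coming from \eqref{ALS3.65}. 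The differentiation of the Abel map along the Dubrovin flow and the cancellation of $y(\hmu_j)$ are also correct.

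There are, however, two problems with the time-dependent half. The genuine gap is the final continuity step: you assert that the collision times of the $\mu_j(n,\cdot)$ are isolated, so that the identity $\frac{d}{dt_{\ul r}}\amap(\calD_{\humu(n,t_{\ul r})})=\ti{\ul U}^{(2)}_{\ul r}$, established only on the open set where the $\mu_j$ are simple and nonzero, propagates to all of $\bbR$. Nothing in Hypothesis \ref{hAL4.1} justifies this: $\alpha,\beta$ are only $C^1$ in $t_{\ul r}$, and a nonspecial divisor may perfectly well carry a multiple point $\hmu_1=\hmu_2$ (same sheet) that persists on an entire $t_{\ul r}$-interval; on such an interval the Dubrovin system \eqref{AL4.69a} is never available and your good set is not dense, so continuity gives nothing. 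This is precisely the obstruction the paper is organized around (cf.\ the Introduction and the replacement of \eqref{AL4.69a} by \eqref{ALT6.22}, \eqref{ALT6.37}, \eqref{ALT6.37a} in Section \ref{ALTs6}); a collision-proof derivation instead applies the residue theorem to $\omega_j\,d\ln\psi_1(\cdot,n,n_0,t_{\ul r},t_{0,\ul r})$ over $\partial\hatt\calK_p$, reading off $\amap(\calD_{\humu(n,t_{\ul r})})-\amap(\calD_{\humu(n_0,t_{0,\ul r})})$ from the zeros and poles of $\psi_1$ and the term $(t_{\ul r}-t_{0,\ul r})\ti{\ul U}^{(2)}_{\ul r}$ from the residues at the essential singularities $\Pinfpm$, $\Pzpm$ produced by the exponential factor, with no hypothesis on the $\mu_j$. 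Secondly, your residue bookkeeping is over-optimistic: after converting $\prod_{k\neq j}(\mu_j-\mu_k)$ into $\frac{d}{dz}\big(z^{p_-}F_{\ul p}(z)\big)\big|_{z=\mu_j}$, the residues at $z=0$ and $z=\infty$ involve not only the singular parts of $\ti F_{\ul r}$ (i.e.\ the $\tilde c_{s,\pm}$) and the normalization of $\ul\omega$, but also the expansion coefficients $\hat f_{\ell,\pm}$ of $z^{p_-}F_{\ul p}/y$ from Theorem \ref{tALB.2}; one still needs $\tilde f_{s,\pm}=\sum_k\tilde c_{s-k,\pm}\hat f_{k,\pm}$ and a cancellation of the type \eqref{ALT6.20} to see that only the $\tilde c_{s,\pm}$ survive. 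That computation is carried by Appendix \ref{ALApp.high} rather than by the interpolation formulas of Appendix \ref{AL.sB}, which address multiple zeros of $F_{\ul p}$, not the identification of the residues with $\ti{\ul U}^{(2)}_{\ul r}$.
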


Again we refer to \cite{GesztesyHoldenMichorTeschl:2007} (and the references cited therein) for theta function representations of $\alpha$ and $\beta$. 
These theta function representations also show that $\gamma(n,t_{\ul r})\notin \{0,1\}$ 
for all $(n,t_{\ul r})\in\bbZ\times\bbR$, that is, the last condition in \eqref{AL4.1A} is 
satisfied for the time-dependent algebro-geometric AL solutions discussed in 
this section provided the associated Dirichlet and Neumann divisors are 
admissible.

\section{The Algebro-Geometric Ablowitz--Ladik Hierarchy\\ Initial Value 
Problem} \lb{ALTs6}

In this section we consider the algebro-geometric Ablowitz--Ladik hierarchy initial value
problem \eqref{ALal_ivp}, \eqref{AL4.3A} with complex-valued initial data. For a
generic set of initial data we will prove unique solvability of the
initial value problem globally in time. 

As mentioned in the introduction, the strategy of basing the solution of the 
algebro-geometric initial value problem on the Dubrovin-type equations \eqref{AL4.69a}, \eqref{AL4.71a}, and the trace formulas for $\alpha$ and $\beta$, meets with serious difficulties as the Dirichlet eigenvalues $\hat \mu_j$ may collide on $\calK_p$ and hence the denominator of \eqref{AL4.69a} can blow up. Hence, we will develop an alternative strategy based on the use of elementary symmetric functions of the variables
$\{\mu_j\}_{j=1,\dots,p}$ in this section, which can accommodate
collisions of $\hat\mu_j$. In short, our strategy will consist of the following:

$(i)$ Replace the first-order autonomous Dubrovin-type system \eqref{AL4.69a}
of differential equations in $t_{\ul r}$ for the Dirichlet eigenvalues
$\mu_j(n,t_{\ul r})$, $j=1,\dots,p$, augmented by appropriate initial
conditions, by the first-order autonomous system 
\eqref{ALT6.61}, \eqref{ALT6.62} for the coefficients $f_{\ell,\pm}$, $h_{\ell,\pm}$,
$\ell=1,\dots, p_\pm -1$, and $g_{\ell,\pm}$, $\ell=1,\dots, p_\pm$, with respect to 
$t_{\ul r}$. Solve this first-order autonomous system 
in some time interval $(t_{0,\ul r}-T_0,t_{0,\ul r}+T_0)$ under
appropriate initial conditions at $(n_0,t_{0,\ul r})$ derived from an
initial (nonspecial) Dirichlet divisor
$\calD_{\humu(n_0,t_{0,\ul r})}$.

$(ii)$ Use the stationary algorithm derived in Section \ref{ALSs4} to
extend the solution of step $(i)$ from
$\{n_0\}\times(t_{0,\ul r}-T_0,t_{0,\ul r}+T_0)$ to 
$\bbZ \times (t_{0,\ul r}-T_0,t_{0,\ul r}+T_0)$ (cf.\ Lemma \ref{ALTl6.2}).

$(iii)$ Prove consistency of this approach, that is, show that the
discrete algorithm of Section \ref{ALSs4} is compatible with the
time-dependent Lax and zero-curvature equations in the sense that first
solving the autonomous system \eqref{ALT6.61}, \eqref{ALT6.62} and then
applying the discrete algorithm, or first applying the discrete
algorithm and then solving the autonomous system \eqref{ALT6.61}, \eqref{ALT6.62} 
yields the same result whenever the same endpoint 
$(n,t_{\ul r})$ is reached (cf.\ Lemma \ref{ALTl6.3} and Theorem \ref{ALTt6.4}). 

$(iv)$ Prove that there is a dense set of initial conditions of full
measure for which this strategy yields global solutions of the
algebro-geometric Ablowitz--Ladik hierarchy initial value problem.

\smallskip

To set up this formalism we need some preparations.
From the outset we make the following assumption.  

\begin{hypothesis} \lb{ALTh6.1} 
Suppose that 
\begin{equation}
\alpha, \beta \in \bbC^{\bbZ} \, \text{ and } \, \alpha(n)\beta(n) \notin \{0,1\} 
\, \text{ for all } \,  n\in\bbZ,   \lb{ALT6.0}
\end{equation}
and assume that $\alpha, \beta$ satisfy the $\ul p$th stationary Ablowitz--Ladik equation
\eqref{ALstat}. In addition, suppose that the affine part of the hyperelliptic curve 
$\calK_p$ 
in \eqref{ALcalK_p} is nonsingular. 
\end{hypothesis} 

We introduce a deformation (time)
parameter $t_{\ul r}\in\bbR$ in $\alpha=\alpha(t_{\ul r})$ and $\beta=\beta(t_{\ul r})$ and hence obtain $t_{\ul r}$-dependent quantities $f_\ell=f_\ell(t_{\ul r})$, $g_\ell=g_\ell(t_{\ul r})$,
$F_{\ul p}(z)=F_{\ul p}(z,t_{\ul r})$, $G_{\ul p}(z)=G_{\ul p}(z,t_{\ul r})$, etc. At a fixed
initial time $t_{0,\ul r}\in\bbR$ we require that
\begin{equation}
(\alpha,\beta)|_{t_{\ul r}=t_{0,\ul r}} = \big(\alpha^{(0)},\beta^{(0)}\big),   \lb{ALT6.16}
\end{equation}
where $\alpha^{(0)}=\alpha(\dott,t_{0,\ul r}),\beta^{(0)}=\beta(\dott,t_{0,\ul r})$ satisfy the
$\ul p$th stationary Ablowitz--Ladik system \eqref{ALstat}. 
As discussed in Section \ref{ALSs4}, in order to
guarantee that the stationary solutions \eqref{ALT6.16} can be constructed
for all $n\in\bbZ$ one starts  from a particular divisor 
\begin{equation}
\calD_{\humu(n_0,t_{0,\ul r})}\in\calM_0,   \lb{ALT6.17}
\end{equation}
where $\humu(n_0,t_{0,\ul r})$ is of the form
\begin{align} \lb{ALT6.18}
&\humu(n_0,t_{0,\ul r})   \\
& \quad =\{\underbrace{\hat\mu_1(n_0,t_{0,\ul r}),\dots,
\hat\mu_1(n_0,t_{0,\ul r})}_{p_1(n_0,t_{0,\ul r}) \text{ times}},
\dots,\underbrace{\hat\mu_{q(n_0,t_{0,\ul r})}(n_0,t_{0,\ul r}),\dots,
\hat\mu_{q(n_0,t_{0,\ul r})}(n_0,t_{0,\ul r})}_{p_{q(n_0,t_{0,\ul r})}(n_0,t_{0,\ul r})
\text{ times}}\}. \no
\end{align}
Moreover, as in Section \ref{ALSs4} we prescribe the data 
\begin{equation}
\alpha(n_0,t_{0,\ul r}) \in\bbC\setminus\{0\} \, \text{ and } \, c_{0,+}\in\bbC\setminus\{0\},  
\lb{ALT6.18A}
\end{equation}
and of course the hyperelliptic curve $\calK_p$ with nonsingular affine part (cf.\ \eqref{ALS4.8}). In addition, we introduce
\begin{align} 
\alpha^+(n_0,t_{0,\ul r}) &= \alpha(n_0,t_{0,\ul r}) \bigg(\prod_{m=0}^{2p+1}E_m\bigg)^{1/2} \,  
\prod_{k=1}^{q(n_0,t_{0,\ul r})}\mu_k(n_0,t_{0,\ul r})^{-p_k(n_0,t_{0,\ul r})},   \lb{ALT6.18B} \\  
F_{\ul p}(z,n_0,t_{0,\ul r})&=\sum_{\ell=1}^{p_-} f_{p_--\ell,-}(n_0,t_{0,\ul r})z^{-\ell} 
+ \sum_{\ell=0}^{p_+ -1} f_{p_+ -1-\ell,+}(n_0,t_{0,\ul r})z^\ell   \no \\
&= - c_{0,+}\alpha^+(n_0,t_{0,\ul r})z^{-p_-}
\prod_{k=1}^{q(n_0,t_{0,\ul r})} (z-\mu_k(n_0,t_{0,\ul r}))^{p_k(n_0,t_{0,\ul r})},  \lb{ALT6.18C} \\
G_{\ul p}(z,n_0,t_{0,\ul r}) &= \frac{1}{2} \bigg(\frac{1}{\alpha(n_0,t_{0,\ul r})}-\frac{z}{\alpha^+(n_0,t_{0,\ul r})} \bigg) F_{\ul p}(z,n_0,t_{0,\ul r}) \no \\ \no
&\hspace*{-1.8cm} -\frac{z}{2\alpha^+(n_0,t_{0,\ul r})}F_{\ul p}(z,n_0,t_{0,\ul r})  \no \\
& \hspace*{-1.8cm}  \times
 \sum_{k=1}^{q(n_0,t_{0,\ul r})}\sum_{\ell=0}^{p_k(n_0,t_{0,\ul r})-1} 
\f{\big(d^\ell \big(\zeta^{-1} y(P)\big)/d\zeta^\ell\big)\big|_{P=(\zeta,\eta)
=\hat \mu_k(n_0,t_{0,\ul r})}}{\ell!(p_k(n_0,t_{0,\ul r})-\ell-1)!}    \lb{ALT6.18D} \\ \no
& \hspace*{-1.8cm} \times 
\Bigg(\f{d^{p_k(n_0,t_{0,\ul r})-\ell-1}}{d \zeta^{p_k(n_0,t_{0,\ul r})-\ell-1}}
\Bigg((z-\zeta)^{-1}  \no \\
& \hspace*{1.6cm} \times  \prod_{k'=1, \, k'\neq k}^{q(n_0,t_{0,\ul r})} 
(\zeta-\mu_{k'}(n_0,t_{0,\ul r}))^{-p_{k'}(n_0,t_{0,\ul r})} 
\Bigg)\Bigg)\Bigg|_{\zeta=\mu_k(n_0,t_{0,\ul r})},    \no 
\end{align}
in analogy to \eqref{ALS4.13}.

Our aim is to find an autonomous first-order system of ordinary differential equations
with respect to $t_{\ul r}$ for $f_{\ell,\pm}$, $g_{\ell,\pm}$, and $h_{\ell,\pm}$ 
rather than for $\mu_j$. We divide  the differential equation
\begin{equation} \label{AL--F_t}
F_{\ul p,t_{\ul r}} = - 2 i G_{\ul p} \ti F_{\ul r} + i (\ti G_{\ul r} + \ti K_{\ul r})F_{\ul p}
\end{equation}
by $c_{0,+}z^{-p_-}y$ and rewrite it using Theorem \ref{tALB.2} as   
\begin{align}  
& \sum_{\ell = 0}^\infty \hat f_{\ell,+,t_{\ul r}}\zeta^{\ell+1} =
- 2i \bigg( \sum_{s=1}^{r_-} \tilde f_{r_- -s,-} \zeta^s 
+ \sum_{s=0}^{r_+ -1} \tilde f_{r_+ -1-s,+}\zeta^{-s}\bigg)
\sum_{\ell = 0}^\infty \hat g_{\ell,+}\zeta^\ell  \no \\  \lb{ALT6.19} 
& \quad + i \bigg( 2 \sum_{s=0}^{r_-} \tilde g_{r_- -s,-} \zeta^s  + 
2 \sum_{s=1}^{r_+} \tilde g_{r_+ -s,+} \zeta^{-s} - \tilde g_{r_-,-} 
+ \tilde g_{r_+,+}\bigg)
\sum_{\ell = 0}^\infty \hat f_{\ell,+}\zeta^{\ell+1}, \\ \no
& \hspace*{8.1cm} P \to P_{\infty_-}, \; \zeta=1/z.
\end{align}
The coefficients of $\zeta^{-s}$, $s=0, \dots,r_+ -1$, cancel since
\begin{equation} \lb{ALT6.20}
\sum_{k=0}^\ell \tilde f_{\ell-k,+} \hat g_{k,+}= 
\sum_{k=0}^\ell \tilde g_{\ell-k,+} \hat f_{k,+}, \quad \ell \in \bbN_0.
\end{equation}
In \eqref{ALT6.20} we used \eqref{ALhat f},  
\begin{equation} 
\tilde f_{\ell,+} = \sum_{k=0}^\ell \tilde c_{\ell-k,+} \hat f_{k,+}, \quad
\tilde g_{\ell,+} = \sum_{k=0}^\ell \tilde c_{\ell-k,+} \hat g_{k,+}.
\end{equation}
Comparing coefficients in \eqref{ALT6.19} then yields  
\footnote{$m\vee n = \max\{m,n\}$.}
\begin{align} 
&\hat f_{\ell,+,t_{\ul r}}= i \hat f_{\ell,+}(\tilde g_{r_+,+}-\tilde g_{r_-,-}) 
+  2i \sum_{k=0}^{r_+ -1} 
\big( \tilde g_{k,+} \hat f_{r_++\ell-k,+} - \tilde f_{k,+} \hat g_{r_++\ell-k,+} \big) 
\lb{ALT6.21} \\  
 &\quad - 2i \sum_{k=(\ell+1-r_-)\vee0}^\ell 
 \hat g_{k,+} \tilde f_{r_- -1-\ell+k,-} + 2i \sum_{k=(\ell+2-r_-)\vee0}^\ell 
\hat f_{k,+} \tilde g_{r_- -\ell+k,-}, 
\quad \ell\in \bbN_0.  \no 
\end{align}
By \eqref{ALT6.20}, the last sum in \eqref{ALT6.21} can be rewritten as
\begin{align} \no
&\sum_{j=0}^{r_+-1} \big(\tilde g_{j,+}\hat f_{r_++\ell-j,+}-\tilde f_{j,+}\hat g_{r_++\ell-j,+}\big)  \\ \no
&\qquad = \Bigg(\sum_{j=0}^{r_++\ell} - \sum_{j=r_+}^{r_++\ell}\Bigg) 
\big(\tilde g_{j,+}\hat f_{r_++\ell-j,+}-\tilde f_{j,+}\hat g_{r_++\ell-j,+}\big)\\ \no
&\qquad = - \sum_{j=r_+}^{r_++\ell}
\big(\tilde g_{j,+}\hat f_{r_++\ell-j,+}-\tilde f_{j,+}\hat g_{r_++\ell-j,+}\big)\\ \lb{ALT6.14}
&\qquad = \sum_{j=0}^\ell \big(\hat g_{j,+}\tilde f_{r_++\ell-j,+}-\hat f_{j,+}\tilde g_{r_++\ell-j,+}\big).
\end{align}

One performs a similar computation for $\hat f_{\ell,-,t_{\ul r}}$ using 
Theorem \ref{tALB.2} at $P \to P_{0,+}$. In summary, since 
$f_{k,\pm}= \sum_{\ell=0}^k c_{k-\ell, \pm} \hat f_{\ell, \pm}$,
\eqref{ALT6.21} and \eqref{ALT6.14} yield the following autonomous first-order system (for fixed $n=n_0$) 
\begin{align} \lb{ALT6.22}
 f_{\ell,\pm,t_{\ul r}}&=\calF_{\ell,\pm}(f_{j,-},f_{j,+},g_{j,-},g_{j,+}),
\quad \ell=0, \dots,p_\pm -1,
\end{align}
with initial conditions
\begin{align} \lb{ALT6.33}
&f_{\ell,\pm}(n_0,t_{0,\ul r}), \quad \ell=0,\dots,p_\pm -1, \\  \no
&g_{\ell,\pm}(n_0,t_{0,\ul r}),\quad \ell=0,\dots,p_\pm, 
\end{align}
where $\calF_{\ell,\pm}$, $\ell=0,\dots,p_\pm -1$, are polynomials in $2p+3$ variables,  
\begin{align}  \no
\calF_{\ell,\pm} &= i f_{\ell,\pm}(\tilde g_{r_{\pm},\pm}-\tilde g_{r_{\mp},\mp})  \\ \no
&\quad + 2i \sum_{k=0}^\ell  
\big( f_{k,\pm} (\tilde g_{r_{\mp}-\ell+k,\mp} - \tilde g_{r_{\pm}+\ell-k,\pm}) + g_{k,\pm} (\tilde f_{r_{\pm}+\ell -k,\pm} -\tilde f_{r_{\mp}-1-\ell +k,\mp}) \big) \\ \lb{ALT6.34} 
&\quad +  2i \sum_{k=0}^\ell c_{\ell-k,\pm} \times
\begin{cases}      
0, &0\leq k <r_\mp-1, \\
\sum_{j=0}^{k-r_\mp}  \hat g_{j,\pm} \tilde f_{r_\mp -1-k+j,\mp}\\ 
- \sum_{j=0}^{k+1-r_\mp} 
\hat f_{j,\pm} \tilde g_{r_\mp -k+j,\mp}, & k \geq r_\mp-1.
\end{cases} 
\end{align}
Explicitly, one obtains (for simplicity, $r_\pm > 1$)
\begin{align}  \lb{ALT6.34_0}
\calF_{0,\pm} &= i f_{0,\pm}(\tilde g_{r_\mp,\mp}- \tilde g_{r_\pm,\pm}) + 
2i g_{0,\pm}(\tilde f_{r_\pm,\pm} -\tilde f_{r_\mp -1,\mp}), \\ \no
\calF_{1,\pm} &= 2i f_{0,\pm}(\tilde g_{r_\mp -1,\mp}- \tilde g_{r_\pm+1,\pm})
+ i f_{1,\pm}(\tilde g_{r_\mp,\mp}- \tilde g_{r_\pm,\pm})  \\ \no
& \quad + 2i g_{0,\pm}(\tilde f_{r_\pm+1,\pm} -\tilde f_{r_\mp-2,\mp})
+ 2i g_{1,\pm}(\tilde f_{r_\pm,\pm}- \tilde f_{r_\mp-1,\mp}), \, \text{ etc.} 
\end{align}

By \eqref{ALT6.18B}--\eqref{ALT6.18D}, the initial conditions 
\eqref{ALT6.33} are uniquely determined by the initial divisor 
$\calD_{\humu(n_0,t_{0,\ul r})}$ in \eqref{ALT6.17} and by the data in \eqref{ALT6.18A}.

Similarly, one transforms
\begin{align}  \label{G_t}
G_{\ul p,t_{\ul r}} &= i F_{\ul p} \ti H_{\ul r} - i H_{\ul p} \ti F_{\ul r}, \\  \label{H_t}
H_{\ul p,t_{\ul r}} &= 2 i G_{\ul p} \ti H_{\ul r} - i \big(\ti G_{\ul r} + \ti K_{\ul r}\big)H_{\ul p} 
\end{align}
into (for fixed $n=n_0$) 
\footnote{$m\vee n = \max\{m,n\}$.}
\begin{align} \no
 \hat g_{0,\pm,t_{\ul r}}&=0, \\ \no
 \hat g_{\ell,\pm,t_{\ul r}}&= 
 i \sum_{k=0}^{r_\pm -1} 
\big(\tilde h_{k,\pm} \hat f_{r_\pm -1+\ell-k,\pm} -\tilde f_{k,\pm} 
\hat h_{r_\pm -1+\ell-k,\pm} \big)\\  \no
&\quad +i \sum_{k=(\ell-r_\mp)\vee0}^{\ell-1} 
\big( \hat f_{k,\pm} \tilde h_{r_\mp-\ell+k,\mp} 
- \hat h_{k,\pm} \tilde f_{r_\mp -\ell+k,\mp} \big)  \\ \no
&= i \sum_{k=0}^{\ell-1} 
\big(\hat h_{k,\pm} \tilde f_{r_\pm -1+\ell-k,\pm} -\hat f_{k,\pm} 
\tilde h_{r_\pm -1+\ell-k,\pm} \big)\\  
&\quad + i \sum_{k=(\ell-r_\mp)\vee0}^{\ell-1} 
\big( \hat f_{k,\pm} \tilde h_{r_\mp-\ell+k,\mp} 
- \hat h_{k,\pm} \tilde f_{r_\mp -\ell+k,\mp} \big), \quad  \ell \in \bbN,   \lb{ALT6.35}  \\
 \hat h_{\ell,\pm,t_{\ul r}}&=i \hat h_{\ell,\pm} \big(\tilde g_{r_\mp,\mp} -\tilde g_{r_\pm,\pm}\big)  +
2i \sum_{k=0}^{r_\pm -1}
\big(\tilde h_{k,\pm} \hat g_{r_\pm +\ell-k,\pm} 
-\tilde g_{k,\pm} \hat h_{r_\pm +\ell-k,\pm} \big) \no  \\ \no
&\quad + 2 i \sum_{k=(\ell-r_\mp+1)\vee0}^\ell 
\hat g_{k,\pm}\tilde h_{r_\mp-1-\ell+k,\mp} 
-  2 i \sum_{k=(\ell-r_\mp)\vee0}^\ell  \hat h_{k,\pm}\tilde g_{r_\mp-\ell+k,\mp}
\\ \no
&=i \hat h_{\ell,\pm} \big(\tilde g_{r_\mp,\mp} -\tilde g_{r_\pm,\pm}\big)  +
2i \sum_{k=0}^{\ell}
\big(\hat h_{k,\pm} \tilde g_{r_\pm +\ell-k,\pm} 
-\hat g_{k,\pm} \tilde h_{r_\pm +\ell-k,\pm} \big) \no  \\ \no
&\quad + 2 i \sum_{k=(\ell-r_\mp+1)\vee0}^\ell 
\hat g_{k,\pm}\tilde h_{r_\mp-1-\ell+k,\mp} 
-  2 i \sum_{k=(\ell-r_\mp)\vee0}^\ell  \hat h_{k,\pm}\tilde g_{r_\mp-\ell+k,\mp},\\ \lb{ALT6.36}
&\hspace{9cm} \ell \in \bbN_0.
\end{align}
Summing over $\ell$ in \eqref{ALT6.35}, \eqref{ALT6.36} then yields the following 
first-order system 
\begin{align} \lb{ALT6.37}
 g_{\ell,\pm,t_{\ul r}}&=\calG_{\ell,\pm}(f_{k,-},f_{k,+},h_{k,-},h_{k,+}),
\quad \ell=0, \dots,p_\pm, \\ \lb{ALT6.37a}
 h_{\ell,\pm,t_{\ul r}}&=\calH_{\ell,\pm}(g_{k,-},g_{k,+},h_{k,-},h_{k,+}),
\quad \ell=0, \dots,p_\pm -1,
\end{align}
with initial conditions
\begin{align} \no 
&f_{\ell,\pm}(n_0,t_{0,\ul r}), \quad \ell=0,\dots,p_\pm -1, \\  \lb{ALT6.33a}
&g_{\ell,\pm}(n_0,t_{0,\ul r}),\quad \ell=0,\dots,p_\pm,   \\ \no
&h_{\ell,\pm}(n_0,t_{0,\ul r}),\quad \ell=0,\dots,p_\pm -1,
\end{align}
where $\calG_{\ell,\pm}$, $\calH_{\ell,\pm}$, are polynomials in $2p+2$, 
$2p+3$ variables  
\begin{align} \no
&\calG_{\ell,\pm} =  i \sum_{k=0}^{\ell-1} 
\big(f_{k,\pm} (\tilde h_{r_\mp -\ell+k,\mp} -\tilde h_{r_\pm -1+\ell-k,\pm}) 
+ h_{k,\pm} (\tilde f_{r_\pm -1+\ell-k,\pm}-\tilde f_{r_\mp -\ell+k,\mp}) \big)\\  
&\quad -i \sum_{k=0}^{\ell-1} c_{\ell-1-k,\pm} \times
\begin{cases}      
0, &0\leq k \leq r_\mp, \\ \lb{ALT6.38}
\sum_{j=0}^{k-r_\mp-1} (\hat f_{j,\pm} \tilde h_{r_\mp-k+j,\mp} 
- \hat h_{j,\pm} \tilde f_{r_\mp -k+j,\mp}), & k > r_\mp,
\end{cases} 
\\ \no
&\calH_{\ell,\pm} =  i h_{\ell,\pm} \big(\tilde g_{r_\mp,\mp} -\tilde g_{r_\pm,\pm}\big)\\ \no
&\quad + 2 i \sum_{k=0}^\ell \big(g_{k,\pm} (\tilde h_{r_\mp -1-\ell+k,\mp} 
-\tilde h_{r_\pm +\ell-k,\pm})
+ h_{k,\pm} (\tilde g_{r_\pm +\ell-k,\pm}-\tilde g_{r_\mp -\ell+k,\mp})\big)   \\  
&\quad + 2i \sum_{k=0}^{\ell} c_{\ell-k,\pm} \times
\begin{cases}      
0, &0\leq k < r_\mp, \\ \lb{ALT6.39}
- \sum_{j=0}^{k-r_\mp}\hat g_{j,\pm}\tilde h_{r_\mp-1-k+j,\mp}\\
+ \sum_{j=0}^{k-r_\mp-1} \hat h_{j,\pm}\tilde g_{r_\mp-k+j,\mp}, & k \geq r_\mp.
\end{cases} 
\end{align}
Explicitly (assuming $r_\pm >2$),
\begin{align} \no  
\calG_{0,\pm} &= 0, \\ \label{ALT6.40a}
\calG_{1,\pm} &= i f_{0,\pm}(\tilde h_{r_\mp -1,\mp}- \tilde h_{r_\pm,\pm}) + 
i h_{0,\pm} (\tilde f_{r_\pm,\pm} -\tilde f_{r_\mp -1,\mp}), \\ \no
\calG_{2,\pm} &= i f_{0,\pm}(\tilde h_{r_\mp -2,\mp}- \tilde h_{r_\pm +1,\pm}) + 
i f_{1,\pm} (\tilde h_{r_\mp -1,\mp} -\tilde h_{r_\pm,\pm}) \\ \no
& \quad + i h_{0,\pm}(\tilde f_{r_\pm +1,\pm}- \tilde f_{r_\mp -2,\mp}) + 
i h_{1,\pm} (\tilde f_{r_\pm,\pm} -\tilde f_{r_\mp -1,\mp}), \, \text{ etc.,}  
\\  \label{ALT6.40b}
\calH_{0,\pm} &= 2i g_{0,\pm}(\tilde h_{r_\mp -1,\mp}- \tilde h_{r_\pm,\pm}) + 
i h_{0,\pm} (\tilde g_{r_\pm,\pm} -\tilde g_{r_\mp,\mp}), \\ \no
\calH_{1,\pm} &= 2i g_{0,\pm} (\tilde h_{r_\mp -2,\mp}- \tilde h_{r_\pm +1,\pm})
+2i g_{1,\pm}(\tilde h_{r_\mp -1,\mp}- \tilde h_{r_\pm,\pm})\\ \no
& \quad + 2i h_{0,\pm}(\tilde g_{r_\pm +1,\pm}- \tilde g_{r_\mp -1,\mp})
+ i h_{1,\pm}(\tilde g_{r_\pm,\pm}- \tilde g_{r_\mp,\mp}), \, \text{ etc.}
\end{align}

Again by \eqref{ALT6.18B}--\eqref{ALT6.18D}, the initial conditions \eqref{ALT6.33a} are uniquely determined by the initial divisor $\calD_{\humu(n_0,t_{0,\ul r})}$ in 
\eqref{ALT6.17} and by the data in \eqref{ALT6.18A}.

Being autonomous with polynomial right-hand sides, there
exists a $T_0>0$, such that the first-order initial value problem
\eqref{ALT6.22}, \eqref{ALT6.37}, \eqref{ALT6.37a} with initial conditions 
\eqref{ALT6.33a} has a unique solution 
\begin{align}
&f_{\ell,\pm}=f_{\ell,\pm}(n_0,t_{\ul r}), \quad \ell=0,\dots,p_\pm -1, \no \\
&g_{\ell,\pm}=g_{\ell,\pm}(n_0,t_{\ul r}), \quad \ell=0,\dots,p_\pm, \lb{ALT6.41} \\
&h_{\ell,\pm}=h_{\ell,\pm}(n_0,t_{\ul r}), \quad \ell=0,\dots,p_\pm -1,  \no \\
& \text{for all } \, t_{\ul r}\in (t_{0,\ul r}-T_0,t_{0,\ul r}+T_0)   \no
\end{align} 
(cf., e.g., \cite[Sect.\ III.10]{Walter:1998}). Given the solution \eqref{ALT6.41},
we proceed as in Section \ref{ALSs4} and introduce the following quantities 
(where $t_{\ul r}\in (t_{0,\ul r}-T_0,t_{0,\ul r}+T_0)$):
\begin{align} 
\alpha^+(n_0,t_{\ul r}) &= \alpha(n_0,t_{\ul r}) \bigg(\prod_{m=0}^{2p+1}E_m\bigg)^{1/2} \,  
\prod_{k=1}^{q(n_0,t_{\ul r})}\mu_k(n_0,t_{\ul r})^{-p_k(n_0,t_{\ul r})}, \\  
F_{\ul p}(z,n_0,t_{\ul r})&=\sum_{\ell=1}^{p_-} f_{p_- -\ell,-}(n_0,t_{\ul r})z^{-\ell} 
+ \sum_{\ell=0}^{p_+ -1} f_{p_+ -1-\ell,+}(n_0,t_{\ul r})z^\ell   \no \\
&= - c_{0,+}\alpha^+(n_0,t_{\ul r})z^{-p_-}
\prod_{k=1}^{q(n_0,t_{\ul r})} (z-\mu_k(n_0,t_{\ul r}))^{p_k(n_0,t_{\ul r})},  \lb{ALT6.42} \\
G_{\ul p}(z,n_0,t_{\ul r}) &= \frac{1}{2} \bigg(\frac{1}{\alpha(n_0,t_{\ul r})}-\frac{z}{\alpha^+(n_0,t_{\ul r})} \bigg) F_{\ul p}(z,n_0,t_{\ul r})\\ \no
&\hspace*{-1.8cm} -\frac{z}{2\alpha^+(n_0,t_{\ul r})}F_{\ul p}(z,n_0,t_{\ul r})
\sum_{k=1}^{q(n_0,t_{\ul r})}\sum_{\ell=0}^{p_k(n_0,t_{\ul r})-1} 
\f{\big(d^\ell
\big(\zeta^{-1} y(P)\big)/d\zeta^\ell\big)\big|_{P=(\zeta,\eta)
=\hat \mu_k(n_0,t_{\ul r})}}{\ell!(p_k(n_0,t_{\ul r})-\ell-1)!}\\ \no
& \hspace*{-1.8cm} \times 
\Bigg(\f{d^{p_k(n_0,t_{\ul r})-\ell-1}}{d \zeta^{p_k(n_0,t_{\ul r})-\ell-1}}\Bigg(
(z-\zeta)^{-1}\prod_{k'=1, \, k'\neq k}^{q(n_0,t_{\ul r})} 
(\zeta-\mu_{k'}(n_0,t_{\ul r}))^{-p_{k'}(n_0,t_{\ul r})} 
\Bigg)\Bigg)\Bigg|_{\zeta=\mu_k(n_0,t_{\ul r})}.
\end{align}
In particular, this leads to the divisor 
\begin{equation}
\calD_{\humu(n_0,t_{\ul r})}\in\symq    
\end{equation} 
and the sign of $y$ in \eqref{ALT6.42} is chosen as usual by
\begin{align}
\begin{split}
\hat\mu_k(n_0,t_{\ul r})&=(\mu_k(n_0,t_{\ul r}),
(2/c_{0,+})\mu_j(n_0,t_{\ul r})^{p_-} G_{\ul p}(\mu_k(n_0,t_{\ul r}),n_0,t_{\ul r})),\\
&\hspace*{5.68cm}   k=1,\dots,q(n_0,t_{\ul r}),   \lb{ALT6.43}
\end{split}
\end{align}
and
\begin{equation}
\humu(n_0,t_{\ul r}) = \{\underbrace{\mu_1(n_0,t_{\ul r}),\dots,
\mu_1(n_0,t_{\ul r})}_{p_1(n_0,t_{\ul r}) \text{ times}},
\dots,\underbrace{\mu_{q(n_0,t_{\ul r})}(n_0,t_{\ul r}),\dots,
\mu_{q(n_0,t_{\ul r})}(n_0,t_{\ul r})}_{p_{q(n_0,t_{\ul r})}(n_0,t_{\ul r}) 
\text{ times}}\}  \lb{ALT6.40}
\end{equation}
with 
\begin{equation}
\mu_k(n_0,t_{\ul r})\neq \mu_{k'}(n_0,t_{\ul r}) \, \text{ for } \, k\neq
k', \;  k,k'=1,\dots,q(n_0,t_{\ul r}),    \lb{ALT6.44}
\end{equation}
and
\begin{equation}
p_k(n_0,t_{\ul r})\in\bbN, \; k=1,\dots,q(n_0,t_{\ul r}), \quad 
\sum_{k=1}^{q(n_0,t_{\ul r})} p_k(n_0,t_{\ul r}) = p.   
\end{equation}
By construction (cf.\ \eqref{ALT6.43}), the divisor
$\calD_{\humu(n_0,t_{\ul r})}$ is nonspecial for all $t_{\ul r}\in (t_{0,\ul r}-T_0,t_{0,\ul r}+T_0)$.

In exactly the same manner as in \eqref{ALS4.16}--\eqref{ALS4.18} one then
infers that $F_{\ul p}(\dott,n_0,t_{\ul r})$ divides $R_{\ul p}-G_{\ul p}^2$ (since
$t_{\ul r}$ is just a fixed parameter).

As in Section \ref{ALSs4}, the assumption that the Laurent polynomial $F_{\ul p}(\dott,n_0-1,t_{\ul r})$ is
of full order is implied by the hypothesis that
\begin{equation}
\calD_{\humu(n_0,t_{\ul r})}\in\calM_0 \, \text{ for all } \, t_{\ul r}\in
(t_{0,\ul r}-T_0,t_{0,\ul r}+T_0).   \lb{ALT6.45}
\end{equation}

The explicit formula for $\beta(n_0,t_{\ul r})$ then reads (for $t_{\ul r}\in
(t_{0,\ul r}-T_0,t_{0,\ul r}+T_0)$)
\begin{align}
& \alpha^+(n_0,t_{\ul r})\beta(n_0,t_{\ul r})  \\ 
& \quad = -\f{1}{2} \sum_{k=1}^{q(n_0,t_{\ul r})}
\f{\big(d^{p_k(n_0,t_{\ul r})-1} \big(\zeta^{-1}y(P)\big)/d \zeta^{p_k(n_0,t_{\ul r})-1}\big)\big|_{P=(\zeta,\eta)
=\hat\mu_k(n_0,t_{\ul r})}}{(p_k(n_0,t_{\ul r})-1)!}  \no \\
& \qquad \times \prod_{k'=1, \, k'\neq k}^{q(n_0,t_{\ul r})}
(\mu_k(n_0,t_{\ul r})-\mu_{k'}(n_0,t_{\ul r}))^{-p_k(n_0,t_{\ul r})}  \no \\
& \qquad
+\f{1}{2}\bigg(\f{\alpha^+(n_0,t_{\ul r})}{\alpha(n_0,t_{\ul r})} 
+ \sum_{k=1}^{q(n_0,t_{\ul r})}p_k(n_0,t_{\ul r})\mu_k(n_0,t_{\ul r}) 
- \f{1}{2}\sum_{m=0}^{2p+1}E_m \bigg). \lb{ALT6.45A}
\end{align}

With \eqref{ALT6.35}--\eqref{ALT6.45A} in place, we can now apply the stationary
formalism as summarized in Theorem \ref{ALSt4.3}, subject to the
additional hypothesis \eqref{ALT6.45}, for each fixed $t_{\ul r}\in
(t_{0,\ul r}-T_0,t_{0,\ul r}+T_0)$. This yields, in particular, the quantities
\begin{equation} 
F_{\ul p}, \, G_{\ul p}, \, H_{\ul p}, \, \alpha, \, \beta, \, \text{ and } \, \humu, \, \hunu \,\text{ for } \, 
(n,t_{\ul r}) \in \bbZ\times(t_{0,\ul r}-T_0,t_{0,\ul r}+T_0),  \lb{ALT6.45a}
\end{equation}
which are of the form \eqref{ALT6.42}--\eqref{ALT6.45A}, replacing the fixed
$n_0\in\bbZ$ by an arbitrary $n\in\bbZ$. In addition, one has the following result.

\begin{lemma}  \lb{ALTl6.2}
Assume Hypothesis \ref{ALTh6.1} and condition \eqref{ALT6.45}. Then the
following relations are valid on $\bbC\times \bbZ \times (t_{0,\ul r}-T_0,t_{0,\ul r}+T_0)$, 
\begin{align}
G_{\ul p}^2 - F_{\ul p} H_{\ul p} &= R_{\ul p},  \lb{ALT6.46} \\ 
z (G_{\ul p}^- - G_{\ul p}) + z \beta F_{\ul p} + \alpha H_{\ul p}^- &=0,  \lb{ALT6.47} \\
z \beta F_{\ul p}^- + \alpha H_{\ul p} - G_{\ul p} + G_{\ul p}^- &=0, \lb{ALT6.48} \\
- F_{\ul p} + z F_{\ul p}^- + \alpha (G_{\ul p} + G_{\ul p}^-)&=0,  \lb{ALT6.49} \\
z \beta (G_{\ul p} + G_{\ul p}^-) - z H_{\ul p} + H_{\ul p}^-&=0,  \lb{ALT6.49a} 
\end{align}
and hence the stationary part, \eqref{ALzctstat}, of the algebro-geometric
initial value problem holds, 
\begin{equation}
UV_{\ul p} - V^+_{\ul p}U = 0 \, \text{ on } \, \bbC\times \bbZ \times
(t_{0,\ul r}-T_0,t_{0,\ul r}+T_0).   \lb{ALT6.50}   
\end{equation}
In particular, Lemmas \ref{lAL3.1}--\ref{lAL3.4} apply.
\end{lemma}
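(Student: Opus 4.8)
The plan is to reduce the entire statement to the stationary result already in hand, Theorem \ref{ALSt4.3}, by freezing the time variable and treating $t_{\ul r}$ as a parameter. Fix an arbitrary $t_{\ul r}\in(t_{0,\ul r}-T_0,t_{0,\ul r}+T_0)$. The construction preceding the lemma has produced at the base point $n_0$ the data $\alpha(n_0,t_{\ul r})\in\bbC\setminus\{0\}$, the fixed constant $c_{0,+}\in\bbC\setminus\{0\}$, and the nonspecial divisor $\calD_{\humu(n_0,t_{\ul r})}$, with $F_{\ul p}(\dott,n_0,t_{\ul r})$ and $G_{\ul p}(\dott,n_0,t_{\ul r})$ given by the interpolation formula \eqref{ALT6.42} and $H_{\ul p}(\dott,n_0,t_{\ul r})$ obtained by dividing $G_{\ul p}^2-R_{\ul p}$ by $F_{\ul p}$. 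The first thing I would check is that hypothesis \eqref{ALT6.45} guarantees $\calD_{\humu(n_0,t_{\ul r})}\in\calM_0$ for every such $t_{\ul r}$, so that the divisor is admissible and nonspecial and, in particular, $\alpha(n_0,t_{\ul r})\neq0$. These are precisely the input requirements of the stationary Ablowitz--Ladik algorithm.

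I would then invoke Theorem \ref{ALSt4.3} for this frozen $t_{\ul r}$: its conclusion is that the algorithm extends the base-point data to sequences $\alpha(\dott,t_{\ul r}),\beta(\dott,t_{\ul r})$ and Laurent polynomials $F_{\ul p},G_{\ul p},H_{\ul p}$ on all of $\bbZ$ satisfying $\sAL_{\ul p}(\alpha,\beta)=0$. By the discussion in step \textbf{(X)} of Section \ref{ALSs4}, this stationary system is equivalent to the recursion relations \eqref{F^-}--\eqref{G^+}, and these are in turn equivalent to the four component relations \eqref{ALT6.47}--\eqref{ALT6.49a}, hence to the stationary zero-curvature equation \eqref{ALT6.50}. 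The quadratic identity \eqref{ALT6.46} is built into the construction at $n_0$: it is the defining relation of $H_{\ul p}$ together with the fact (step \textbf{(III)} of Section \ref{ALSs4}, via Lemma \ref{ALlB.1}) that the interpolation conditions force $F_{\ul p}$ to divide $G_{\ul p}^2-R_{\ul p}$; the $n\to n\pm1$ steps of the algorithm then propagate it to all $n$ since $(G_{\ul p}^\pm)^2-F_{\ul p}^\pm H_{\ul p}^\pm=R_{\ul p}$.

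Since $t_{\ul r}$ was an arbitrary point of the interval, the relations \eqref{ALT6.46}--\eqref{ALT6.49a} and \eqref{ALT6.50} hold on all of $\bbC\times\bbZ\times(t_{0,\ul r}-T_0,t_{0,\ul r}+T_0)$, and the basic hypotheses of Section \ref{s3} being verified, Lemmas \ref{lAL3.1}--\ref{lAL3.4} apply. The one genuinely delicate point, and where I expect the real obstacle to lie, is uniformity in $t_{\ul r}$: one must ensure that the same fixed curve $\calK_p$, equivalently the same fixed $R_{\ul p}$ determined by $\{E_m\}$ at $t_{0,\ul r}$, is the object being divided out at every time. This is exactly what condition \eqref{ALT6.45} secures, by keeping $\calD_{\humu(n_0,t_{\ul r})}$ inside $\calM_0$ so that $F_{\ul p}(\dott,n_0,t_{\ul r})$ never drops order and the algorithm never degenerates. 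Conceptually this reflects that the time system \eqref{ALT6.22}, \eqref{ALT6.37}, \eqref{ALT6.37a} is the coefficient form of \eqref{ALF_t}--\eqref{ALH_t}, for which the direct substitution gives $\partial_{t_{\ul r}}(G_{\ul p}^2-F_{\ul p}H_{\ul p})=0$, so that $R_{\ul p}$, and with it $\calK_p$, is indeed time-independent. Once this invariance is in place, the lemma is purely a time-slice application of the stationary theory.
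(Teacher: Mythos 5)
Your proposal is correct and follows essentially the same route as the paper: the paper gives no separate proof of this lemma but derives it directly from the paragraph preceding it, which applies the stationary formalism of Theorem \ref{ALSt4.3} for each fixed $t_{\ul r}\in(t_{0,\ul r}-T_0,t_{0,\ul r}+T_0)$, with condition \eqref{ALT6.45} guaranteeing admissibility of the divisor so that the algorithm runs for all $n\in\bbZ$. Your additional remarks on the propagation of \eqref{ALT6.46} and the time-independence of $R_{\ul p}$ are consistent with steps \textbf{(I)}--\textbf{(X)} of Section \ref{ALSs4} and with Lemma \ref{lAL4.3}.
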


Lemma \ref{ALTl6.2} now raises the following important consistency issue:
On the one hand, one can solve the initial value problem \eqref{ALT6.61},
\eqref{ALT6.62} at $n=n_0$ in some interval $t_{\ul r} \in
(t_{0,\ul r}-T_0,t_{0,\ul r}+T_0)$, and then extend the quantities
$F_{\ul p}, G_{\ul p}, H_{\ul p}$ to all $\bbC\times\bbZ\times(t_{0,\ul r}-T_0,t_{0,\ul r}+T_0)$
using the stationary algorithm summarized in Theorem \ref{ALSt4.3} as just 
recorded in Lemma \ref{ALTl6.2}. On the other hand, one can solve the
initial value problem \eqref{ALT6.61}, \eqref{ALT6.62} at $n=n_1$, $n_1\neq
n_0$, in some interval $t_{\ul r} \in (t_{0,\ul r}-T_1,t_{0,\ul r}+T_1)$ with the
initial condition obtained by applying the discrete algorithm to the
quantities $F_{\ul p}, G_{\ul p}, H_{\ul p}$ starting at $(n_0,t_{0,\ul r})$ and ending at
$(n_1,t_{0,\ul r})$. Consistency then requires that the two approaches yield
the same result at $n=n_1$ for $t_{\ul r}$ in some open neighborhood of
$t_{0,\ul r}$. 

Equivalently, and pictorially speaking, envisage a vertical $t_{\ul r}$-axis and
a horizontal $n$-axis. Then, consistency demands that first solving the
initial value problem \eqref{ALT6.61}, \eqref{ALT6.62} at $n=n_0$ in some
$t_{\ul r}$-interval around $t_{0,\ul r}$ and using the stationary algorithm to
extend $F_{\ul p}, G_{\ul p}, H_{\ul p}$ horizontally to $n=n_1$ and the same $t_{\ul r}$-interval
around $t_{0,\ul r}$, or first applying the stationary algorithm starting at
$(n_0,t_{0,\ul r})$ to extend $F_{\ul p}, G_{\ul p}, H_{\ul p}$ horizontally to $(n_1,t_{0,\ul r})$
and then solving  the initial value problem \eqref{ALT6.61}, \eqref{ALT6.62} at
$n=n_1$ in some $t_{\ul r}$-interval around $t_{0,\ul r}$ should produce the same
result at $n=n_1$ in a sufficiently small open $t_{\ul r}$ interval around
$t_{0,\ul r}$.

To settle this consistency issue, we will prove the following result. To this end 
we find it convenient to replace the initial value problem \eqref{ALT6.61}, 
\eqref{ALT6.62} by the original
$t_{\ul r}$-dependent zero-curvature equation \eqref{ALzctilde},
$U_{t_{\ul r}}+U\ti V_{\ul r}- \ti V^+_{\ul r}U=0$ on
$\bbC\times\bbZ\times(t_{0,\ul r}-T_0,t_{0,\ul r}+T_0)$.  

\begin{lemma}  \lb{ALTl6.3}
Assume Hypothesis \ref{ALTh6.1} and condition \eqref{ALT6.45}. Moreover,
suppose that \eqref{ALF_t}--\eqref{ALH_t} hold on $\bbC\times\{n_0\}\times
(t_{0,\ul r}-T_0,t_{0,\ul r}+T_0)$. Then \eqref{ALF_t}--\eqref{ALH_t} hold on $\bbC\times\bbZ\times (t_{0,\ul r}-T_0,t_{0,\ul r}+T_0)$, that is,
\begin{align}
F_{\ul p,t_{\ul r}}(z,n,t_{\ul r}) &= - 2 i G_{\ul p}(z,n,t_{\ul r}) \ti F_{\ul r}(z,n,t_{\ul r})  \no \\ \lb{ALT6.51}
& \quad  + i \big(\ti G_{\ul r}(z,n,t_{\ul r}) + \ti K_{\ul r}(z,n,t_{\ul r})\big)F_{\ul p}(z,n,t_{\ul r}), \\  \lb{ALT6.52}
G_{\ul p,t_{\ul r}}(z,n,t_{\ul r}) &= i F_{\ul p}(z,n,t_{\ul r}) \ti H_{\ul r}(z,n,t_{\ul r}) - i H_{\ul p}(z,n,t_{\ul r}) \ti F_{\ul r}(z,n,t_{\ul r}), \\  
H_{\ul p,t_{\ul r}}(z,n,t_{\ul r}) &= 2 i G_{\ul p}(z,n,t_{\ul r}) \ti H_{\ul r}(z,n,t_{\ul r}) 
\no \\ \lb{ALT6.53}
& \quad  - i \big(\ti G_{\ul r}(z,n,t_{\ul r}) + \ti K_{\ul r}(z,n,t_{\ul r})\big)H_{\ul p}(z,n,t_{\ul r}),\\
& \hspace*{.75cm} 
(z,n,t_{\ul r})\in \bbC\times\bbZ\times (t_{0,\ul r}-T_0,t_{0,\ul r}+T_0). \no
\end{align}
Moreover,
\begin{align}
 \phi_{t_{\ul r}}(P,n,t_{\ul r}) &=i \ti F_{\ul r}(z,n,t_{\ul r}) \phi^2(P,n,t_{\ul r}) \no \\
& \quad \lb{ALT6.54}
- i \big(\ti G_{\ul r}(z,n,t_{\ul r}) + \ti K_{\ul r}(z,n,t_{\ul r})\big)\phi(P,n,t_{\ul r}) +  i \ti H_{\ul r}(z,n,t_{\ul r}),\\ 
\alpha_{t_{\ul r}}(n,t_{\ul r}) &= i z \ti F_{\ul r}^-(z,n,t_{\ul r}) \no
\\  \lb{ALT6.55}
& \quad
+ i \alpha(n,t_{\ul r}) \big(\ti G_{\ul r}(z,n,t_{\ul r}) + \ti {K}_{\ul r}^-(z,n,t_{\ul r})\big) - i \ti F_{\ul r}(z,n,t_{\ul r}),\\  
\beta_{t_{\ul r}}(n,t_{\ul r}) &= - i \beta(n,t_{\ul r}) \big(\ti G_{\ul r}^-(z,n,t_{\ul r}) + \ti K_{\ul r}(z,n,t_{\ul r})\big) \no
\\ \lb{ALT6.56}
& \quad
+ i\ti H_{\ul r}(z,n,t_{\ul r}) - iz^{-1} \ti H_{\ul r}^-(z,n,t_{\ul r}), \\
& \hspace*{-.5cm} 
(z,n,t_{\ul r})\in \bbC\times\bbZ\times (t_{0,\ul r}-T_0,t_{0,\ul r}+T_0). \no
\end{align}
\end{lemma}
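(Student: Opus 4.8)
The plan is to anchor everything at $n_0$, where the time-evolution equations \eqref{ALF_t}--\eqref{ALH_t} hold by hypothesis (this is exactly the content of the autonomous system solved at $n_0$), and then to propagate them to every $n\in\bbZ$ by induction. The key enabling fact is that Lemma \ref{ALTl6.2} makes the stationary relations \eqref{ALT6.46}--\eqref{ALT6.50} available on all of $\bbC\times\bbZ\times(t_{0,\ul r}-T_0,t_{0,\ul r}+T_0)$; in particular $G_{\ul p}^2-F_{\ul p}H_{\ul p}=R_{\ul p}$ with $R_{\ul p}$ (equivalently $y$) independent of $t_{\ul r}$, together with the $n$-shift relations \eqref{F^-}--\eqref{G^-}, \eqref{F^+}--\eqref{G^+}. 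The most economical object to propagate is the scalar function $\phi$, since its Riccati-type evolution \eqref{ALT6.54} packages the three equations \eqref{ALT6.51}--\eqref{ALT6.53} simultaneously, and $\phi$ at neighboring sites is tied together by the $n$-Riccati relation \eqref{ALriccati_time}.

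First I would establish \eqref{ALT6.54} at $n=n_0$. Differentiating the defining relation $\phi=((c_{0,+}/2)z^{-p_-}y+G_{\ul p})/F_{\ul p}$ from \eqref{AL4.22} with respect to $t_{\ul r}$ (using $y_{t_{\ul r}}=0$), substituting the hypotheses \eqref{ALF_t}, \eqref{ALG_t} at $n_0$, and reducing with the quadratic identity \eqref{ALR_t} together with the two representations of $\phi$ in \eqref{AL4.22} yields precisely the Riccati evolution \eqref{ALT6.54} at $n_0$. For the inductive step, suppose \eqref{ALT6.51}--\eqref{ALT6.53}, equivalently \eqref{ALT6.54}, hold at some $n$. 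Because $\alpha$ and $\alpha^+$ are the extreme Laurent coefficients of $F_{\ul p}$ (namely $f_{0,-}=c_{0,-}\alpha$ at $z^{-p_-}$ and $f_{0,+}=-c_{0,+}\alpha^+$ at $z^{p_+-1}$), and $\beta,\beta^+$ are the extreme coefficients of $H_{\ul p}$ (at $z^{p_+}$ and $z^{-(p_--1)}$), comparison of these coefficients in \eqref{ALT6.51} and \eqref{ALT6.53} determines $\alpha_{t_{\ul r}},\beta_{t_{\ul r}}$ at both $n$ and $n+1$ in closed form; in particular $\alpha^+_{t_{\ul r}}$ and $\beta^+_{t_{\ul r}}$ become available at the current site.

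With $\alpha^+_{t_{\ul r}},\beta^+_{t_{\ul r}}$ in hand, I differentiate the shifted Riccati relation $\alpha^+\phi^+\phi-\phi+z\phi^+=z\beta^+$ (i.e.\ \eqref{ALriccati_time} at $n+1$) with respect to $t_{\ul r}$, solve the result for $\phi^+_{t_{\ul r}}$, insert the induction hypothesis \eqref{ALT6.54} for $\phi_{t_{\ul r}}$, and reduce once more with the Riccati relation itself to obtain \eqref{ALT6.54} at $n+1$. Reading \eqref{ALT6.51}--\eqref{ALT6.53} back off at $n+1$ is then automatic: differentiating \eqref{ALphi 3t} gives $\phi_{t_{\ul r}}-\phi(P^*)_{t_{\ul r}}=(\phi(P)-\phi(P^*))[\,2i\ti F_{\ul r}G_{\ul p}/F_{\ul p}-i(\ti G_{\ul r}+\ti K_{\ul r})\,]$, which upon using \eqref{ALphi 2t}, \eqref{ALphi 3t} collapses to \eqref{ALT6.51}, and the analogous manipulations of \eqref{ALphi 2t}, \eqref{ALphi 1t} produce \eqref{ALT6.52}, \eqref{ALT6.53} (here one uses that $\ti F_{\ul r},\ti G_{\ul r}+\ti K_{\ul r},\ti H_{\ul r}$ take the same value at $P$ and $P^*$). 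The step $n\to n-1$ is identical with \eqref{F^-}--\eqref{G^-} replacing the shifted relations, so the induction closes in both directions. Finally, \eqref{ALT6.55} and \eqref{ALT6.56} follow at every $n$ from the now-established \eqref{ALT6.51}, \eqref{ALT6.53} by the same extraction of leading coefficients, the cancellation of the residual intermediate-order $z$-dependence being guaranteed by the companion relations \eqref{AL1,1 r}, \eqref{AL2,2 r} for $\ti V_{\ul r}$, which hold for the reconstructed $\alpha,\beta$ by the recursive structure exactly as their $V_{\ul p}$-counterparts do in Lemma \ref{ALTl6.2}.

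I expect the \emph{main obstacle} to be the algebraic reduction in the inductive step: verifying that, after substituting $\phi_{t_{\ul r}}$ and the extracted $\alpha^+_{t_{\ul r}},\beta^+_{t_{\ul r}}$, the differentiated shifted Riccati relation collapses exactly onto \eqref{ALT6.54} at the new site. This is where the $n$-dynamics \eqref{ALriccati_time} and the $t_{\ul r}$-dynamics must be balanced against each other, and it relies on repeated careful use of $G_{\ul p}^2-F_{\ul p}H_{\ul p}=R_{\ul p}$ and on the consistency $\gamma_{t_{\ul r}}=-(\alpha\beta)_{t_{\ul r}}$ that keeps the differentiated shift relations \eqref{F^+}--\eqref{G^+} compatible. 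The remaining computations, including the base-case differentiation of $\phi$, are routine.
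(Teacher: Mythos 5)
Your proposal is correct and follows the same global strategy as the paper: the base case at $n_0$ is obtained by differentiating the definition \eqref{AL4.22} of $\phi$ and inserting the hypothesized evolutions of $F_{\ul p}$ and $G_{\ul p}$; the evolutions of $\alpha,\alpha^+,\beta,\beta^+$ are read off from the extreme Laurent coefficients of \eqref{ALT6.51} and \eqref{ALT6.53} (cf.\ \eqref{ALT6.34_0}, \eqref{ALT6.40b}); and the result is propagated to all $n$ by a bidirectional induction resting on Lemma \ref{ALTl6.2}. The one place you diverge is the packaging of the inductive step: the paper differentiates the polynomial shift relations \eqref{F^-}--\eqref{G^+} with respect to $t_{\ul r}$ and carries the triple $(F_{\ul p,t_{\ul r}},G_{\ul p,t_{\ul r}},H_{\ul p,t_{\ul r}})$ from $n$ to $n\pm1$, whereas you carry the single scalar $\phi_{t_{\ul r}}$ via the differentiated Riccati relation \eqref{ALriccati_time} and then recover \eqref{ALT6.51}--\eqref{ALT6.53} at the new site by differentiating \eqref{ALphi 1t}--\eqref{ALphi 3t}. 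Since \eqref{F^-}--\eqref{G^+} are themselves consequences of the Riccati relation together with those symmetric-function identities, the two inductive steps are equivalent in content; yours trades three polynomial identities for one scalar Möbius computation at the cost of having to re-extract $F_{\ul p,t_{\ul r}},G_{\ul p,t_{\ul r}},H_{\ul p,t_{\ul r}}$ afterwards, and both require the same external input, namely $\alpha^+_{t_{\ul r}},\beta^+_{t_{\ul r}}$ from the coefficient extraction. One small imprecision: the $z$-independence of $z\ti F_{\ul r}^-+\alpha(\ti G_{\ul r}+\ti K_{\ul r}^-)-\ti F_{\ul r}$ needed for \eqref{ALT6.55}, \eqref{ALT6.56} is not a consequence of \eqref{AL1,1 r}, \eqref{AL2,2 r} but of the recursion relations \eqref{ALf_l-}, \eqref{ALh_l+} satisfied by the tilded coefficients (this is exactly how the paper passes to \eqref{ALT6.58}); you do invoke the recursive structure elsewhere, so this is a matter of attribution rather than a gap.
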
 
\begin{proof}
By Lemma \ref{ALTl6.2} we have \eqref{AL4.22}, \eqref{AL4.23}, \eqref{ALriccati_time}, 
\eqref{ALphi 1t}--\eqref{ALphi 3t}, and \eqref{ALT6.46}--\eqref{ALT6.49a} for 
$(n,t_{\ul r})\in \bbZ\times (t_{0,\ul r}-T_0,t_{0,\ul r}+T_0)$ at our disposal.

Differentiating \eqref{AL4.22} at $n=n_0$ with respect to $t_{\ul r}$ and inserting
\eqref{ALT6.51} and \eqref{ALT6.52} at $n=n_0$ then yields \eqref{ALT6.54} at $n=n_0$.

We note that the sequences $\tilde f_{\ell,\pm}$, $\tilde g_{\ell,\pm}$, $\tilde h_{\ell,\pm}$ satisfy the 
recursion relations \eqref{ALg_l+}--\eqref{ALh_l-} (since the homogeneous sequences satisfy these relations). 
Hence, to prove \eqref{ALT6.55} and \eqref{ALT6.56} at $n=n_0$ it remains to show 
\begin{align} 
\begin{split} \label{ALT6.58}
\alpha_{t_{\ul r}} &=  i \alpha(\tilde g_{r_+,+} + \tilde g_{r_-,-}^-) + 
i(\tilde f_{r_- -1,-}^- -\tilde f_{r_+ -1,+}), \\
\beta_{t_{\ul r}}&=  -i \beta(\tilde g_{r_+,+}^- + \tilde g_{r_-,-}) - 
i(\tilde h_{r_+ -1,+}^-  -\tilde h_{r_- -1,-}).
\end{split}
\end{align}
But this follows from \eqref{ALT6.51},  \eqref{ALT6.53} at $n=n_0$  
(cf.\ \eqref{ALT6.34_0},  \eqref{ALT6.40b})  
\begin{align} \no
\alpha_{t_{\ul r}} &=
i \alpha(\tilde g_{r_+,+}- \tilde g_{r_-,-}) + 
i(\tilde f_{r_-,-} -\tilde f_{r_+ -1,+}),\\ \no
\beta_{t_{\ul r}} &=
i \beta(\tilde g_{r_+,+}- \tilde g_{r_-,-}) + 
i(\tilde h_{r_- -1,-} -\tilde h_{r_+,+}).
\end{align}
Inserting now \eqref{ALf_l-} at $\ell =r_- -1$ and \eqref{ALh_l+} at $\ell =r_+ -1$ 
then yields \eqref{ALT6.58}.

For the step $n=n_0\mp1$ we differentiate \eqref{F^-}--\eqref{G^+} (which are equivalent to \eqref{ALT6.46}--\eqref{ALT6.49a})
and insert \eqref{ALT6.51}--\eqref{ALT6.53}, \eqref{ALalpha_t}--\eqref{AL22} at 
$n=n_0$. 
For the case $n>n_0$ we obtain $\alpha^+_{t_{\ul r}}$ and $\beta^+_{t_{\ul r}}$  from  
\eqref{ALT6.51},  \eqref{ALT6.53} at $n=n_0$ 
as before using the other two signs in \eqref{ALT6.34_0}, \eqref{ALT6.40b}. Iterating these arguments proves \eqref{ALT6.51}--\eqref{ALT6.56} for 
$(z,n,t_{\ul r})\in \bbC\times\bbZ\times (t_{0,\ul r}-T_0,t_{0,\ul r}+T_0)$.
\end{proof}

We summarize Lemmas \ref{ALTl6.2} and \ref{ALTl6.3} next.

\begin{theorem}  \lb{ALTt6.4}
Assume Hypothesis \ref{ALTh6.1} and condition \eqref{ALT6.45}. Moreover,
suppose that 
\begin{align}
&f_{\ell,\pm}=f_{\ell,\pm}(n_0,t_{\ul r}), \quad \ell=0,\dots,p_\pm -1, \no \\
&g_{\ell,\pm}=g_{\ell,\pm}(n_0,t_{\ul r}), \quad \ell=0,\dots,p_\pm, \lb{ALT6.60} \\
&h_{\ell,\pm}=h_{\ell,\pm}(n_0,t_{\ul r}), \quad \ell=0,\dots,p_\pm -1  \no \\
& \text{for all } \, t_{\ul r}\in (t_{0,\ul r}-T_0,t_{0,\ul r}+T_0),   \no
\end{align}
satisfy the autonomous first-order system of ordinary differential equations 
$($for fixed $n=n_0$$)$ 
\begin{align}
& f_{\ell,\pm,t_{\ul r}}=\calF_{\ell, \pm}(f_{k,-}, f_{k,+}, g_{k,-}, g_{k,+}), \quad 
\ell=0,\dots,p_\pm -1, \no \\
& g_{\ell,\pm,t_{\ul r}}=\calG_{\ell, \pm}(f_{k,-}, f_{k,+}, h_{k,-}, h_{k,+}), \quad 
\ell=0,\dots,p_\pm, \lb{ALT6.61} \\ \no
& h_{\ell,\pm,t_{\ul r}}=\calH_{\ell, \pm}(g_{k,-}, g_{k,+}, h_{k,-}, h_{k,+}), \quad 
\ell=0,\dots,p_\pm -1,  
\end{align}
with $\calF_{\ell, \pm}$, $\calG_{\ell, \pm}$, $\calH_{\ell, \pm}$ given by 
\eqref{ALT6.34}, \eqref{ALT6.38},  \eqref{ALT6.39}, and with initial conditions
\begin{align} \no 
&f_{\ell,\pm}(n_0,t_{0,\ul r}), \quad \ell=0,\dots,p_\pm -1, \\  \lb{ALT6.62}
&g_{\ell,\pm}(n_0,t_{0,\ul r}),\quad \ell=0,\dots,p_\pm,   \\ \no
&h_{\ell,\pm}(n_0,t_{0,\ul r}),\quad \ell=0,\dots,p_\pm -1.
\end{align}
Then $F_{\ul p}$, $G_{\ul p}$, and $H_{\ul p}$ as constructed in \eqref{ALT6.42}--\eqref{ALT6.45a} 
on $\bbC\times\bbZ\times(t_{0,\ul r}-T_0,t_{0,\ul r}+T_0)$ satisfy the
zero-curvature equations \eqref{ALzctilde}, \eqref{ALzctstat}, and \eqref{ALV_t}  
on $\bbC\times \bbZ \times (t_{0,\ul r}-T_0,t_{0,\ul r}+T_0)$, 
\begin{align}
U_{t_{\ul r}}+U\ti{V}_{\ul r}-\ti V_{\ul r}^+U&=0, \label{ALT6.63} \\
UV_{\ul p}-V^+_{\ul p}U&=0, \label{ALT6.64} \\
V_{\ul p, t_{\ul r}} - \big[\ti V_{\ul r}, V_{\ul p}\big]&= 0   \label{ALT6.64a} 
\end{align}
with $U$, $V_{\ul p}$, and $\ti V_{\ul r}$ given by \eqref{ALv_osv}. In
particular, $\alpha, \beta$ satisfy \eqref{AL4.1A} and the algebro-geometric initial value problem \eqref{ALal_ivp}, \eqref{AL4.3A} on $\bbZ \times (t_{0,\ul r}-T_0,t_{0,\ul r}+T_0)$, 
\begin{align}
\begin{split}
& \ti \AL_{\ul r} (\alpha, \beta) =
\begin{pmatrix}-i\alpha_{t_{\ul r}} - \alpha(\tilde g_{r_+,+} 
+ \tilde g_{r_-,-}^-) + \tilde f_{r_+ -1,+} - \tilde f_{r_- -1,-}^-\\
-i\beta_{t_{\ul r}}+ \beta(\tilde g_{r_+,+}^- + \tilde g_{r_-,-}) 
- \tilde h_{r_- -1,-} + \tilde h_{r_+ -1,+}^-     \end{pmatrix} = 0, \label{ALT6.65}  \\
& (\alpha, \beta)\big|_{t=t_{0,\ul r}} = \big(\alpha^{(0)}, \beta^{(0)}\big) ,  
\end{split} \\
& \sAL_{\ul p} \big(\alpha^{(0)}, \beta^{(0)}\big) =  \begin{pmatrix} 
-\alpha^{(0)}(g_{p_+,+} + g_{p_-,-}^-) + f_{p_- -1,+} - f_{p_- -1,-}^-\\
  \beta^{(0)}(g_{p_+,+}^- + g_{p_-,-}) - h_{p_- -1,-} + h_{p_+ -1,+}^-       
\end{pmatrix}=0.   \label{ALT6.66}  
\end{align}
In addition,  $\alpha, \beta$ are given by 
\begin{align}
& \alpha^+(n,t_{\ul r})= \alpha(n,t_{\ul r}) \bigg(\prod_{m=0}^{2p+1}E_m\bigg)^{1/2} \, 
\prod_{k=1}^{q(n,t_{\ul r})}\mu_k(n,t_{\ul r})^{-p_k(n,t_{\ul r})}, \lb{ALT6.75} \\
& \alpha^+(n,t_{\ul r}) \beta(n,t_{\ul r}) = -\f{1}{2} \sum_{k=1}^{q(n,t_{\ul r})}
\f{\big(d^{p_k(n,t_{\ul r})-1} \big(\zeta^{-1}y(P)\big)/d \zeta^{p_k(n,t_{\ul r})-1}\big)\big|_{P=(\zeta,\eta)
=\hat\mu_k(n,t_{\ul r})}}{(p_k(n,t_{\ul r})-1)!}  \no \\
& \hspace*{4.5cm} \times \prod_{k'=1, \, k'\neq k}^{q(n,t_{\ul r})}
(\mu_k(n,t_{\ul r})-\mu_{k'}(n,t_{\ul r}))^{-p_k(n,t_{\ul r})}    \lb{ALT6.76}  \\
& \quad
+\f{1}{2}\bigg(\bigg(\prod_{m=0}^{2p+1}E_m\bigg)^{1/2} \, 
\prod_{k=1}^{q(n,t_{\ul r})}\mu_k(n,t_{\ul r})^{-p_k(n,t_{\ul r})}  
+ \sum_{k=1}^{q(n,t_{\ul r})}p_k(n,t_{\ul r})\mu_k(n,t_{\ul r})  \no \\
& \hspace*{1.4cm}  - \f{1}{2}\sum_{m=0}^{2p+1}E_m \bigg), \quad   
(z,n,t_{\ul r})\in \bbZ \times (t_{0,\ul r}-T_0,t_{0,\ul r}+T_0).    \no 
\end{align}
Moreover, Lemmas \ref{lAL3.1}--\ref{lAL3.4} and \ref{lAL4.2}--\ref{lAL4.3} apply. 
\end{theorem}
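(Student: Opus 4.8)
The plan is to assemble the two preceding lemmas into the full zero-curvature system and then read off the initial value problem and the explicit formulas. The starting point is the observation that, by the very derivation in \eqref{ALT6.19}--\eqref{ALT6.40b}, the autonomous first-order system \eqref{ALT6.61} evaluated at $n=n_0$ is \emph{equivalent} to the evolution equations \eqref{ALF_t}--\eqref{ALH_t} holding on $\bbC\times\{n_0\}\times(t_{0,\ul r}-T_0,t_{0,\ul r}+T_0)$: dividing \eqref{AL--F_t} by $c_{0,+}z^{-p_-}y$ and comparing coefficients (using the cancellation \eqref{ALT6.20}) produced \eqref{ALT6.61}, and since two Laurent series agree iff all their coefficients agree, each step is reversible. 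Thus from the hypothesis \eqref{ALT6.61}, \eqref{ALT6.62} I recover \eqref{ALF_t}--\eqref{ALH_t} at $n=n_0$, which is precisely the input required by Lemma \ref{ALTl6.3}.

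Next I would invoke the two lemmas on the common strip $\bbC\times\bbZ\times(t_{0,\ul r}-T_0,t_{0,\ul r}+T_0)$. Lemma \ref{ALTl6.2} (valid under Hypothesis \ref{ALTh6.1} and condition \eqref{ALT6.45}) yields the stationary relations \eqref{ALT6.46}--\eqref{ALT6.49a} for all $n$, hence the stationary zero-curvature equation \eqref{ALT6.64}, $UV_{\ul p}-V_{\ul p}^+U=0$. Lemma \ref{ALTl6.3} then propagates \eqref{ALF_t}--\eqref{ALH_t} from $n=n_0$ to all $n\in\bbZ$, giving \eqref{ALT6.51}--\eqref{ALT6.53}; these are equivalent to \eqref{ALV_t}, which is \eqref{ALT6.64a}. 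Lemma \ref{ALTl6.3} simultaneously delivers the evolution formulas \eqref{ALT6.55}, \eqref{ALT6.56} for $\alpha_{t_{\ul r}}$ and $\beta_{t_{\ul r}}$ at every $n$.

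To obtain the remaining equation \eqref{ALT6.63}, $U_{t_{\ul r}}+U\ti V_{\ul r}-\ti V_{\ul r}^+U=0$, I would recall that this matrix identity is equivalent to the four scalar relations \eqref{ALalpha_t}--\eqref{AL2,2 r}. Its first two components are exactly \eqref{ALT6.55} and \eqref{ALT6.56}, just established, while its remaining two components \eqref{AL1,1 r}, \eqref{AL2,2 r} are purely algebraic consequences of the recursion relations \eqref{ALg_l+}--\eqref{ALh_l-}, which the tilde coefficients $\tilde f_{s,\pm}$, $\tilde g_{s,\pm}$, $\tilde h_{s,\pm}$ satisfy by construction. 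With \eqref{ALT6.63} and \eqref{ALT6.64} in hand, the systems \eqref{ALT6.66} and \eqref{ALT6.65} follow by the same algebra that identifies \eqref{ALstatzc} with \eqref{AL2.50}, \eqref{AL2.51} and \eqref{ALzc p} with \eqref{ALalphat}, \eqref{ALbetat} (inserting \eqref{ALf_l-} at $\ell=r_--1$ and \eqref{ALh_l+} at $\ell=r_+-1$, as in the proof of Lemma \ref{ALTl6.3}). The initial condition $(\alpha,\beta)|_{t_{\ul r}=t_{0,\ul r}}=(\alpha^{(0)},\beta^{(0)})$ holds because the data \eqref{ALT6.62} are manufactured from $\calD_{\humu(n_0,t_{0,\ul r})}$ via \eqref{ALT6.18B}--\eqref{ALT6.18D}, which reproduce $\alpha^{(0)},\beta^{(0)}$. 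The trace-type formulas \eqref{ALT6.75}, \eqref{ALT6.76} are simply \eqref{ALT6.42}--\eqref{ALT6.45A} with $n_0$ replaced by arbitrary $n$, supplied by the stationary algorithm of Theorem \ref{ALSt4.3} applied at each frozen $t_{\ul r}$; and the validity of \eqref{AL4.1A} (in particular $\alpha\beta\notin\{0,1\}$ and $C^1$-dependence on $t_{\ul r}$) follows from the admissibility built into \eqref{ALT6.45}, the nonsingularity of $\calK_p$, and smoothness of the ODE solution, so that Lemmas \ref{lAL3.1}--\ref{lAL3.4} and \ref{lAL4.2}--\ref{lAL4.3} all apply.

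I expect the genuine difficulty to be packaged inside Lemma \ref{ALTl6.3}, namely the \emph{consistency} of the spatial (stationary) algorithm with the temporal flow: one must know that extending $F_{\ul p},G_{\ul p},H_{\ul p}$ horizontally by the discrete algorithm of Section \ref{ALSs4} and then differentiating in $t_{\ul r}$ gives the same result as differentiating first and then extending. Within the theorem this manifests as the requirement that both lemmas hold on the \emph{same} strip $\bbC\times\bbZ\times(t_{0,\ul r}-T_0,t_{0,\ul r}+T_0)$; the induction $n_0\to n_0\pm1$ in Lemma \ref{ALTl6.3}, driven by the differentiated forms of \eqref{F^-}--\eqref{G^+} together with \eqref{ALalpha_t}--\eqref{AL22}, is the technical heart that legitimizes the assembly.
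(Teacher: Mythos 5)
Your proposal is correct and follows essentially the same route as the paper, which presents Theorem \ref{ALTt6.4} explicitly as a summary of Lemmas \ref{ALTl6.2} and \ref{ALTl6.3} (with no separate proof): the stationary relations come from Lemma \ref{ALTl6.2}, the propagation of \eqref{ALF_t}--\eqref{ALH_t} from $n_0$ to all $n$ and the formulas for $\alpha_{t_{\ul r}},\beta_{t_{\ul r}}$ come from Lemma \ref{ALTl6.3}, and the remaining identifications are the same coefficient comparisons used in Section \ref{s2}. Your observation that the passage from \eqref{ALT6.61} back to \eqref{ALF_t}--\eqref{ALH_t} at $n=n_0$ is reversible (so the hypotheses of Lemma \ref{ALTl6.3} are met) is exactly the implicit step the paper relies on.
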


As in Lemma \ref{ALSl4.2} we now show that also in the time-dependent case, 
most initial divisors are well-behaved in the sense that the corresponding divisor
trajectory stays away from $\Pinfpm, \Pzpm$ for all $(n,t_{\ul r})\in\bbZ\times\bbR$.

\begin{lemma}  \lb{ALTl6.5}
The set $\calM_1$ of initial divisors $\calD_{\humu(n_0,t_{0,\ul r})}$
for which $\calD_{\humu(n,t_{\ul r})}$ and $\calD_{\hunu(n,t_{\ul r})}$, defined via 
\eqref{AL4.82} and \eqref{AL4.83}, are admissible 
$($i.e., do not contain $\Pinfpm$, $\Pzpm$$)$ and hence are nonspecial
for all $(n,t_{\ul r})\in\bbZ\times\bbR$, forms a dense set of full
measure in the set $\symq$ of nonnegative divisors of degree $p$.  
\end{lemma}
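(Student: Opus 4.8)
The plan is to mirror the measure-theoretic argument in the proof of Lemma~\ref{ALSl4.2}, replacing the purely discrete $n$-translation on the Jacobi variety $J(\calK_p)$ by the combined lattice-plus-time flow dictated by the linearizations \eqref{AL4.82}, \eqref{AL4.83}. As before I would first isolate the singular set $\calM_{\rm sing}=\calM_{\infty,0}\cup\calM_{\rm sp}\subset\symq$, where $\calM_{\infty,0}$ consists of those divisors containing at least one of $\Pinfpm,\Pzpm$ and $\calM_{\rm sp}$ of the special divisors. Exactly as in Lemma~\ref{ALSl4.2}, choosing $Q_0$ to be a branch point one has
\[
\amap(\calM_{\rm sing})\subseteq
\bigcup_{P\in\{\Pzpm,\Pinfpm\}}\big(\underline{A}_{Q_0}(P)+\amap(\sym^{p-1}(\calK_p))\big),
\]
so that $\amap(\calM_{\rm sing})$ is contained in a finite union of smooth images of manifolds of complex dimension at most $p-1$ (real dimension at most $2p-2$); by Sard's theorem (see, e.g., \cite[Sect.\ 3.6]{AbrahamMarsdenRatiu:1988}) it has measure zero in $J(\calK_p)$.

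The new ingredient is the sweep in the time direction. By \eqref{AL4.82} the Dirichlet trajectory $\calD_{\humu(n,t_{\ul r})}$ becomes singular for some $(n,t_{\ul r})\in\bbZ\times\bbR$ precisely when the image of the initial divisor lies in
\[
\calS_\mu=\bigcup_{n\in\bbZ}\Big(\amap(\calM_{\rm sing})-(n-n_0)\underline{A}_{\Pzm}(\Pinfp)
+\bbR\,\ti{\underline U}^{(2)}_{\ul r}\Big),
\]
where $\bbR\,\ti{\underline U}^{(2)}_{\ul r}=\{t\,\ti{\underline U}^{(2)}_{\ul r}:t\in\bbR\}$. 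For each fixed $n$ the inner set is the image of $\calM_{\rm sing}\times\bbR$ under a smooth map into $J(\calK_p)$; since $\amap(\calM_{\rm sing})$ is swept along the single real parameter $t$, its real dimension rises from at most $2p-2$ only to at most $2p-1$, hence it is still of measure zero in the $2p$-real-dimensional torus $J(\calK_p)$. Taking the countable union over $n\in\bbZ$ preserves measure zero, so $\calS_\mu$ is a null set. Using \eqref{ALS3.65} together with \eqref{AL4.83} in the form $\amap(\calD_{\hunu})=\amap(\calD_{\humu})+\underline{A}_{P_{0,-}}(\Pinfm)$, the analogous bad set for the Neumann divisors is the translate $\calS_\nu=\calS_\mu+\underline{A}_{P_{0,-}}(\Pinfm)$, again a null set.

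Finally I would set $\calM_1=\amap^{-1}\big(\symq\setminus(\calS_\mu\cup\calS_\nu)\big)$. By construction $\calS_\mu\cup\calS_\nu$ contains every initial divisor whose $\humu$- or $\hunu$-trajectory hits one of $\Pinfpm,\Pzpm$ or becomes special at some $(n,t_{\ul r})\in\bbZ\times\bbR$; since $\calM_{\rm sp}$ was absorbed into $\calM_{\rm sing}$, the divisors in $\calM_1$ are automatically nonspecial for all $(n,t_{\ul r})$ (in accordance with Lemma~\ref{lAL4.6a}). As $\calS_\mu\cup\calS_\nu$ has measure zero, $\calM_1$ has full measure and is therefore dense in $\symq$.

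I expect the main obstacle to be precisely the dimension bookkeeping in the middle step: one must verify that adjoining the one real time-direction $\bbR\,\ti{\underline U}^{(2)}_{\ul r}$ to the $(2p-2)$-dimensional singular image does not fill up $J(\calK_p)$, and that the null property survives the union over $n$. This works only because the time flow contributes a single real dimension and $\bbZ$ is countable -- exactly the feature, flagged in the introduction, that fails for genuinely continuous $(1+1)$-dimensional models.
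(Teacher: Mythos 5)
Your proposal is correct and follows essentially the same route as the paper: the same singular set $\calM_{\rm sing}$ from Lemma \ref{ALSl4.2}, the same observation that sweeping its $(2p-2)$-real-dimensional image along the one-real-parameter direction $\bbR\,\ti{\underline U}^{(2)}_{\ul r}$ still misses one real dimension in $J(\calK_p)$, the countable union over $n\in\bbZ$, the translate by $\underline{A}_{P_{0,-}}(\Pinfm)$ for the Neumann divisors, and the definition of $\calM_1$ as the preimage of the complement. The minor sign discrepancy in the lattice translation is immaterial since the union ranges over all of $\bbZ$ and translations preserve null sets.
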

\begin{proof}
Let $\calM_{\rm sing}$ be as introduced in the proof of Lemma \ref{ALSl4.2}.
Then
\begin{align}
& \bigcup_{t_{\ul r}\in\bbR} \Big(\underline{\alpha}_{Q_0}(\calM_{\rm sing}) 
+ t_{\ul r} \ti{\ul U}^{(2)}_{\ul r}\Big) \no \\
& \quad \subseteq \bigcup_{P\in\{\Pinfpm, \Pzpm\}}
\bigcup_{t_{\ul r}\in\bbR} \Big(\underline{A}_{Q_0}(P) + 
\underline{\alpha}_{Q_0}(\sym^{p-1} (\calK_p)) 
+ t_{\ul r} \ti{\ul U}^{(2)}_{\ul r} \Big)       \lb{ALT6.77}
\end{align}
is of  measure zero as well, since it is contained in the image of 
$\bbR\times \sym^{p-1} (\calK_p)$ which misses one real dimension in
comparison to the $2p$ real dimensions of $J(\calK_p)$.  But then
\begin{align}
& \bigcup_{(n,t_{\ul r})\in\bbZ\times\bbR}
\Big(\underline{\alpha}_{Q_0}(\calM_{\rm sing})  +
n\underline{A}_{P_{0,-}}(P_{\infty_+}) 
+ t_{\ul r} \ti{\ul U}^{(2)}_{\ul r} \Big)   \no \\
& \quad \; \cup \bigg(\bigcup_{(n,t_{\ul r})\in\bbZ\times\bbR}
\Big(\underline{\alpha}_{Q_0}(\calM_{\rm sing})  +
n\underline{A}_{P_{0,-}}(P_{\infty_+}) 
+ t_{\ul r} \ti{\ul U}^{(2)}_{\ul r} \Big) + \ul A_{\Pzm}(\Pinfm)\bigg)   \lb{ALT6.78}                 
\end{align}
is also of measure zero. Applying $\underline{\alpha}_{Q_0}^{-1}$ to the
complement of the set in \eqref{ALT6.78} then yields a set $\calM_1$ of
full measure in  $\symq$. In particular, $\calM_1$ is
necessarily dense in $\symq$.
\end{proof}

\begin{theorem}  \lb{ALTt6.6}
Let $\calD_{\humu(n_0,t_{0,\ul r})}\in\calM_1$ be an initial divisor as in 
Lemma \ref{ALTl6.5}. Then the sequences $\alpha, \beta$ constructed from
$\humu(n_0,t_{0,\ul r})$ as described in Theorem \ref{ALTt6.4} satisfy
Hypothesis \ref{hAL4.1}. In particular, the solution $\alpha, \beta$ of the
algebro-geometric initial value problem \eqref{ALT6.75}, \eqref{ALT6.76} is
global in $(n,t_{\ul r})\in\bbZ\times\bbR$.
\end{theorem}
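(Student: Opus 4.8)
The plan is to upgrade the local-in-time solution produced by Theorem~\ref{ALTt6.4} to a global one by a continuation argument, the decisive point being that membership in $\calM_1$ forces the associated divisors to stay admissible and nonspecial for \emph{all} $(n,t_{\ul r})\in\bbZ\times\bbR$, not merely on a short interval. First I would fix $\calD_{\humu(n_0,t_{0,\ul r})}\in\calM_1$ and apply Theorem~\ref{ALTt6.4} to obtain, for some $T_0>0$, sequences $\alpha,\beta$ on $\bbZ\times(t_{0,\ul r}-T_0,t_{0,\ul r}+T_0)$ solving \eqref{ALT6.65}, \eqref{ALT6.66}, along with the Laurent polynomials $F_{\ul p},G_{\ul p},H_{\ul p}$ and the divisors $\calD_{\humu(n,t_{\ul r})}$, $\calD_{\hunu(n,t_{\ul r})}$.

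Next I would identify these constructed divisors with the abstractly defined ones of Lemma~\ref{ALTl6.5}. On the local interval Lemma~\ref{lAL4.6a} applies (as asserted in Theorem~\ref{ALTt6.4}), so the Abel-map images of $\calD_{\humu(n,t_{\ul r})}$ and $\calD_{\hunu(n,t_{\ul r})}$ obey the linearization relations \eqref{AL4.82}, \eqref{AL4.83}. Since the Abel map is injective on the nonspecial locus and the divisors built into Lemma~\ref{ALTl6.5} are defined for all time by the \emph{same} relations \eqref{AL4.82}, \eqref{AL4.83} with the \emph{same} initial value, the two families coincide throughout the interval of existence. By the definition of $\calM_1$, Lemma~\ref{ALTl6.5} then guarantees that these divisors avoid $\Pinfpm,\Pzpm$ and remain nonspecial for every $(n,t_{\ul r})\in\bbZ\times\bbR$; in particular they do so on the whole maximal interval of existence of the solution.

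For the continuation step, note that the autonomous system \eqref{ALT6.61} has polynomial right-hand sides, so it possesses a maximal interval of existence $(t_{0,\ul r}-T_-,t_{0,\ul r}+T_+)$ at $n=n_0$, and if $T_+<\infty$ then $(f_{\ell,\pm},g_{\ell,\pm},h_{\ell,\pm})(n_0,t_{\ul r})$ must leave every compact set as $t_{\ul r}\uparrow t_{0,\ul r}+T_+$. I would exclude this with the theta-function representations of $\alpha,\beta$ (recalled after Lemma~\ref{lAL4.6a}): because $\calD_{\humu(n_0,t_{\ul r})}$ and $\calD_{\hunu(n_0,t_{\ul r})}$ stay in a compact part of the admissible, nonspecial locus on the entire maximal interval, the theta formulas keep $\alpha(n_0,t_{\ul r}),\alpha^+(n_0,t_{\ul r}),\beta(n_0,t_{\ul r})$ finite and nonzero there. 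Via \eqref{ALT6.42}, \eqref{ALT6.45A} this bounds $F_{\ul p},G_{\ul p},H_{\ul p}$ at $n_0$, hence the coefficients $f_{\ell,\pm},g_{\ell,\pm},h_{\ell,\pm}$, contradicting the escape to infinity; thus $T_\pm=\infty$. Moreover, slicing the space-time trajectory at a fixed $t_{\ul r}$ shows that $\calM_1$-membership entails $\calD_{\humu(n_0,t_{\ul r})}\in\calM_0$ for every $t_{\ul r}$, i.e.\ condition \eqref{ALT6.45} holds for all times, so the stationary algorithm of Theorem~\ref{ALSt4.3} extends the solution from $n_0$ to all $n\in\bbZ$ at each time. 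Hence $\alpha,\beta$ are defined on all of $\bbZ\times\bbR$.

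Finally I would verify Hypothesis~\ref{hAL4.1}. Part~$(ii)$ holds by the standing assumption on $\calK_p$. For part~$(i)$, admissibility of $\calD_{\humu(n,t_{\ul r})}$ gives $\alpha(n,t_{\ul r})\neq0$ and admissibility of $\calD_{\hunu(n,t_{\ul r})}$ gives $\beta(n,t_{\ul r})\neq0$ (cf.\ \eqref{ALS4.61}, \eqref{ALS4.61a}), while the theta-function representations yield $\gamma(n,t_{\ul r})=1-\alpha\beta\notin\{0,1\}$; together these give $\alpha(n,t_{\ul r})\beta(n,t_{\ul r})\notin\{0,1\}$ as required in \eqref{AL4.1A}. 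The $C^1$ (indeed $C^\infty$) dependence on $t_{\ul r}$ follows because the flow \eqref{ALT6.61} is smooth in $t_{\ul r}$, and the spatial extension given by the recursion relations in step~$\textbf{(VII)}$ of Section~\ref{ALSs4} depends rationally on the data, with denominators built from the nowhere-vanishing $\gamma$. The main obstacle is precisely the equivalence used in the third paragraph: blow-up of the autonomous system must be shown to coincide with the Dirichlet or Neumann divisor hitting an exceptional point, and it is exactly the divisor identification of the second paragraph together with the theta-function bounds that secure this.
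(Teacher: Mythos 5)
Your proposal is correct and follows essentially the same route as the paper: the paper's (very terse) proof likewise invokes Theorem \ref{ALTt6.4} for the local solution together with the linearizations \eqref{AL4.82}, \eqref{AL4.83}, and observes that blow-up of $\alpha,\beta$ would force $\calD_{\humu(n,t_{\ul r})}$ or $\calD_{\hunu(n,t_{\ul r})}$ to hit one of $\Pinfpm$, $\Pzpm$, which membership in $\calM_1$ excludes. You merely make explicit the ODE continuation step and the compactness of the Abel-image segment on a bounded time interval that the paper leaves implicit.
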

\begin{proof}
Starting with $\calD_{\humu(n_0,t_{0,\ul r})}\in\calM_1$, the procedure
outlined in this section and summarized in Theorem \ref{ALTt6.4} leads to 
$\calD_{\humu(n,t_{\ul r})}$ and $\calD_{\hunu(n,t_{\ul r})}$ for all 
$(n,t_{\ul r})\in \bbZ \times (t_{0,\ul r}-T_0,t_{0,\ul r}+T_0)$ such that \eqref{AL4.82} and 
\eqref{AL4.83} hold. But if
$\alpha, \beta$ should blow up, then $\calD_{\humu(n,t_{\ul r})}$ or 
$\calD_{\hunu(n,t_{\ul r})}$ must hit one of $\Pinfpm$ or $\Pzpm$,   
which is excluded by our choice of initial condition.
\end{proof}

We note, however, that in general (i.e., unless one is, e.g., in the special
periodic case), $\calD_{\humu(n,t_{\ul r})}$ will get
arbitrarily close to $P_{\infty_{\pm}}$, $\Pzpm$ since straight motions on the torus
are generically dense (see e.g. \cite[Sect.\ 51]{Arnold:1989} or 
\cite[Sects.\ 1.4, 1.5]{KatokHasselblatt:1995}) and hence no uniform bound 
(and no uniform bound away from zero) on the sequences 
$\alpha(n,t_{\ul r}), \beta(n,t_{\ul r})$ exists as $(n,t_{\ul r})$ varies in $\bbZ\times\bbR$. In particular, these complex-valued algebro-geometric solutions of the Ablowitz--Ladik hierarchy initial value problem, in general, will not be quasi-periodic with respect to $n$ or $t_{\ul r}$ 
(cf.\ the usual definition of quasi-periodic functions, e.g., in 
\cite[p.\ 31]{PasturFigotin:1992}).

\appendix
\section{Hyperelliptic Curves in a Nutshell} \lb{AL.sA}
\renewcommand{\theequation}{A.\arabic{equation}}
\renewcommand{\thetheorem}{A.\arabic{theorem}}
\setcounter{theorem}{0} 
\setcounter{equation}{0}

We provide a very brief summary of some of the fundamental properties
and notations needed from the theory of hyperelliptic curves.  More details can be found in some of the standard textbooks \cite{FarkasKra:1992}, 
\cite{Fay:1973}, and \cite{Mumford:1984}, as well as monographs
dedicated to integrable systems such as
\cite[Ch.\ 2]{BelokolosBobenkoEnolskiiItsMatveev:1994}, 
\cite[App.\ A, B]{GesztesyHolden:2005}, \cite[App.\ A]{Teschl:2000}.

Fix $\N\in\bbN$. The hyperelliptic curve $\calK_\N$
of genus $\N$ used in Sections \ref{s3}--\ref{ALTs6} is defined by
\begin{align}
&\calK_\N\colon \calF_\N(z,y)=y^2-R_{2\N+2}(z)=0, \quad
R_{2\N+2}(z)=\prod_{m=0}^{2\N+1}(z-E_m), \lb{b0} \\
& \{E_m\}_{m=0,\dots,2\N+1}\subset\bbC, \quad
E_m \neq E_{m'} \text{ for } m \neq m', \, m,m'=0,\dots,2\N+1.
\label{b1}
\end{align}
The curve \eqref{b0} is compactified by adding the
points $\Pinfp$ and $\Pinfm$,
$\Pinfp \neq \Pinfm$, at infinity.
One then introduces an appropriate set of
$\N+1$ nonintersecting cuts $\calC_j$ joining
$E_{m(j)}$ and $E_{m^\prime(j)}$ and denotes
\begin{equation}
\calC=\bigcup_{j\in \{1,\dots,\N+1 \}}\calC_j,
\quad
\calC_j\cap\calC_k=\emptyset,
\quad j\neq k.\label{b2}
\end{equation}
Defining the cut plane
\begin{equation}
\Pi=\bbC\setminus\calC, \label{b3}
\end{equation}
and introducing the holomorphic function
\begin{equation}
R_{2\N+2}(\dott)^{1/2}\colon \Pi\to\bbC, \quad
z\mapsto \left(\prod_{m=0}^{2\N+1}(z-E_m) \right)^{1/2}\label{b4}
\end{equation}
on $\Pi$ with an appropriate choice of the square root
branch in \eqref{b4}, one considers 
\begin{equation}
\calM_{\N}=\{(z,\sigma R_{2\N+2}(z)^{1/2}) \,|\, 
z\in\bbC,\; \sigma\in\{\pm 1\}
\}\cup\{\Pinfp,\Pinfm\} \label{b5}
\end{equation}
by extending $R_{2\N+2}(\dott)^{1/2}$ to $\calC$. The
hyperelliptic curve $\calK_\N$ is then the set
$\calM_{\N}$ with its natural complex structure obtained
upon gluing the two sheets of $\calM_{\N}$
crosswise along the cuts. The set of branch points
$\calB(\calK_\N)$ of $\calK_\N$ is given by
\begin{equation}
\calB(\calK_\N)=\{(E_m,0)\}_{m=0,\dots,2\N+1} \lb{5a}
\end{equation}
and finite points $P$ on $\calK_\N$ are denoted by
$P=(z,y)$, where $y(P)$ denotes the meromorphic function
on $\calK_\N$ satisfying $\calF_\N(z,y)=y^2-R_{2\N+2}(z)=0$.
Local coordinates near $P_0=(z_0,y_0)\in\calK_\N\setminus
(\calB(\calK_\N)\cup\{\Pinfp,\Pinfm\})$ are
given by $\zeta_{P_0}=z-z_0$, near $\Pinfpm$ by
$\zeta_{\Pinfpm}=1/z$, and near branch points
$(E_{m_0},0)\in\calB(\calK_\N)$ by
$\zeta_{(E_{m_0},0)}=(z-E_{m_0})^{1/2}$. The Riemann surface
$\calK_\N$ defined in this manner has topological genus $\N$.
Moreover, we introduce the holomorphic sheet exchange map
(involution)
\begin{equation}
* \colon \calK_\N \to \calK_\N,\quad
P=(z,y)\mapsto P^*=(z,-y),\;
P_{\infty_\pm} \mapsto P_{\infty_\pm}^*=P_{\infty_\mp}.    \lb{a12}
\end{equation}

One verifies that $dz/y$ is a holomorphic differential
on $\calK_\N$ with zeros of order $\N-1$ at $\Pinfpm$
and hence
\begin{equation}
\eta_j=\frac{z^{j-1}dz}{y}, \quad j=1,\dots,\N,   \lb{b24}
\end{equation}
form a basis for the space of holomorphic differentials
on $\calK_\N$.  Introducing the
invertible matrix $C$ in $\bbC^\N$,
\begin{align}
\begin{split}
& C =(C_{j,k})_{j,k=1,\dots,\N}, \quad C_{j,k}
= \int_{a_k} \eta_j, \\
& \underline{c} (k) = (c_1(k), \dots,
c_p(k)), \quad c_j (k) =
C_{j,k}^{-1}, \;\, j,k=1,\dots,\N, \lb{A.7}
\end{split}
\end{align}
the corresponding basis of normalized holomorphic
differentials $\omega_j$, $j=1,\dots,\N$, on $\calK_\N$ is given by
\begin{equation}
\omega_j = \sum_{\ell=1}^\N c_j (\ell) \eta_\ell,
\quad \int_{a_k} \omega_j =
\delta_{j,k}, \quad j,k=1,\dots,\N. \lb{b26}
\end{equation}
Here $\{a_j,b_j\}_{j=1,\dots,\N}$ is a homology basis for
$\calK_\N$ with intersection matrix of the cycles satisfying
\begin{equation}
a_j \circ b_k=\delta_{j,k}, \; a_j \circ a_k=0,
\; b_j \circ b_k=0, \quad j,k=1,\dots,\N. \lb{c26}
\end{equation}

Associated with the homology basis
$\{a_j, b_j\}_{j=1,\dots,\N}$ we
also recall the canonical dissection of $\calK_\N$
along its cycles yielding
the simply connected interior $\hatt \calK_\N$ of the
fundamental polygon $\partial {\hatt \calK}_\N$ given by
\begin{equation}
\partial  {\hatt \calK}_\N =a_1 b_1 a_1^{-1} b_1^{-1}
a_2 b_2 a_2^{-1} b_2^{-1} \cdots
a_\N^{-1} b_\N^{-1}.
\lb{a25}
\end{equation}
Let $\calM (\calK_\N)$ and $\calM^1 (\calK_\N)$ denote the
set of meromorphic
functions (0-forms) and meromorphic
differentials (1-forms)
on $\calK_\N$. Holomorphic
differentials are also called Abelian differentials
of the first kind. Abelian differentials of the
second kind, $\omega^{(2)} \in \calM^1 (\calK_\N)$, are characterized
by the property that all their residues vanish.  They will usually be
normalized by demanding that all their $a$-periods vanish, that is,
$\int_{a_j} \omega^{(2)} =0$, $j=1,\dots,\N$. 
Any meromorphic differential $\omega^{(3)}$ on
$\calK_\N$ not of the first or
second kind is said to be of the third
kind. A differential of the third kind $\omega^{(3)} \in \calM^1 (\calK_\N)$
is usually normalized by the vanishing of its
$a$-periods, that is, $\int_{a_j} \omega^{(3)} =0$, $j=1,\dots, \N$. 
A normal differential of the third kind $\omega_{P_1, P_2}^{(3)}$ associated
with two points $P_1$, $P_2 \in \hatt \calK_\N$, $P_1 \neq P_2$, by definition, 
has simple poles at $P_j$ with residues $(-1)^{j+1}$, $j=1,2$ and
vanishing $a$-periods.  

Next, define the matrix $\tau=(\tau_{j,\ell})_{j,\ell=1,\dots,\N}$ by
\begin{equation}
\tau_{j,\ell}=\int_{b_\ell}\omega_j, \quad j,\ell=1,
\dots,\N. \label{b8}
\end{equation}
Then
\begin{equation}
\Im(\tau)>0 \quad \text{and} \quad \tau_{j,\ell}=\tau_{\ell,j},
\quad j,\ell =1,\dots,\N.  \lb{a18a}
\end{equation}
Associated with $\tau$ one introduces the period lattice
\begin{equation}
L_\N = \{ \ul z \in\bbC^\N \,|\, \ul z = \ul m + \ul n\tau,
\; \ul m, \ul n \in\bbZ^\N\}.
\lb{a28}
\end{equation}

Next, fix a base point $Q_0\in\calK_\N\setminus
\{\Pzpm,\Pinfpm\}$, denote by
$J(\calK_\N) = \bbC^\N/L_\N$ the Jacobi variety of $\calK_\N$,
and define the
Abel map $\underline{A}_{Q_0}$ by
\begin{equation}
\underline{A}_{Q_0} \colon \calK_\N \to J(\calK_\N), \quad
\underline{A}_{Q_0}(P)=
\bigg(\int_{Q_0}^P \omega_1,\dots,\int_{Q_0}^P \omega_\N \bigg)
\pmod{L_\N}, \quad P\in\calK_\N. \label{aa46}
\end{equation}
Similarly, we introduce
\begin{equation}
\ul \alpha_{Q_0}  \colon
\Div(\calK_\N) \to J(\calK_\N),\quad
\calD \mapsto \ul \alpha_{Q_0} (\calD)
=\sum_{P \in \calK_\N} \calD (P) \ul A_{Q_0} (P),
\label{aa47}
\end{equation}
where $\Div(\calK_\N)$ denotes the set of
divisors on $\calK_\N$. Here $\calD \colon \calK_\N \to \bbZ$
is called a divisor on $\calK_\N$ if $\calD(P)\neq0$ for only
finitely many $P\in\calK_\N$. (In the main body of this paper
we will choose $Q_0$ to be one of the branch points, i.e.,
$Q_0\in\calB(\calK_\N)$, and for simplicity we will always choose
the same path of integration from $Q_0$ to $P$ in all Abelian
integrals.) 

In connection with divisors on $\calK_\N$ we shall employ the
following
(additive) notation,
\begin{align} \lb{A.17}
&\calD_{Q_0\ul Q}=\calD_{Q_0}+\calD_{\ul Q}, \quad \calD_{\ul
Q}=\calD_{Q_1}+\cdots +\calD_{Q_m}, \\
& {\ul Q}=\{Q_1, \dots ,Q_m\} \in \sym^m \calK_\N,
\quad Q_0\in\calK_\N, \; m\in\bbN, \no
\end{align}
where for any $Q\in\calK_\N$,
\begin{equation} \lb{A.18}
\calD_Q \colon  \calK_\N \to\bbN_0, \quad
P \mapsto  \calD_Q (P)=
\begin{cases} 1 & \text{for $P=Q$},\\
0 & \text{for $P\in \calK_\N\setminus \{Q\}$}, \end{cases}
\end{equation}
and $\sym^n \calK_\N$ denotes the $n$th symmetric product of
$\calK_\N$. In particular, $\sym^m \calK_\N$ can be
identified with
the set of nonnegative
divisors $0 \leq \calD \in \Div(\calK_\N)$ of degree $m$.

For $f\in \calM (\calK_\N) \setminus \{0\}$,
$\omega \in \calM^1 (\calK_\N) \setminus \{0\}$ the
divisors of $f$ and $\omega$ are denoted
by $(f)$ and
$(\omega)$, respectively.  Two
divisors $\calD$, $\calE\in \Div(\calK_\N)$ are
called equivalent, denoted by
$\calD \sim \calE$, if and only if $\calD -\calE
=(f)$ for some
$f\in\calM (\calK_\N) \setminus \{0\}$.  The divisor class
$[\calD]$ of $\calD$ is
then given by $[\calD]
=\{\calE \in \Div(\calK_\N) \,|\, \calE \sim \calD\}$.  We
recall that
\begin{equation}
\deg ((f))=0,\, \deg ((\omega)) =2(\N-1),\,
f\in\calM (\calK_\N) \setminus
\{0\},\,  \omega\in \calM^1 (\calK_\N) \setminus \{0\},
\lb{a38}
\end{equation}
where the degree $\deg (\calD)$ of $\calD$ is given
by $\deg (\calD)
=\sum_{P\in \calK_\N} \calD (P)$.  It is customary to call
$(f)$ (respectively,
$(\omega)$) a principal (respectively, canonical)
divisor.

Introducing the complex linear spaces
\begin{align}
\calL (\calD) & =\{f\in \calM (\calK_\N) \,|\, f=0
        \text{ or } (f) \geq \calD\}, \quad
r(\calD) =\dim \calL (\calD),
\lb{a39}\\
\calL^1 (\calD) & =
        \{ \omega\in \calM^1 (\calK_\N) \,|\, \omega=0
        \text{ or } (\omega) \geq
\calD\}, \quad i(\calD) =\dim \calL^1 (\calD),   \lb{a40}
\end{align}
with $i(\calD)$ the index of speciality of $\calD$, one infers
that $\deg(\calD)$, $r(\calD)$, and $i(\calD)$ only depend on
the divisor class $[\calD]$ of $\calD$.  Moreover, we recall the
following fundamental fact. 

\begin{theorem} \lb{thm3}
Let $\calD_{\ul Q} \in \sym^\N \calK_\N$,
$\ul Q=\{Q_1, \ldots, Q_\N\}$.  Then,
\begin{equation}
1 \leq i (\calD_{\ul Q} ) =s   \lb{a46}
\end{equation}
if and only if $\{Q_1,\ldots, Q_\N\}$ contains $s$ pairings of the type 
$\{P, P^*\}$. $($This includes, of course, branch
points for which $P=P^*$.$)$ One has $s\leq \N/2$.
\end{theorem}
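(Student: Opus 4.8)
The plan is to identify the space $\calL^1(\calD_{\ul Q})$ explicitly with a space of polynomials and read off its dimension. By \eqref{b24} every holomorphic differential on $\calK_\N$ has the form $\omega = P(z)\,dz/y$ with $P$ a polynomial of degree at most $\N-1$, and conversely; since $\calD_{\ul Q}\geq 0$ forces any $\omega$ with $(\omega)\geq\calD_{\ul Q}$ to be holomorphic, this yields a linear isomorphism between $\calL^1(\calD_{\ul Q})$ and the polynomials $P$, $\deg P\leq \N-1$, satisfying $(P\,dz/y)\geq\calD_{\ul Q}$. The whole point will be that $P$ depends on $z$ alone, so it cannot distinguish a point $P=(z,y)$ from its involute $P^*$.

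First I would compute the local vanishing order of $\omega=P\,dz/y$ in terms of $P$, distinguishing the three fibre types of the hyperelliptic projection $\pi\colon (z,y)\mapsto z$. Since $dz/y$ is holomorphic and nonvanishing away from $\Pinfpm$ (in particular at the branch points, where $\zeta=(z-E_{m_0})^{1/2}$ absorbs the simple zero of $y$), at a non-branch finite point $(z_0,\pm w)$ the order of $\omega$ equals $\operatorname{ord}_{z_0}P$; at a branch point $(E_{m_0},0)$ it equals $2\operatorname{ord}_{E_{m_0}}P$, because $z-E_{m_0}=\zeta^2$; and at $\Pinfpm$, where $\zeta=1/z$ and $dz/y$ vanishes to order $\N-1$, it equals $\N-1-\deg P$ on both sheets. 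Thus imposing a vanishing order on one sheet automatically imposes the identical order on the other.

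Next I would translate $(\omega)\geq\calD_{\ul Q}$ into divisibility conditions on $P$, grouping $\ul Q$ by fibres of $\pi$. Over a non-branch value $z_0$ carrying multiplicities $a,b$ on the two sheets the requirement reduces to $(z-z_0)^{\max(a,b)}\mid P$; over a branch point of multiplicity $a$ to $(z-E_{m_0})^{\lceil a/2\rceil}\mid P$; and at infinity with multiplicities $a_\infty,b_\infty$ to $\deg P\leq \N-1-\max(a_\infty,b_\infty)$. Writing $\Phi(z)$ for the product of the finite divisibility factors and $c_\infty=\max(a_\infty,b_\infty)$, the space $\calL^1(\calD_{\ul Q})$ becomes isomorphic to $\{\Phi\,Q:\deg(\Phi Q)\leq \N-1-c_\infty\}$, of dimension $\max(0,\N-C)$ with $C:=\deg\Phi+c_\infty$. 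Comparing $C$ with $\deg\calD_{\ul Q}=\N$ term by term gives $\N-C=\sum\min(a,b)+\sum\lfloor a/2\rfloor+\min(a_\infty,b_\infty)$, and each summand is precisely the number of disjoint $\{P,P^*\}$ pairings carried by that fibre (two copies of a branch point forming a self-pair). Hence $\N-C=s\geq 0$, so $i(\calD_{\ul Q})=\dim\calL^1(\calD_{\ul Q})=s$; the equivalence $1\leq i(\calD_{\ul Q})=s$ and the bound $s\leq \N/2$ follow immediately, the latter because the $s$ pairings consume $2s$ of the $\N$ points of $\ul Q$.

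The delicate point is the local analysis of Step~1: one must treat the branch points carefully (the factor $2$ produces the $\lceil a/2\rceil$, equivalently the $\lfloor a/2\rfloor$ saving) and handle the two points over $\infty$, since these are exactly the places where the naive ``one condition per point'' count breaks down. Once the vanishing orders are pinned down, the dimension count is a matter of collecting terms, and no genericity or independence hypothesis on $\ul Q$ is required.
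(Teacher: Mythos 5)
Your proof is correct. Note that the paper itself offers no proof of this statement: it appears as Theorem \ref{thm3} in Appendix \ref{AL.sA}, recalled as a ``fundamental fact'' with the reader referred to standard references on hyperelliptic curves, so there is nothing in the text to compare against. Your argument is the standard one and all the delicate points check out: since $\calD_{\ul Q}\geq 0$, every $\omega\in\calL^1(\calD_{\ul Q})$ is holomorphic and hence of the form $P(z)\,dz/y$ with $\deg P\leq\N-1$ by \eqref{b24}; the local orders you compute (order $\operatorname{ord}_{z_0}P$ on each sheet over a regular finite value, $2\operatorname{ord}_{E_{m_0}}P$ at a branch point because $z-E_{m_0}=\zeta^2$, and $\N-1-\deg P$ at each of $\Pinfpm$ because $\deg R_{2\N+2}=2\N+2$ is even) are right; and the fibrewise bookkeeping $\N-C=\sum\min(a,b)+\sum\lfloor a/2\rfloor+\min(a_\infty,b_\infty)=s$ correctly identifies the dimension with the number of disjoint involution pairings, including the convention that $2\lfloor a/2\rfloor$ copies of a branch point account for $\lfloor a/2\rfloor$ self-pairings, which is exactly what the parenthetical remark in the theorem intends. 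The bound $s\leq\N/2$ is then immediate. One could alternatively deduce the result from Riemann--Roch plus the classification of special divisors on hyperelliptic curves, but your direct computation is self-contained and arguably cleaner.
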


\section{Some Interpolation Formulas} \lb{AL.sB}
\renewcommand{\theequation}{B.\arabic{equation}}
\renewcommand{\thetheorem}{B.\arabic{theorem}}
\setcounter{theorem}{0}
\setcounter{equation}{0}

In this appendix we recall a useful interpolation formula which goes
beyond the standard Lagrange interpolation formula for polynomials in the
sense that the zeros of the interpolating polynomial need not be distinct.

\begin{lemma}[\cite{GesztesyHoldenTeschl:2007}] \lb{ALlB.1} 
Let $p\in\bbN$ and $S_{p-1}$ be a polynomial of degree $p-1$. In
addition, let $F_p$ be a monic polynomial of degree $p$ of the form
\begin{equation}
F_p (z)=\prod_{k=1}^q (z-\mu_k)^{p_k}, \quad p_j \in\bbN, \; 
\mu_j\in\bbC, \; j=1,\dots,q, \quad \sum_{k=1}^q p_k =p.   \lb{B.1}
\end{equation}
Then,
\begin{align}
S_{p-1}(z)&=F_p (z)\sum_{k=1}^q \sum_{\ell=0}^{p_k -1} 
\f{S_{p-1}^{(\ell)}(\mu_k)}{\ell! (p_k-\ell-1)!}  \lb{B.2} \\
& \quad \times \Bigg(\f{d^{p_k-\ell-1}}{d \zeta^{p_k-\ell-1}}
\Bigg((z-\zeta)^{-1}\prod_{k'=1, \, k'\neq k}^q 
(\zeta-\mu_{k'})^{-p_{k'}}\Bigg)
\Bigg)\Bigg|_{\zeta=\mu_k}, \quad z\in\bbC.   \no
\end{align}
In particular, $S_{p-1}$ is uniquely determined by prescribing the $p$
values
\begin{equation}
S_{p-1}(\mu_k), S_{p-1}'(\mu_k),\dots,S_{p-1}^{(p_{k}-1)}(\mu_k), \quad 
k=1,\dots,q,   \lb{B.3}
\end{equation}
at the given points $\mu_1.\dots,\mu_q$. \\
Conversely, prescribing the $p$ complex numbers
\begin{equation}
\alpha_k^{(0)}, \alpha_k^{(1)},\dots,\alpha_k^{(p_k-1)}, \quad
k=1,\dots,q, 
\end{equation}
there exists a unique polynomial $T_{p-1}$ of degree $p-1$,
\begin{align}
T_{p-1}(z)&=F_p (z)\sum_{k=1}^q \sum_{\ell=0}^{p_k -1} 
\f{\alpha_k^{(\ell)}}{\ell! (p_k-\ell-1)!}  \lb{B.3a} \\
& \quad \times \Bigg(\f{d^{p_k-\ell-1}}{d \zeta^{p_k-\ell-1}}
\Bigg((z-\zeta)^{-1}\prod_{k'=1, \, k'\neq k}^q 
(\zeta-\mu_{k'})^{-p_{k'}}\Bigg)
\Bigg)\Bigg|_{\zeta=\mu_k}, \quad z\in\bbC,    \no 
\end{align}
such that
\begin{equation}
T_{p-1}(\mu_k)=\alpha_k^{(0)}, \, T_{p-1}'(\mu_k)=\alpha_k^{(1)},
\dots,\, T_{p-1}^{(p_{k}-1)}(\mu_k)=\alpha_k^{(p_k-1)}, \quad 
k=1,\dots,q.   \lb{B.3b}
\end{equation}
\end{lemma}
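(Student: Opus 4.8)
The plan is to establish the interpolation identity \eqref{B.2} together with the uniqueness statement in \eqref{B.3}, and then obtain the converse \eqref{B.3a}--\eqref{B.3b} almost for free from a dimension count.

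First I would settle uniqueness, since it also underlies existence in the converse. Consider the linear evaluation map $E$ sending a polynomial $S$ of degree at most $p-1$ to its Hermite data $\big(S^{(\ell)}(\mu_k)\big)_{0\le \ell\le p_k-1,\,1\le k\le q}$. The domain (polynomials of degree $\le p-1$) and the target $\bbC^{p}$ (using $\sum_{k=1}^q p_k=p$) both have dimension $p$. If $S$ lies in the kernel of $E$, then each $\mu_k$ is a zero of $S$ of multiplicity at least $p_k$, so $F_p=\prod_{k=1}^q(z-\mu_k)^{p_k}$ divides $S$; since $\deg F_p=p>p-1\ge\deg S$, this forces $S\equiv 0$. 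Hence $E$ is injective, and being a linear map between spaces of equal finite dimension, it is an isomorphism. This proves the uniqueness assertion in \eqref{B.3} and, simultaneously, the existence and uniqueness of $T_{p-1}$ in the converse.

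For the explicit formula I would use the partial fraction expansion of the proper rational function $S_{p-1}/F_p$. Write $g_k(\zeta)=\prod_{k'\ne k}(\zeta-\mu_{k'})^{-p_{k'}}$, so that $g_k$ is holomorphic near $\mu_k$ and $(z-\mu_k)^{p_k}/F_p(z)=g_k(z)$. Since $\deg S_{p-1}<\deg F_p$,
\[
\frac{S_{p-1}(z)}{F_p(z)}=\sum_{k=1}^q\sum_{j=1}^{p_k}\frac{A_{k,j}}{(z-\mu_k)^j},\qquad A_{k,j}=\frac{1}{(p_k-j)!}\,\frac{d^{p_k-j}}{d\zeta^{p_k-j}}\big(S_{p-1}(\zeta)\,g_k(\zeta)\big)\Big|_{\zeta=\mu_k}.
\]
Multiplying through by $F_p(z)$ gives $S_{p-1}(z)=\sum_{k,j}A_{k,j}\,F_p(z)(z-\mu_k)^{-j}$. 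I would then expand $A_{k,j}$ by the Leibniz rule, isolating the factors $S_{p-1}^{(\ell)}(\mu_k)$ produced by the derivatives landing on $S_{p-1}$ and collecting the remaining derivatives acting on $g_k$ together with the $(z-\zeta)^{-1}$ coming from $F_p(z)(z-\mu_k)^{-j}$; reindexing so that $\ell$ records the order of differentiation hitting $S_{p-1}$ and $p_k-\ell-1$ that hitting $(z-\zeta)^{-1}g_k(\zeta)$ reproduces exactly the bracketed $\zeta$-derivative in \eqref{B.2}, with the factorial weights $\ell!(p_k-\ell-1)!$ accounting for the binomial coefficients. An equivalent and perhaps cleaner route is to define $P(z)$ as the right-hand side of \eqref{B.2}, observe that the factor $F_p(z)$ cancels the apparent poles at the $\mu_k$ so that $P$ is genuinely a polynomial of degree $\le p-1$, and verify directly that $P^{(\ell)}(\mu_k)=S_{p-1}^{(\ell)}(\mu_k)$ for all admissible $\ell,k$; uniqueness from the first paragraph then forces $P=S_{p-1}$.

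Finally, the converse formula follows immediately: by the first paragraph there is a unique polynomial $T_{p-1}$ of degree $\le p-1$ with $T_{p-1}^{(\ell)}(\mu_k)=\alpha_k^{(\ell)}$, and applying the now-established identity \eqref{B.2} to $S_{p-1}=T_{p-1}$, while substituting $T_{p-1}^{(\ell)}(\mu_k)=\alpha_k^{(\ell)}$, yields \eqref{B.3a} and \eqref{B.3b}. I expect the main obstacle to be the bookkeeping in the middle step, namely either the Leibniz expansion and reindexing of the partial-fraction coefficients into the compact $\zeta$-derivative form of \eqref{B.2}, or the direct differentiation check that the candidate $P$ reproduces all Hermite data. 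Both are elementary but demand careful tracking of the binomial and factorial factors.
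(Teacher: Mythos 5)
Your proposal is correct. Note that the paper itself gives no proof of Lemma \ref{ALlB.1}; it is quoted verbatim from \cite{GesztesyHoldenTeschl:2007}, so there is no in-text argument to compare against. Your route --- uniqueness via the kernel of the Hermite evaluation map (any $S$ in the kernel is divisible by $F_p$, hence vanishes by degree count), the explicit formula via the partial fraction expansion of $S_{p-1}/F_p$ with coefficients $A_{k,j}=\tfrac{1}{(p_k-j)!}\big(S_{p-1}\,g_k\big)^{(p_k-j)}(\mu_k)$ followed by a Leibniz reindexing, and the converse as an immediate corollary of the isomorphism --- is the standard argument for this Hermite-type interpolation formula, and the bookkeeping you describe does close up: expanding the $\zeta$-derivative in \eqref{B.2} by Leibniz and substituting $j=i+1$ reproduces exactly the principal parts $\sum_{j}A_{k,j}(z-\mu_k)^{-j}$, so the two expressions agree term by term.
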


We briefly mention two special cases of \eqref{B.2}. First, assume the
generic case where all zeros of $F_p$ are distinct, that is,
\begin{equation}
q=p, \quad p_k=1, \quad \mu_k \neq \mu_{k'} \, 
\text{ for } \, k\neq k', \; k,k'=1,\dots,p. \lb{B.10}
\end{equation}
In this case \eqref{B.2} reduces to the classical Lagrange interpolation
formula
\begin{equation}
S_{p-1}(z)=F_p (z)\sum_{k=1}^p 
\f{S_{p-1}(\mu_k)}{((d
F_p (\zeta)/d\zeta)|_{\zeta=\mu_k})(z-\mu_k)}, \quad z\in\bbC.  \lb{B.11}
\end{equation}
Second, we consider the other extreme case where all zeros of $F_p$
coincide, that is,
\begin{equation}
q=1, \quad p_1=p, \quad F_p (z)=(z-\mu_1)^p, \quad z\in\bbC.  \lb{B.12}
\end{equation}
In this case \eqref{B.2} reduces of course to the Taylor expansion of
$S_{p-1}$ around $z=\mu_1$, 
\begin{equation}
S_{p-1}(z)=\sum_{\ell=0}^{p-1} \f{S_{p-1}^{(\ell)}(\mu_1)}{\ell!}
(z-\mu_1)^\ell, \quad z\in\bbC.   \lb{B.13}
\end{equation}

\section{Asymptotic Spectral Parameter Expansions} \lb{ALApp.high}
\renewcommand{\theequation}{C.\arabic{equation}}
\renewcommand{\thetheorem}{C.\arabic{theorem}}
\setcounter{theorem}{0}
\setcounter{equation}{0}

In this appendix we consider asymptotic spectral parameter expansions of
$F_{\ul p}/y$, $G_{\ul p}/y$, and $H_{\ul p}/y$, the resulting recursion relations for
the homogeneous coefficients  $\hat f_\ell$, $\hat g_\ell$, and $\hat
h_\ell$, their connection with the nonhomogeneous coefficients $f_\ell$,
$g_\ell$, and $h_\ell$, and the connection between $c_{\ell,\pm}$ and
$c_{\ell}(\ul E^{\pm 1})$ (cf.\ \eqref{ALB2.26h}). For detailed proofs of the material in this section we refer to \cite{GesztesyHolden:2005},  
\cite{GesztesyHoldenMichorTeschl:2007}. We will employ the notation
\begin{equation}
 \ul E^{\pm 1}=\big(E_0^{\pm 1},\dots,E_{2p+1}^{\pm 1}\big).   \lb{ALEpm}
\end{equation}

We start with the following elementary result (a consequence of the  binomial
expansion) assuming $\eta\in\bbC$ such that
$|\eta|<\min\{|E_0|^{-1},\dots, |E_{2p+1}|^{-1}\}$:
\begin{equation}
\left(\prod_{m=0}^{2p+1} \big(1-{E_m}{\eta}\big)
\right)^{1/2}=\sum_{k=0}^{\infty}c_k(\ul
E)\eta^{k}, \lb{ALB2.26g}
\end{equation}
where
\begin{align}
c_0(\ul E)&=1,\no \\
c_k(\ul E)&=\!\!\!\!\!\!\!\!\sum_{\substack{j_0,\dots,j_{2p+1}=0\\
       j_0+\cdots+j_{2p+1}=k}}^{k}\!\!
\f{(2j_0)!\cdots(2j_{2p+1})!\, E_0^{j_0}\cdots E_{2p+1}^{j_{2p+1}}}
{2^{2k} (j_0!)^2\cdots (j_{2p+1}!)^2 (2j_0-1)\cdots(2j_{2p+1}-1)},
\quad k\in\bbN.  \label{ALB2.26h}
\end{align}
The first few coefficients explicitly are given by
\begin{align}
c_0(\ul E)=1, \;
c_1(\ul E)=-\f12\sum_{m=0}^{2p+1} E_m, \;
c_2(\ul E)=\f14\sum_{\substack{m_1,m_2=0\\ m_1< m_2}}^{2p+1}
E_{m_1} E_{m_2}-\f18 \sum_{m=0}^{2p+1} E_m^2,
\quad \text{etc.} \lb{ALB2.26i}
\end{align}

Next we turn to asymptotic expansions. We recall the convention
$y(P)= \mp \zeta^{-p-1}+\Oh(\zeta^{-p})$ near $\Pinfpm$
(where $\zeta=1/z$) and $y(P) = \pm (c_{0,-}/c_{0,+})+\Oh(\zeta)$ near $\Pzpm$ 
(where $\zeta=z$). 

\begin{theorem} [\cite{GesztesyHoldenMichorTeschl:2007}] \lb{tALB.2}
Assume \eqref{ALneq 0,1}, $\sAL_{\ul p}(\alpha,\beta)=0$, and suppose
$P=(z,y)\in\calK_p\setminus\{\Pinfp,\Pinfm\}$. Then $z^{p_-} F_{\ul p}/y$,
$z^{p_-} G_{\ul p}/y$, and $z^{p_-} H_{\ul p}/y$ have the following convergent expansions 
as $P\to \Pinfpm$, respectively, $P\to\Pzpm$,  
\begin{align} 
\frac{z^{p_-}}{c_{0,+}} \frac{F_{\ul p}(z)}{y} &= \begin{cases} 
\mp \sum_{\ell=0}^\infty \hat f_{\ell,+} \zeta^{\ell+1},  &
P\to \Pinfpm, \qquad \zeta=1/z, \\
\pm \sum_{\ell=0}^\infty \hat f_{\ell,-} \zeta^\ell,  &
P\to \Pzpm, \qquad \zeta=z,
\end{cases}  \label{ALF/y 0} \\
\frac{z^{p_-}}{c_{0,+}} \frac{G_{\ul p}(z)}{y} &= \begin{cases}
\mp \sum_{\ell=0}^\infty \hat g_{\ell,+} \zeta^\ell,  &
P\to \Pinfpm, \qquad \zeta=1/z, \\
\pm \sum_{\ell=0}^\infty \hat g_{\ell,-} \zeta^\ell,  &
P\to \Pzpm, \qquad \zeta=z, 
\end{cases}    \label{ALG/y 0} \\
\frac{z^{p_-}}{c_{0,+}} \frac{H_{\ul p}(z)}{y} &= \begin{cases}
\mp \sum_{\ell=0}^\infty \hat h_{\ell,+} \zeta^\ell,  &
P\to \Pinfpm, \qquad \zeta=1/z, \\
\pm \sum_{\ell=0}^\infty \hat h_{\ell,-} \zeta^{\ell+1},  &
P\to \Pzpm, \qquad \zeta=z,
\end{cases}    \label{ALH/y 0} 
\end{align}
where $\zeta=1/z$ $($resp., $\zeta=z$$)$ is the local coordinate near $\Pinfpm$ 
$($resp., $\Pzpm$$)$ and $\hat f_{\ell,\pm}$, $\hat g_{\ell,\pm}$, and $\hat h_{\ell,\pm}$ are the homogeneous versions of the coefficients $f_{\ell,\pm}$, $g_{\ell,\pm}$, 
and $h_{\ell,\pm}$ introduced in \eqref{AL2.04a}--\eqref{AL2.04c}. 

Moreover, the $E_m$-dependent summation constants
$c_{\ell,\pm}$, $\ell=0,\dots, p_{\pm}$, in $F_{\ul p}$, $G_{\ul p}$, and 
$H_{\ul p}$ are given by 
\begin{equation}
c_{\ell,\pm}= c_{0,\pm} c_\ell(\ul E^{\pm 1}), \quad \ell=0,\dots,p_{\pm}.   \lb{ALBc}
\end{equation}
\end{theorem}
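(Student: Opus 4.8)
The plan is to reduce all three expansions to the single scalar quantity $w:=(c_{0,+}/2)z^{-p_-}y$, which satisfies $w^2=R_{\ul p}$ by \eqref{ALcalK_p}, together with the trivial identities
\[
\frac{z^{p_-}}{c_{0,+}}\frac{F_{\ul p}}{y}=\frac{F_{\ul p}}{2w},\qquad
\frac{z^{p_-}}{c_{0,+}}\frac{G_{\ul p}}{y}=\frac{G_{\ul p}}{2w},\qquad
\frac{z^{p_-}}{c_{0,+}}\frac{H_{\ul p}}{y}=\frac{H_{\ul p}}{2w}.
\]
First I would expand $w$. Inserting $\prod_{m=0}^{2p+1}(z-E_m)$ into the binomial series \eqref{ALB2.26g} gives, near $\Pinfpm$ (with $\zeta=1/z$) and near $\Pzpm$ (with $\zeta=z$, using $\prod_m E_m=c_{0,-}^2/c_{0,+}^2$ from \eqref{AL3.10}),
\[
w=\pm\tfrac12 c_{0,+}\,\zeta^{-p_+}\sum_{k=0}^\infty c_k(\ul E)\zeta^k \quad (P\to\Pinfpm),\qquad
w=\pm\tfrac12 c_{0,-}\,\zeta^{-p_-}\sum_{k=0}^\infty c_k(\ul E^{-1})\zeta^k \quad (P\to\Pzpm),
\]
the two signs corresponding to the two sheets and accounting for the global $\mp$ in the statement. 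Since the quotients are meromorphic on $\calK_p$ and, by the degree counts in \eqref{ALF_p}--\eqref{ALH_p}, holomorphic at $\Pinfpm,\Pzpm$, convergence of the local expansions is automatic; only the coefficients need to be identified.

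The key step is to convert the stationary zero-curvature relations into the homogeneous recursion. Because $z$ and $y$ are independent of $n$, dividing \eqref{AL1,1}, \eqref{AL1,2}, \eqref{AL2,1} (with $K_{\ul p}=G_{\ul p}$, cf.\ \eqref{ALK=G}) by $(c_{0,+}/z^{p_-})y$ reproduces the identical relations for $\mathsf F:=F_{\ul p}/2w$, $\mathsf G:=G_{\ul p}/2w$, $\mathsf H:=H_{\ul p}/2w$. Near $\Pinfpm$ I would substitute $z=1/\zeta$ and the Ansätze $\mathsf F=\mp\sum_{\ell\ge0}a_\ell\zeta^{\ell+1}$, $\mathsf G=\mp\sum_{\ell\ge0}b_\ell\zeta^\ell$, $\mathsf H=\mp\sum_{\ell\ge0}c_\ell\zeta^\ell$ and compare coefficients of $\zeta^\ell$, obtaining
\[
b_{\ell+1}-b_{\ell+1}^-=\beta a_\ell+\alpha c_\ell^-,\qquad
a_\ell^-=a_{\ell-1}-\alpha(b_\ell+b_\ell^-),\qquad
c_{\ell+1}=c_\ell^-+\beta(b_{\ell+1}+b_{\ell+1}^-),
\]
which are exactly the homogeneous recursions \eqref{ALg_l+}--\eqref{ALh_l+} with $c_{0,+}=1$. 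For the seeds, the leading terms of the quotients together with \eqref{ALmu(n)} and \eqref{AL0+} give $a_0=-\alpha^+=\hat f_{0,+}$ and $c_0=\beta=\hat h_{0,+}$, while $G_{\ul p}^2-F_{\ul p}H_{\ul p}=R_{\ul p}$ forces the top coefficient $g_{p_+,+}=\pm c_{0,+}/2$, hence $b_0=\tfrac12=\hat g_{0,+}$ on the relevant sheet.

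The hard part is that the $g$-recursion is only first order in $n$, so it determines each $b_{\ell+1}$ merely up to an $n$-independent summation constant; a priori the expansion coefficients could differ from the homogeneous ones by such a constant. I would remove this freedom by using $G_{\ul p}^2-F_{\ul p}H_{\ul p}=R_{\ul p}$ again: dividing by $4w^2=4R_{\ul p}$ yields the scalar identity $\mathsf G^2-\mathsf F\mathsf H=\tfrac14$, whose coefficient of $\zeta^{\ell+1}$ reads $2b_0b_{\ell+1}+(\text{terms in }a_j,b_j,c_j,\ j\le\ell)=0$. As $b_0=\tfrac12$, this pins $b_{\ell+1}$ uniquely in terms of the already-matched lower coefficients; a short parallel induction shows the homogeneous series satisfy the same relation $\hat G_+^2-\zeta\hat F_+\hat H_+=\tfrac14$ (verify order $\zeta^0$, then propagate with the recursion), so $b_{\ell+1}=\hat g_{\ell+1,+}$ and the induction closes. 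The expansions at $\Pzpm$ follow verbatim, now with $\zeta=z$, the ``$-$'' seeds \eqref{AL0-}, and $c_k(\ul E^{-1})$ in place of $c_k(\ul E)$.

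Finally, for \eqref{ALBc} I would match the two representations of $z^{p_-}G_{\ul p}/(c_{0,+}y)$: the expansion $\mp\sum_\ell\hat g_{\ell,+}\zeta^\ell$ just proved, against the direct quotient of $G_{\ul p}=\sum_\ell g_{p_+-\ell,+}z^\ell+\cdots$ by $2w=\pm c_{0,+}\zeta^{-p_+}\sum_k c_k(\ul E)\zeta^k$. Reading off coefficients and inserting $g_{\ell,+}=\sum_k c_{\ell-k,+}\hat g_{k,+}$ from \eqref{ALhat f} (with $\hat g_{0,+}=\tfrac12$) lets one solve recursively for $c_{\ell,+}=c_{0,+}c_\ell(\ul E)$, and the same computation at $\Pzpm$ gives $c_{\ell,-}=c_{0,-}c_\ell(\ul E^{-1})$, which is \eqref{ALBc}.
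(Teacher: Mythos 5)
The paper itself offers no proof of Theorem \ref{tALB.2}; it is quoted from \cite{GesztesyHoldenMichorTeschl:2007}, so your argument has to stand on its own. Its architecture is sound and is essentially the standard one: divide the stationary zero-curvature relations \eqref{AL1,1}--\eqref{AL2,1} by the $n$-independent quantity $2w=c_{0,+}z^{-p_-}y$, observe that the resulting Laurent-series coefficients at $\Pinfpm$, $\Pzpm$ obey the recursions \eqref{ALg_l+}--\eqref{ALh_l-} with the homogeneous seeds, and use $G_{\ul p}^2-F_{\ul p}H_{\ul p}=R_{\ul p}$ to kill the summation-constant ambiguity in the first-order $g$-recursion. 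The degree counts, the seeds $a_0=-\alpha^+$, $b_0=\tfrac12$, $c_0=\beta$, and the final matching that yields \eqref{ALBc} all check out. One cosmetic slip: with the convention $y=\mp\zeta^{-p-1}(1+\Oh(\zeta))$ near $\Pinfpm$ your leading term for $w$ should carry the sign $\mp$, not $\pm$ (your $\Pzpm$ signs are right); as written, the overall sign of the expansions at $\Pinfpm$ would come out reversed.

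The one place that needs a real additional argument is the parenthetical ``verify order $\zeta^0$, then propagate with the recursion'' for the identity $\hat G_+^2-\zeta\hat F_+\hat H_+=\tfrac14$ satisfied by the homogeneous series. Propagating with the recursions only shows that each coefficient $Q_\ell=\sum_j\hat g_{j,+}\hat g_{\ell-j,+}-\sum_j\hat f_{j,+}\hat h_{\ell-1-j,+}$ is a lattice constant (the same telescoping that makes $R_{\ul p}$ in \eqref{ALR} $n$-independent); it does not by itself show $Q_\ell=0$ for $\ell\geq 1$, which is exactly what your induction step requires. You must add the normalization argument: $Q_\ell$ is a universal difference polynomial in $\alpha,\beta$ and their shifts which, being a lattice constant for arbitrary sequences, is an absolute constant, and evaluating at $\alpha=\beta=0$ (where all homogeneous coefficients of positive index vanish, by \eqref{AL2.04a}--\eqref{AL2.04c} and the vanishing of all summation constants) gives $Q_\ell=0$ for $\ell\geq1$. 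With that sentence supplied, and the sign fixed, the proof is complete.
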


\medskip

\noindent {\bf Acknowledgments.} F.G., J.M., and G.T.
gratefully acknowledge the extraordinary hospitality of the Department of
Mathematical Sciences of the Norwegian University of Science and
Technology, Trondheim, during extended stays in the summer of 2004--2006,
where parts of this paper were written.


\begin{thebibliography}{10}


\bibitem{Ablowitz:1977}
M.~J.~Ablowitz.
\newblock Nonlinear evolution equations -- continuous and discrete.
\newblock {\em SIAM Rev.} 19:663--684, 1977.

\bibitem{AblowitzClarkson:1991}
M.\ J.\ Ablowitz and P.\ A.\ Clarkson.
\newblock {\em Solitons, {N}onlinear {E}volution {E}quations and {I}nverse
{S}cattering}.
\newblock Cambridge University Press, Cambridge, 1991.

\bibitem{AblowitzLadik:1975}
M.~J.~Ablowitz and J.~F.~Ladik.
\newblock Nonlinear differential-difference equations.
\newblock {\em J. Math. Phys.} 16:598--603, 1975.

\bibitem{AblowitzLadik:1976}
M.~J.~Ablowitz and J.~F.~Ladik.
\newblock Nonlinear differential-difference equations and
{F}ourier analysis.
\newblock {\em J. Math. Phys.} 17:1011--1018, 1976.

\bibitem{AblowitzLadik1:1976}
M.~J.~Ablowitz and J.~F.~Ladik.
\newblock A nonlinear difference scheme and inverse scattering.
\newblock {\em Studies Appl. Math.} 55:213--229, 1976.

\bibitem{AblowitzLadik2:1976}
M.~J.~Ablowitz and J.~F.~Ladik.
\newblock On the solution of a class of nonlinear partial difference
equations.
\newblock {\em Studies Appl. Math.} 57:1--12, 1977.

\bibitem{AblowitzPrinariTrubatch:2004}
M.~J.~Ablowitz, B.~Prinari, and A.~D.~Trubatch.
\newblock {\em {D}iscrete and {C}ontinuous {N}onlinear {S}chr\"odinger
{S}ystems}.
\newblock London Mathematical Society Lecture Note Series, Vol.\ 302,
\newblock Cambridge University Press, Cambridge, 2004.

\bibitem{AbrahamMarsdenRatiu:1988} 
R.\ Abraham, J.\ E.\ Marsden, and T.\ Ratiu. 
\newblock {\em Manifolds, Tensor Analysis, and Applications}.
\newblock 2nd ed., Springer, New York, 1988.

\bibitem{AhmadChowdhury:1987}
S.~Ahmad and A.~Roy Chowdhury.
\newblock On the quasi-periodic solutions to
the discrete non-linear {S}chr\"odinger equation.
\newblock {\em J. Phys. A} 20:293--303, 1987.

\bibitem{AhmadChowdhury:1987a}
S.~Ahmad and A.~Roy Chowdhury.
\newblock The quasiperiodic solutions to
the discrete nonlinear {S}chr\"odinger equation.
\newblock {\em J. Math. Phys.} 28:134--137, 1987. 

\bibitem{AmmarGragg:1994}
G.~S. Ammar and W.~B.~Gragg.
\newblock Schur flows for orthogonal {H}essenberg matrices. 
\newblock {\em Hamiltonian and {G}radient {F}lows, {A}lgorithms and {C}ontrol}, 
A.\ Bloch (ed.), 
\newblock Fields Inst. Commun., Vol.\ 3, 
\newblock Amer. Math. Soc., Providence, RI, 1994, pp.\ 27--34.

\bibitem{Arnold:1989} 
V.\ I.\ Arnold. 
\newblock {\em {M}athematical {M}ethods of {C}lassical {M}echanics}. 
\newblock 2nd ed., Springer, New York, 1989.

\bibitem{BelokolosBobenkoEnolskiiItsMatveev:1994}
E.~D.~Belokolos, A.~I.~Bobenko, V.~Z.~Enol'skii, A.~R.~Its,
and V.~B.~Matveev.
\newblock {\em Algebro-{G}eometric {A}pproach to {N}onlinear
{I}ntegrable {E}quations}.
\newblock Springer, Berlin, 1994.

\bibitem{BogolyubovPrikarpatskiiSamoilenko:1981}
N.~N.~Bogolyubov, A.~K.~Prikarpatskii, and V.~G.~Samoilenko.
\newblock Discrete periodic problem for the modified nonlinear Korteweg--de
Vries equation.
\newblock {\em Sov. Phys. Dokl.} 26:490--492, 1981.

\bibitem{BogolyubovPrikarpatskii:1982}
N.~N.~Bogolyubov and A.~K.~Prikarpatskii.
\newblock The inverse periodic problem for a discrete approximation
of a nonlinear {S}chr\"odinger equation.
\newblock {\em Sov. Phys. Dokl.} 27:113--116, 1982.

\bibitem{BullaGesztesyHoldenTeschl:1997}
W.~Bulla, F.~Gesztesy, H.~Holden, and G.~Teschl.
\newblock Algebro-geometric quasi-periodic finite-gap solutions of
the {T}oda and {K}ac-van {M}oerbeke hierarchy.
\newblock {\em Mem. Amer. Math. Soc.} no.\ 641, 135:1--79, 1998.

\bibitem{Common:1992}
A.~K.~Common. 
\newblock A solution of the initial value problem for
half-infinite integrable lattice systems. 
\newblock {\em Inverse Probl.} 8:393--408, 1992.     

\bibitem{CommonHafez:1990} 
A.~K.~Common and S.~T.~Hafez. 
\newblock Continued-fraction solutions to the {R}iccati equation and
              integrable lattice systems. 
\newblock {\em J. Phys. A} 23:455--466, 1990. 

\bibitem{Deift:2007} 
P.\ Deift.
\newblock {R}iemann--{H}ilbert methods in the theory of orthogonal polynomials. 
\newblock {\em {S}pectral {T}heory and {M}athematical {P}hysics: {A} 
{F}estschrift in {H}onor of {B}arry {S}imon's 60th {B}irthday. {E}rgodic {S}chr\"odinger 
{O}perators, {S}ingular {S}pectrum, {O}rthogonal {P}olynomials, and {I}nverse {S}pectral {T}heory}, F.\ Gesztesy, P.\ Deift, C.\ Galvez, P.\ Perry, and W.\ Schlag (eds.), Proceedings of Symposia in Pure Mathematics, Vol.\ 76/2, Amer. Math. Soc., Providence, RI, 2007, pp.\ 715--740. 

\bibitem{FarkasKra:1992}
H.~M.~Farkas and I.~Kra.
\newblock {\em Riemann {S}urfaces}.
\newblock Springer, New {Y}ork, 2nd ed., 1992.

\bibitem{Fay:1973}
J.~Fay.
\newblock {\em {T}heta {F}unctions on {R}iemann {S}urfaces}.
\newblock Lecture Notes in Mathematics, Vol.\ 352, Springer, New York, 1973.

\bibitem{FaybusovichGekhtman:1999} 
L.~Faybusovich and M.~Gekhtman. 
\newblock {O}n {S}chur flows. 
\newblock {\em J. Phys. A} 32:4671--4680, 1999.

\bibitem{FaybusovichGekhtman:2000} 
L.~Faybusovich and M.~Gekhtman. 
\newblock {E}lementary {T}oda orbits and integrable lattices. 
\newblock {\em Inverse Probl.} 41:2905--2921, 2000.

\bibitem{Gekhtman:1993} 
M.~Gekhtman. 
\newblock Non-abelian nonlinear lattice equations on finite interval. 
\newblock {\em J. Phys. A} 26:6303--6317, 1993. 

\bibitem{GengDaiCao:2003}
X.~Geng, H.~H.~Dai, and C.~Cao.
\newblock Algebro-geometric constructions of the discrete
Ablowitz--Ladik flows and applications.
\newblock {\em J. Math. Phys.} 44:4573--4588, 2003.

\bibitem{GengDaiZhu:2007}
X.~Geng, H.~H.~Dai, and J.~Zhu.
\newblock Decompositions of the discrete Ablowitz--Ladik hierarchy.
\newblock {\em Stud. Appl. Math.} 118:281--312, 2007.

\bibitem{GeronimoGesztesyHolden:2004}
J.~S.~Geronimo, F.~Gesztesy, and H.~Holden.
\newblock Algebro-geometric solutions of the Baxter--Szeg\H o difference
equation.
\newblock {\em Comm. Math. Phys.} 258:149--177, 2005.

\bibitem{GesztesyHolden:2003}
F.~Gesztesy and H.~Holden.
\newblock {\em {S}oliton {E}quations and {T}heir {A}lgebro-{G}eometric
{S}olutions. Volume {I}: $(1+1)$-Dimensional Continuous Models.}
\newblock Cambridge Studies in Advanced Mathematics, Vol.\ 79, Cambridge
University Press, Cambridge, 2003.

\bibitem{GesztesyHolden:2005}
F.~Gesztesy, H.~Holden, J.~Michor, and G.~Teschl.
\newblock {\em {S}oliton {E}quations and {T}heir
         {A}lgebro-{G}eometric {S}olutions. Volume {II}: $(1+1)$-Dimensional
Discrete Models.}
\newblock Cambridge Studies in Advanced Mathematics, Vol.\ 114, Cambridge
University Press, Cambridge, 2008.

\bibitem{GesztesyHoldenTeschl:2007}
F.~Gesztesy, H.~Holden, and G.~Teschl.
\newblock The algebro-geometric Toda hierarchy initial value problem 
for complex-valued initial data. 
\newblock {\em Rev. Mat. Iberoamericana}, {\bf 24}, 117--182 (2008).

\bibitem{GesztesyHoldenMichorTeschl:2007a}
F.~Gesztesy, H.~Holden, J.~Michor, and G.~Teschl.
\newblock The Ablowitz--Ladik hierarchy revisited.
\newblock In {\it  Methods of Spectral Analysis in Mathematical Physics}, Conference on Operator Theory, Analysis and Mathematical Physics (OTAMP) 2006, Lund, Sweden, 
J.\ Janas, P.\ Kurasov, A.\ Laptev, S.\ Naboko, and G.\ Stolz (eds.), Operator Theory: Advances and Applications, Vol.\ {\bf 186}, Birkh\"auser, Basel, 2008, pp.\ 139--190.

\bibitem{GesztesyHoldenMichorTeschl:2007}
F.~Gesztesy, H.~Holden, J.~Michor, and G.~Teschl.
\newblock Algebro-geometric finite-band solutions of the
Ablowitz--Ladik hierarchy.
\newblock {\em Int. Math. Res. Notices}, 2007:1--55, rnm082.

\bibitem{GesztesyHoldenMichorTeschl:2007b}
F.~Gesztesy, H.~Holden, J.~Michor, and G.~Teschl.
\newblock Local conservation laws and the
Hamiltonian formalism for the Ablowitz--Ladik hierarchy, 
Stud. Appl. Math. {\bf 120}, 361--423 (2008).

\bibitem{Golinskii:2006} 
L.~Golinskii.
\newblock Schur flows and orthogonal polynomials on the unit circle. 
\newblock {\em Sbornik Math.} 197:1145--1165, 2006. 

\bibitem{KatokHasselblatt:1995} 
A.\ Katok and B.\ Hasselblatt. 
\newblock {\em Introduction to the Modern Theory of Dynamical Systems}. 
\newblock Cambridge Univ. Press, Cambridge, 1995.

\bibitem{KillipNenciu:2006}
R.\ Killip and I.\ Nenciu.
\newblock {CMV}: {T}he unitary analogue of {J}acobi matrices.
\newblock {\em Comm. Pure Appl. Math.} 60:1148--1188, 2007.

\bibitem{Li:2005}
L.-C.\ Li.
\newblock Some remarks on {CMV} matrices and dressing orbits.
\newblock {\em Int. Math. Res. Notices} 2005, No.\ 40, 2437--2446.

\bibitem{Markushevich:1985} 
A.\ I.\ Markushevich.
\newblock {\em Theory of Functions of a Complex Variable}. 
\newblock 2nd. ed., Chelsea, New York, 1985.

\bibitem{Miller:1995}
P.\ D.\ Miller.
\newblock {M}acroscopic behavior in the {A}blowitz--{L}adik equations.
\newblock Nonlinear Evolution Equations \& Dynamical Systems,
V.\ G.\ Makhankov, A.\ R.\ Bishop, and D.\ D.\ Holm (eds.), World Scientific,
Singapore, 1995, pp.\ 158--167.

\bibitem{MillerErcolaniKricheverLevermore:1995}
P.~D.~Miller, N.~M.~Ercolani, I.~M.~Krichever, and
C.~D.~Levermore.
\newblock {F}inite genus solutions to the {A}blowitz--{L}adik equations.
\newblock {\em Comm. Pure Appl.  Math.} 48:1369--1440, 1995.

\bibitem{MukaihiraNakamura:2002}
A.~Mukaihira and Y.~Nakamura.
\newblock {S}chur flow for orthogonal polynomials on the unit circle and its integrable discretization. 
\newblock {\em J. Comput. Appl. Math.} 139:75--94, 2002.

\bibitem{Mumford:1984}
D.~Mumford.
\newblock {\em {T}ata {L}ectures on {T}heta {II}}.
\newblock Birkh\" auser, Boston, 1984.

\bibitem{Nenciu:2005a}
I.~Nenciu.
\newblock {L}ax pairs for the {A}blowitz--{L}adik system via
orthogonal polynomials on the unit circle.
\newblock {\em Int. Math. Res. Notices} 11:647--686, 2005.

\bibitem{Nenciu:2006}
I.~Nenciu.
\newblock {CMV} matrices in random matrix theory and integrable systems:
a survey.
\newblock {\em J. Phys. A} 39:8811--8822, 2006.

\bibitem{Schilling:1989}
R.~J.~Schilling.
\newblock A systematic approach to the soliton equations of a
discrete eigenvalue problem.
\newblock {\em J. Math. Phys.} 30:1487--1501, 1989.

\bibitem{PasturFigotin:1992} 
L.\ Pastur and A.\ Figotin.
\newblock {\em Spectra of Random and Almost-Periodic Operators}.
\newblock Springer, Berlin, 1992.

\bibitem{Simon:2005}
B.~Simon.
\newblock {\em Orthogonal Polynomials on the Unit Circle, Part 1:
Classical Theory, Part 2: Spectral Theory}, AMS Colloquium Publication
Series, Vol.~54, Amer. Math. Soc., Providence, Providence, R.I., 2005.

\bibitem{Simon:2006}
B.\ Simon.
\newblock {CMV} matrices: {F}ive years after.
\newblock {\em J. Comp. Appl. Math.} 208:120--154, 2007.

\bibitem{Simon:2007}   
B.~Simon.   
\newblock Zeros of {OPUC} and long time asymptotics of {S}chur and related flows. 
\newblock {\em Inverse Probl. Imaging} 1:189--215, 2007.  

\bibitem{Suris:2003}
Yu.\ B.\ Suris.
\newblock {\em The {P}roblem of {I}ntegrable {D}iscretization:
{H}amiltonian {A}pproach}.
\newblock Birkh\"auser, Basel, 2003.

\bibitem{Teschl:2000}
G.~Teschl.
\newblock{\em {J}acobi {O}perators and {C}ompletely {I}ntegrable
{N}onlinear {L}attices}.
\newblock {Math. Surveys Monographs},
\newblock Vol.\ 72, Amer. Math. Soc., Providence, R.I., 2000.

\bibitem{Vaninsky:2001}
K.\ L.\ Vaninsky.
\newblock {A}n additional {G}ibbs' state for the cubic {S}chr\"odinger
equation on the circle.
\newblock {\em Comm. Pure Appl. Math.} 54:537--582, 2001.

\bibitem{Vekslerchik:1999}
V.~E.~Vekslerchik.
\newblock Finite genus solutions for the {A}blowitz--{L}adik hierarchy.
\newblock {\em J. Phys. A} 32:4983--4994, 1999.

\bibitem{Walter:1998} 
W.\ Walter.
\newblock {\em Ordinary Differential Equations}.
\newblock Springer, New York, 1998.
\end{thebibliography}

\end{document}